\documentclass[prb, reprint, 9pt, superscriptaddress,
notitlepage, nofootinbib, longbibliography,
floatfix]{revtex4-2}

\usepackage{natbib}
\usepackage{graphicx}
\usepackage{mathtools}
\usepackage{amsfonts} 
\usepackage{amsmath} 
\usepackage{amssymb}
\usepackage{amsthm}
\usepackage{bm}
\usepackage{braket}
\usepackage{color} 
\usepackage{dsfont}
\usepackage[colorlinks]{hyperref}

\hypersetup{
    colorlinks,
    breaklinks,
    urlcolor=[rgb]{0.15,0.25,0.70},   %
    linkcolor=[rgb]{0.15,0.25,0.70},
    citecolor=[rgb]{0.15,0.25,0.70}
}

\usepackage[capitalize]{cleveref}

\usepackage{simplewick}
\usepackage{import}
\usepackage{microtype}
\usepackage{relsize}
\usepackage{enumitem}
\usepackage{stackengine}
\usepackage{wasysym}
\usepackage[dvipsnames]{xcolor}
\graphicspath{{images/}}

\usepackage{tikz}
\usepackage{pgfplots}
\pgfplotsset{compat=1.18}
\usetikzlibrary{calc}
\usetikzlibrary{arrows.meta} 
\usetikzlibrary{decorations.pathmorphing, fadings}
\usetikzlibrary{decorations.pathreplacing, intersections}

\usepackage[table]{xcolor}
\usepackage{makecell}
\usepackage{booktabs}
\usepackage{array}

\newcommand{\psym}{\mathcal{S}}
\newcommand{\tsym}{\mathcal{T}}

\usepackage{widebar}

\newcommand{\X}{\widebar{X}}

\newcommand{\kett}[1]{|#1\rangle\!\rangle}
\newcommand{\bbra}[1]{\langle\!\langle #1|}
\newcommand{\kettbbra}[2]{|#1\rangle\!\rangle\!\langle\!\langle #2|}
\newcommand{\bbrakett}[1]{\langle\!\langle #1\rangle\!\rangle}

\DeclareMathOperator{\uMPS}{uMPS}

\newcommand{\norm}[1]{\left\lVert#1\right\rVert}

\newcommand{\defeq}{\vcentcolon=}

\newcommand{\ketbra}[2]{\ket{#1}\!\bra{#2}}

\newcommand{\one}{\mathds{1}}
\newcommand{\R}{\mathbb{R}}
\newcommand{\Z}{\mathbb{Z}}
\newcommand{\N}{\mathbb{N}}
\newcommand{\C}{\mathbb{C}}

\newcommand{\Hilb}{\mathcal{H}}

\DeclareMathOperator{\Tr}{Tr}

\DeclareMathOperator{\spn}{span}

\newtheorem{theorem}{Theorem}

\newtheorem{corollary}{Corollary}
\newtheorem{lemma}{Lemma}
\newtheorem{proposition}{Proposition}

\theoremstyle{definition}

\usepackage{tcolorbox}

\makeatletter
\newsavebox{\@brx}
\newcommand{\llangle}[1][]{\savebox{\@brx}{\(\m@th{#1\langle}\)}%
  \mathopen{\copy\@brx\kern-0.5\wd\@brx\usebox{\@brx}}}
\newcommand{\rrangle}[1][]{\savebox{\@brx}{\(\m@th{#1\rangle}\)}%
  \mathclose{\copy\@brx\kern-0.5\wd\@brx\usebox{\@brx}}}
\makeatother

\newcommand{\tensorColor}{cyan!30}

\newcommand{\disctensor}[5]{
    \begin{scope}[shift={(#1,0)}]
        \draw[thick] (0, 0.3) -- (0, 0.7) node[above] {#3};
        
        \draw[thick, rounded corners=8pt] (-0.3, 0) -- (-0.8, 0) -- (-0.8, 0.7) node[above] {#2};
        
        \draw[thick, rounded corners=8pt] (0.3, 0) -- (0.8, 0) -- (0.8, 0.7) node[above] {#4};
        
        \filldraw[fill=\tensorColor, thick] (-0.3,-0.3) rectangle (0.3,0.3) node[midway] {#5};
    \end{scope}
}

\usepackage{pifont}

\newcommand{\myYes}{\textcolor{green!60!black}{\large\ding{51}}}
\newcommand{\myNo}{\textcolor{red!80!black}{\large\ding{55}}}

\usepackage{amsmath}
\usetikzlibrary{decorations.pathreplacing, arrows.meta, calc, intersections}

\definecolor{mypink}{RGB}{245, 163, 164}
\definecolor{mygreen}{RGB}{147, 204, 118}

\pgfdeclarelayer{background}
\pgfdeclarelayer{foreground}
\pgfsetlayers{background,main,foreground}

\tikzset{
    pinkdot/.style={circle, fill=mypink, inner sep=2pt},
    greendot/.style={circle, fill=mygreen, inner sep=2pt},
    cross/.style={preaction={draw, white, line width=4.5pt}},
    thickline/.style={thick, line width=1.5pt},
    reddotted/.style={very thick, red, densely dotted}
}

\def\xdistance{1.2}  %
\def\ydistance{0.8}  %

\newcommand{\drawblock}[5]{
    \begin{scope}[shift={(#4,#5)}]
        \foreach \i in {1,...,#2} {
            \coordinate (#1-T-\i) at (\i*\xdistance + #3, \ydistance);
            \coordinate (#1-B-\i) at (\i*\xdistance, 0);
            
            \begin{pgfonlayer}{foreground}
                \node[pinkdot] at (#1-T-\i) {};
                \node[greendot] at (#1-B-\i) {};
            \end{pgfonlayer}
        }
    \end{scope}
}

\newcommand{\connectIntra}[4][]{
    \draw[thickline, #1] (#2-B-#3) -- (#2-T-#4);
}

\newcommand{\connectInter}[7][]{
    \draw[thickline, #1] (#2-#3-#4) -- (#5-#6-#7);
}

\begin{document}
	
	\title{Mixed-State Long-Range Entanglement from Dimensional Constraints}
	
		\author{Leonardo A. Lessa}\email{llessa.physics@pm.me}
			\affiliation{Perimeter Institute for Theoretical Physics, Waterloo, Ontario N2L 2Y5, Canada}
			\affiliation{Department of Physics and Astronomy, University of Waterloo, Waterloo, Ontario N2L 3G1, Canada}

		\author{Tsung-Cheng Lu}\email{tclu@umd.edu}
			\affiliation{Joint Center for Quantum Information and Computer Science,
University of Maryland, College Park, Maryland 20742, USA}

	\begin{abstract}
    We present a new mechanism for long-range entanglement (LRE) in strongly symmetric many-body mixed states that does not rely on symmetry anomalies or long-range correlations. Our primary example is the maximally mixed state in the translation-invariant subspace on a one-dimensional ring. This state is LRE because translationally symmetric short-range entangled states span a subspace whose dimension grows only polynomially with system size, whereas the full translation-invariant subspace grows exponentially. We further discuss certain unconventional properties of this state, including logarithmically growing conditional mutual information, strong-to-weak spontaneous symmetry-breaking, and Rényi-index-dependent operator-space entanglement. We also construct a geometrically non-local Lindbladian to stabilize this state as the steady state. Our results identify dimensional mismatch as a novel route to LRE that is intrinsic to many-body mixed states.  
    \end{abstract}
	\maketitle
	
	{
		\hypersetup{linkcolor=black}
	}

One central goal in quantum many-body physics is the exploration of long-range entangled (LRE) quantum states, which underlie a variety of exotic phenomena such as topological orders and quantum criticality. While traditional studies have largely focused on isolated systems described by pure states, any realistic quantum system inevitably interacts with its environment and is therefore described by a mixed state. Then, a natural question arises: how should one define long-range entanglement for many-body mixed states? One commonly adopted definition is that a mixed state is LRE if it does not admit an ensemble of short-range entangled (SRE) pure states, each of which can be adiabatically connected to a product state \cite{Hastings_2011_topo}. Although this is a natural generalization from pure states to mixed states, such a definition makes certifying long-range entanglement challenging in general. In particular, since a mixed state admits an infinite number of ensemble realizations, how should one conclusively claim that the mixed state cannot admit an ensemble of SRE states? So far, there have been three mechanisms that guarantee long-range entanglement:

(i) \emph{Symmetry anomaly}: if a mixed state $\rho$ has a strong symmetry $U : G \to \mathcal{U}(\Hilb)$, that is, $U_g\rho = e^{i\theta} \rho$, then all pure states $\ket{\psi_\alpha}$ in any ensemble decomposition $\rho= \sum_\alpha p_\alpha \ketbra{\psi_\alpha }{\psi_\alpha}$ must also be symmetric with same charge: $ U_g \ket{\psi_\alpha} = e^{i \theta} \ket{\psi_\alpha}$. This is stronger than the usual \textit{weak} symmetry condition $U_g \rho = \rho U_g$, which does not guarantee all pure states in arbitrary ensembles are even symmetric. Therefore, if the strong symmetry $U$ is anomalous, which forbids any symmetric SRE pure state, then $\rho$ must be LRE. Notable examples include the 1D CZX mixed state \cite{lessa_2025_Multipartite, wang_anomaly_2024, ruiz-de-alarcon_matrix_2024a, sun_anomalous_2025, liu_establishing_2026}, the below-threshold 2d toric code \cite{grover_separability_2024,lessa_2025_higher,grover_2025_mixed_TEE}, and various intrinsically mixed-state topological orders \cite{Wang_2025_Intrinsic,Ellison_2025_mixed,Prem_2025_noisy,Ellison_2025_3d}.

(ii) \emph{Strong symmetry and long-range correlation}: if a mixed state $\rho$ with a strong symmetry $U$ can be an ensemble of $U$-symmetric SRE states, the Lieb-Robinson bound \cite{lieb_robinson_1972,hastings_LR_bound_2006} dictates that, for each SRE pure state $\ket{\psi_\alpha}$ in the ensemble, the connected correlation of charged local operators $O_i$ and $O_j^\dagger$, defined as $\langle O_iO_j ^{\dagger} \rangle_{\psi_\alpha}  -  \langle O_i \rangle_{\psi_\alpha}   \langle O_j ^{\dagger} \rangle_{\psi_\alpha}$, must decay exponentially with the separation between $i$ and $j$. Furthermore, the one-point function vanishes $\langle O_i \rangle_{\psi_\alpha}   =  \langle O_j ^{\dagger} \rangle_{\psi_\alpha} =0 $ due to the symmetry of $\ket{\psi_\alpha}$. It then follows that the two-point function w.r.t. the mixed state $\Tr [\rho O_i O_j^{\dagger}]$ also decays exponentially. Therefore, a strongly symmetric mixed state $\rho$ with long-range correlation (e.g. $\Tr[ \rho O_i O_j^{\dagger}]$ is constant) does not admit any SRE pure-state decomposition, thus implying LRE. Such constraints on mixed states was first employed in Ref.~\cite{Lu_2023_mixed_stateLRE} in the context of state preparation with measurement and feedback, and later in Refs.~\cite{grover_2024_symmetry_separable,Sahu_2024_mmis}. 

(iii) \emph{Large entanglement}: if a mixed state $\rho$ is a mixture of SRE pure states preparable by a finite-depth local unitary circuit, then the bipartite entanglement of $\rho$ is upper bounded by an area law\footnote{For instance, using entanglement of formation (Eq.~\ref{eq:eof}) as a measure, one can see that the entanglement entropy averaged over those SRE pure states provides an upper bound.}. Therefore, if the entanglement of $\rho$ scales faster than the area of a subregion, $\rho$ will necessarily be LRE. Examples include the volume-law entangled mixed states obtained by tracing out a sufficiently small subsystem from Haar random states or from highly excited eigenstates of non-integrable Hamiltonians \cite{AUBRUN_2012_partial_transpose,Aubrun_2012_semi_circle,Lakshminarayan_2012_wigner,Lu_2020_nega_transition,Shapourian_2021_negativity}.  

\begin{figure*}
    \centering
    
    \begin{minipage}[c]{0.26\textwidth}
        \centering
        \resizebox{0.95\linewidth}{!}{\begin{tikzpicture}[>=latex]

    \def\N{12}               %
    \def\R{3.0}              %
    \def\Renv{0.7*\R}
    \def\Rarrow{\R + 0.7}    %
    \def\atomradius{0.6}     %

    \def\Rcore{1.0}          %
    \colorlet{atomfill}{blue!20}
    \colorlet{envcolor}{red!30}
    \colorlet{linkcolor}{gray!70}
    \colorlet{nncolor}{blue!30!gray!85}

    \tikzset{
    atom/.style={circle, thick, ball color=blue!40, minimum size=\atomradius cm, inner sep=0pt},
    interaction/.style={thick, draw=linkcolor, decorate, decoration={snake, amplitude=0.6mm, segment length=3.5mm}},
    nn/.style={thick, draw=nncolor},
    tarrow/.style={-{Stealth[length=3mm, width=2.5mm]}, thick, draw=black!80}
    }

    \foreach \i in {1,...,\N} {
            \pgfmathsetmacro{\angle}{90 - (\i-1) * 360 / \N}
            \pgfmathsetmacro{\nexti}{int(mod(\i,\N)+1)}
            \pgfmathsetmacro{\nextangle}{90 - (\nexti-1) * 360 / \N}

            \coordinate (C\i) at (\angle:\R);
            \coordinate (Cnext) at (\nextangle:\R);

            \draw[interaction] (0,0) -- (C\i);

            \pgfmathsetmacro{\midangle}{\angle - 180/\N}
            \coordinate (C1_out) at (\angle : \R + 0.2);
            \coordinate (C1_in)  at (\angle : \R - 0.2);
            \coordinate (C2_out) at (\nextangle : \R + 0.2);
            \coordinate (C2_in)  at (\nextangle : \R - 0.2);

            \coordinate (M_curve_out) at (\midangle : \R + 0);
            \coordinate (M_curve_in)  at (\midangle : \R - 0.2);

            \coordinate (M12_out) at ($(C1_out)!0.5!(C2_out)$);
            \coordinate (M12_in)  at ($(C1_in)!0.5!(C2_in)$);
            \coordinate (Ctrl_out) at ($(M12_out)!2.0!(M_curve_out)$);
            \coordinate (Ctrl_in)  at ($(M12_in)!2.0!(M_curve_in)$);

            \filldraw[fill=nncolor, draw=nncolor!60!white, thick, line join=round] (C1_out) .. controls (Ctrl_out) .. (C2_out)
            -- (C2_in)  .. controls (Ctrl_in)  .. (C1_in) -- cycle;
        }

    \foreach \step in {0,...,40} {
            \pgfmathsetmacro{\r}{\Renv - \step * (\Renv-\Rcore)/40}
            \fill[envcolor, opacity=0.1] (0,0) circle (\r);
        }
    \fill[envcolor] (0,0) circle (\Rcore);
    \node[align=center, font=\LARGE\sffamily, text=red!70!black] (E) at (0,0) {Environment};

    \foreach \i in {1,...,\N} {
            \pgfmathsetmacro{\angle}{90 - (\i-1) * 360 / \N}
            \node[atom] (A\i) at (\angle:\R) {};
        }

    \pgfmathsetmacro{\startangle}{155}
    \pgfmathsetmacro{\endangle}{115}
    \draw[tarrow] (\startangle:\Rarrow) arc[start angle=\startangle, end angle=\endangle, radius=\Rarrow]
    node[midway, sloped, above=2pt, font=\Large] {$T$};

\end{tikzpicture}}
    \end{minipage}\hspace{1cm}
    \begin{minipage}[c]{0.55\textwidth}
        \centering
        \resizebox{\linewidth}{!}{   \begin{tikzpicture}[regionlabel/.style={font=\sffamily}]
    \colorlet{colorBackground}{orange!10}     %
    \colorlet{colorStateLabels}{YellowOrange!50!black} %
    \colorlet{colorTsymFill}{green!30!white}    %
    \colorlet{colorTsymBorder}{green}          %
    \colorlet{colorTsymLabel}{green!70!black}   %
    \colorlet{colorSREFill}{blue!35!white}      %
    \colorlet{colorSREBorder}{blue}             %
    \colorlet{colorSRELabel}{blue!60!black}     %

    \colorlet{colorBackground}{orange!15}
    \colorlet{colorStateLabels}{orange!60!black}

    \colorlet{colorTsymFill}{green!25!white}
    \colorlet{colorTsymBorder}{green!55!black}
    \colorlet{colorTsymLabel}{green!55!black}

    \colorlet{colorSREFill}{blue!15!white}
    \colorlet{colorSREBorder}{blue!70!black}
    \colorlet{colorSRELabel}{blue!55!black}

    \begin{scope}[yscale=0.9]
        \fill[colorBackground, rounded corners=0.2cm] (-4.5,-2) rectangle (4.5,2);
        \draw[colorStateLabels] (0,1.9) node[below, align=center, inner sep=4pt, regionlabel] {Pure quantum states};

        \begin{scope}[shift={(0,-0.3)}]
            \draw[fill=colorTsymFill, draw=colorTsymBorder, rounded corners=0.5cm] (-3,-1.3) rectangle (3,1.4);
            \draw[colorTsymLabel] (0, 1.2 ) node[below, align=center, scale=0.9, regionlabel] {Translationally invariant subspace};
            \draw[colorTsymLabel, <->, >=stealth, very thick] (3.2, 1.4) -- (3.2, -1.3) node[midway, right, font=\normalsize] {$\exp(L)$};

            \draw[fill=colorSREFill, draw=colorSREBorder, rounded corners=0.4cm] (-1.4,-0.9) rectangle (1.4,0.5);
            \draw[colorSRELabel] (0, 0.4) node[below, align=center, scale=0.9, regionlabel] {Span of SRE states};
            \draw[colorSRELabel, <->, >=stealth, thick] (1.6, 0.5) -- (1.6, -0.9) node[midway, right, font=\small] {$\text{poly}(L)$};
            
            \draw (0, -0.4) node[scale=0.8] {e.g. $\ket{\psi}^{\otimes L}$, $|\text{GHZ}\rangle$};
        \end{scope}
    \end{scope}

\end{tikzpicture}}
    \end{minipage}
    
    \vspace{0.2cm} %

\begin{minipage}[t]{0.26\textwidth}
    \centering
    {\small\textbf{(a)}}
\end{minipage}\hspace{1cm}
\begin{minipage}[t]{0.55\textwidth}
    \centering
    {\small\textbf{(b)}}
\end{minipage}

    \caption{\textbf{(a)} We explore the properties of the maximally mixed invariant state (MMIS) $\rho_\tsym$ of strong translation symmetry generated by the translation operator $T$.  $\rho_\tsym$  can be realized as the steady state of generic Lindblad dynamics that preserves $T$ symmetry \cite{zhang_stationary_2020, yoshida_uniqueness_2024, Sahu_2024_mmis}, e.g., by considering the neighboring interactions $H = \sum_{i} h_{i,i+1}$ and a uniform coupling to an environment via Lindbladian jump operator $L = \sum_i O_i$. \textbf{(b)} Schematic comparison between the translation-invariant subspace $\tsym$ and the span of SRE states within it, as a function of the system length $L$: the former contains all pure states with translation symmetry and the dimension of the space grows exponentially in $L$ (Lemma \ref{lemma:tsym_dim}). The latter contains all pure states that can be expressed as a linear combination of SRE pure states, including all permutation-symmetric states such as product states and the GHZ state; see Appendix.\ref{sec:span_product} for details. Its dimension is bounded by a polynomial in $L$ (Lemma \ref{lemma:uMPS_dim}), thus occupying a vanishing fraction of $\tsym$ as $L\to\infty$. This implies that the MMIS of translation is long-range entangled (Theorem \ref{thm:rho_LRE}).}
    \label{fig:optics_state_spaces}
\end{figure*}

In this work, we identify a new mechanism that guarantees mixed-state long-range entanglement. As an illustration, we construct a class of LRE mixed states that exhibit none of the properties listed above. Namely, they have no symmetry anomaly, no long-range correlations of local operators, and no extensive bipartite entanglement. Our primary example is the \textit{maximally mixed invariant state} (MMIS)  $\rho_{\tsym}$ with strong lattice translation symmetry; see Fig.\ref{fig:optics_state_spaces}(a). $\rho_\tsym$ is defined on a 1d lattice of qudits with periodic boundary conditions as 
\begin{equation}\label{eq:intro_rho_tsym}
\rho_{\tsym} \propto  P_\tsym = \frac{1}{L}(\one+ T+ T^2+ \cdots + T^{L-1}),
\end{equation}
where $L$ is the number of lattice sites on the 1D ring, $T$ is the operator that implements the translation by one site, and $P_\tsym$ is the projector into the zero-momentum translation-invariant subspace. By defining the computational basis of $d$-dimensional qudits\footnote{Here, the on-site Hilbert space dimension $d$ is fixed and independent of the system size $L$.} $\ket{\boldsymbol{\sigma}} \equiv  \ket{\sigma_1, \sigma_2, ..., \sigma_L}$ with $\sigma_i  \in \{0,1,2,..., d-1 \}$, the mixed state $\rho_\tsym$ can be expressed as $\rho_{\tsym} \propto \sum_{\boldsymbol{\sigma} } P_\tsym \ketbra{\boldsymbol{\sigma}}{\boldsymbol{\sigma}} P_\tsym$. Namely,  $\rho_\tsym$ can be realized as an ensemble of pure states $P_\tsym \ket{\boldsymbol{\sigma}}$, each of which is a uniform superposition of all possible cyclic shifts of a computational basis state $\ket{\boldsymbol{\sigma}}$. Crucially, $T$ is \textit{not} anomalous, since it admits symmetric SRE states, such as any homogeneous product state $\ket{\psi}^{\otimes L }$ \footnote{In an alternative definition \cite{else_2026_anomaly}, $T$ is anomalous as it belongs to a blend equivalence class distinct from onsite symmetries. This definition implies that there is an obstruction for gauging $T$ symmetry.}. Yet, as we will discuss in this work, the MMIS $\rho_\tsym$ is LRE. We note that a $T$-symmetric pure state $\ket{\psi}$ can have a non-zero momentum, i.e. $T\ket{\psi} = e^{ik} \ket{\psi}$ with $k\neq 0 \pmod{2\pi}$, but such a non-zero momentum is anomalous in the sense that it forbids SRE pure states \cite{wang_2022_momentum}. For this reason, we restrict our discussion to translation symmetry in the zero-momentum sector.

The fundamental reason why $\rho_{\tsym}$ is LRE is that the dimension of the space spanned by $T$-invariant SRE states scales at most polynomially in $L$, while the dimension of the whole $T$-invariant subspace %
grows exponentially with $L$; see Fig.\ref{fig:optics_state_spaces}(b). As such, \textit{there are not enough} $T$-invariant SRE states to exhaust the whole $T$-invariant subspace, so $\rho_\tsym$  must be LRE. This \textit{dimensional constraint} provides a new mechanism for mixed-state LRE that is intrinsic to mixed states. In contrast, the mechanisms for mixed-state LRE studied in the literature so far (i.e., the three summarized earlier) are not intrinsic to mixed states since pure states can also be LRE via the same mechanisms. Relatedly, it is also insightful to compare translation symmetry with other symmetries and notice its unique pattern that is absent in the others (see \cref{tab:symmetry_comparison}). 

\begin{table}[t]
    \centering
    
    \footnotesize 
    \setlength{\tabcolsep}{1.7pt} 
    
    \setlength{\aboverulesep}{0pt} 
    \setlength{\belowrulesep}{0pt} 
    \setlength{\extrarowheight}{4pt} 
    \renewcommand{\arraystretch}{1.2} 
    
    \begin{tabular}{@{} l c c c >{\columncolor{cyan!8}[\tabcolsep][\tabcolsep]}c } 
         \toprule
         & \multicolumn{4}{c}{\raisebox{2.5pt}{\small\textbf{Symmetry}}} \\ 
         \cmidrule{2-5} 
         
         & \makecell{\textbf{Abelian} \\ \textbf{(on-site)}} 
         & \makecell{\textbf{Non-Abelian} \\ \textbf{continuous}} 
         & \makecell{\textbf{Anomalous}}
         & \textbf{Translation} \\ 
         \midrule
         
         \makecell[l]{All symm. \\ states LRE } & \myNo & $\text{\myNo}$ & \myYes & \myNo \\
         
         \makecell[l]{MMIS LRE} & \myNo & \myYes & \myYes & \myYes \\

         \makecell[l]{MMIS LRC} & \myNo & $\text{\ \myYes}^*$ & $\ \text{\myNo}^\dagger$ & \myNo \\ 
         \bottomrule
    \end{tabular}
    
    \caption{
    Comparison of long-range entanglement (LRE) and long-range correlation (LRC) of local observables for various symmetry classes.
    ${\!}^*$~Because two-point correlations $\langle O_i O_j \rangle$ of the MMIS vanish algebraically with system size, and not with distance $|i-j|$, it evades Lieb-Robinson bounds~\cite{Sahu_2024_mmis, li_highly_2025}. ${\!}^\dagger$~For a MMIS with global (0-form) anomalous symmetry ~\cite{lessa_2025_Multipartite,lessa_2025_higher}.}
    \label{tab:symmetry_comparison}
\end{table}

In the rest of the manuscript, we prove that $\rho_{\tsym}$ exhibits long-range entanglement, analyze its correlation properties and information-theoretic measures, and conclude with possible generalizations and future directions. As a warm-up, we first present a theorem regarding the on-site non-separability of $\rho_{\tsym}$, before treating its LRE properties:

\begin{theorem}\label{thm:tsym_entangled}
    The translation symmetry MMIS $\rho_\tsym$ is entangled (i.e., not fully separable) for large enough $L$. Specifically, $\rho_\tsym  \neq  \sum_{\alpha} p_\alpha (\ketbra{\psi_{\alpha}}{\psi_{\alpha}})^{\otimes L}$, for $L \geq 3$ when $d \geq 3$, and for $L \geq 4$ when $d=2$, and is fully separable otherwise.
\end{theorem}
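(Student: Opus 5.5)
The plan is to compare two subspaces of $(\C^d)^{\otimes L}$: the translation-invariant subspace $\tsym = \mathrm{range}(P_\tsym)$, and the symmetric subspace $\mathrm{Sym}^L(\C^d)$, i.e. the vectors invariant under all site permutations. Since cyclic shifts are permutations, we always have $\mathrm{Sym}^L(\C^d)\subseteq \tsym$.

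\emph{Entanglement when the inclusion is strict.} Suppose $\rho_\tsym = \sum_\alpha p_\alpha (\ketbra{\psi_\alpha}{\psi_\alpha})^{\otimes L}$ with $p_\alpha>0$. The support of a sum of positive operators is the span of the supports of the summands. Hence
\begin{equation*}
\tsym = \mathrm{supp}\,\rho_\tsym = \spn\{\ket{\psi_\alpha}^{\otimes L}\} \subseteq \mathrm{Sym}^L(\C^d).
\end{equation*}
So it suffices to show $\tsym \neq \mathrm{Sym}^L(\C^d)$ in the claimed ranges, i.e. to exhibit a translation-invariant vector that is not permutation-invariant. A convenient basis of $\tsym$ is $\{P_\tsym\ket{\boldsymbol\sigma}\}$, one vector per necklace (cyclic orbit of strings). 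A basis of $\mathrm{Sym}^L(\C^d)$ is one vector per multiset of letters, namely the full symmetrization of a string.

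Therefore strictness holds as soon as two strings with the same letter content lie in different cyclic orbits. In that case $P_\tsym\ket{\boldsymbol\sigma}$ is supported on a proper subset of its permutation orbit and is not symmetric. The explicit witnesses are:
\begin{itemize}
\item For $d\ge 3$ and $L\ge 3$: the strings $0120\cdots0$ and $0210\cdots0$. These have the same content, but the cyclic order of the letters $1$ and $2$ relative to the run of zeros differs, so neither is a rotation of the other.
\item For $d=2$ and $L\ge 4$: the strings $110\cdots0$ and $1010\cdots0$. The two $1$'s have cyclic distances $\{1,L-1\}$ versus $\{2,L-2\}$. These sets differ precisely when $L\ge 4$.
\end{itemize}

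\emph{Separability in the remaining cases.} The remaining cases are $L=1$, $L=2$, and $(L,d)=(3,2)$. Here every multiset corresponds to a single necklace:
\begin{itemize}
\item For $L\le 2$ every permutation is a cyclic shift.
\item For $(L,d)=(3,2)$ the binary necklaces $000,001,011,111$ are exactly the $4=\binom{4}{1}$ multisets.
\end{itemize}
Thus $\tsym=\mathrm{Sym}^L(\C^d)$, and $\rho_\tsym$ is the normalized projector onto the symmetric subspace. I will then invoke the standard identity
\begin{equation*}
\int d\psi\,(\ketbra{\psi}{\psi})^{\otimes L} = \frac{P_{\mathrm{Sym}}}{\dim \mathrm{Sym}^L(\C^d)},
\end{equation*}
where $d\psi$ is the Haar measure on pure states. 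This follows from Schur's lemma, because the left-hand side commutes with $U^{\otimes L}$ and is supported on the irreducible subspace $\mathrm{Sym}^L$. To get a finite sum, I will use Carathéodory's theorem in the finite-dimensional real space of Hermitian operators, which turns the integral into a finite convex combination of the form in the statement.

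\emph{Main obstacle.} The only step needing care is the combinatorial claim that the witness strings are in distinct cyclic orbits for every $L$ in range. In the $d=2$ case the borderline $L=4$ has to be checked separately: there $\{2,L-2\}=\{2\}$ while $\{1,L-1\}=\{1,3\}$, so the orbits still differ. Similarly, for $d\ge3$ and $L=3$, $012$ and $021$ are the two distinct orientations. The rest is the linear-algebra support argument above.
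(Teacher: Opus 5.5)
Your argument is correct, and its skeleton matches the paper's. A decomposition into $(\ketbra{\psi_\alpha}{\psi_\alpha})^{\otimes L}$ would force $\tsym=\mathrm{supp}\,\rho_\tsym$ into $\psym$. In the exceptional cases $\tsym=\psym$, and the Haar average plus Schur's lemma plus Carath\'eodory (the paper's Proposition~\ref{prop:psym_MMIS_fullsep}) gives separability.

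Where you genuinely differ is in proving $\psym\subsetneq\tsym$. The paper compares dimensions, $\binom{L+d-1}{d-1}$ against the necklace count $\frac{1}{L}\sum_{i}d^{\gcd(i,L)}$. It checks the inequality by a table for $d=2$, $L\le 6$, and otherwise relies on the asymptotics $\Theta(L^{d-1})$ versus $\Theta(d^L/L)$. The small-$L$ cases with $d\ge 3$ are only asserted there. You instead exhibit two strings with equal letter content in distinct cyclic orbits, which is precisely the criterion for strict inclusion. Your orbit invariants are valid: the successor of the unique $1$, and the cyclic gaps $\{1,L-1\}$ versus $\{2,L-2\}$. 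So the exact thresholds follow uniformly in $L$ and $d$ without any arithmetic.

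What the dimension count buys in exchange is quantitative: $\psym$ occupies a vanishing (polynomial versus exponential) fraction of $\tsym$. That ``dimensional mismatch'' is what the paper reuses for Theorem~\ref{thm:rho_LRE}. There $\psym$ is replaced by the span of uMPS, which has no explicit description amenable to a witness argument and is controlled only through the dimension bound of Lemma~\ref{lemma:uMPS_dim}.
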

We prove this theorem by contradiction: assuming the contrary, then there exists a separable decomposition: $\rho_\tsym = \sum_\alpha p_{\alpha}  ( \ketbra{\psi_\alpha}{\psi_\alpha}  )^{\otimes L}$. For that, the set of states in the decomposition has to span the whole $\tsym$ subspace. However, the product state $\ket{\psi_\alpha}^{\otimes L}$ can at most span a \textit{permutationally} symmetric subspace $\psym \subseteq \tsym$ (see Appendix.\ref{sec:span_product}). This leads to a contradiction for large enough $L$, as $\dim(\psym) = \Theta(L^{d-1})$ grows only polynomially, whereas $\dim(\tsym) = \Theta(d^L/L)$ grows exponentially with $L$, based on the following two Lemmas (see Appendix. \ref{proof_lemma:psym_dim} and \ref{proof_lemma:tsym_dim} for proofs):

\begin{lemma}\label{lemma:psym_dim}
    The dimension of the permutationally symmetric subspace $\psym \subseteq (\C^d)^{\otimes L}$ of $L$ qudits is 
    \begin{equation}
        \dim(\psym) = \binom{L+d-1}{d-1}.
    \end{equation}
 To leading order in $L$, $\dim(\psym) $ scales as $\frac{L^{d-1}}{(d-1)!}$.
\end{lemma}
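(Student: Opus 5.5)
We will compute $\dim(\psym)$ by exhibiting an explicit basis indexed by occupation numbers and then counting it with a stars-and-bars argument. The permutationally symmetric subspace $\psym$ is the image of the symmetrizer $P_\psym = \frac{1}{L!}\sum_{\pi \in S_L} U_\pi$, where $U_\pi$ permutes the tensor factors. It is spanned by $P_\psym\ket{\boldsymbol{\sigma}}$ as $\boldsymbol{\sigma}$ ranges over the computational basis.

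The first step is to note that $P_\psym\ket{\boldsymbol{\sigma}}$ depends only on the \emph{type} of $\boldsymbol{\sigma}$. The type is the vector of occupation numbers $(n_0,\dots,n_{d-1})$, where $n_a = \#\{i : \sigma_i = a\}$; it satisfies $n_a \geq 0$ and $\sum_a n_a = L$. Indeed, two strings of the same type are related by a permutation, and $P_\psym U_\pi = P_\psym$. For each type $\mathbf{n}$ we take $\ket{\mathbf{n}}$ to be the uniform superposition of all strings of type $\mathbf{n}$.

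The second step is to check that these vectors form a basis of $\psym$. They are nonzero, and vectors of distinct types are supported on disjoint sets of computational basis strings, so they are mutually orthogonal and hence linearly independent. They also span $\psym$ by the first step. Therefore $\dim(\psym)$ equals the number of types.

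The third step counts the types. These are the solutions of $n_0+\dots+n_{d-1}=L$ in nonnegative integers. By stars and bars (arranging $L$ stars and $d-1$ bars in a row), the number of solutions is $\binom{L+d-1}{d-1}$. For the asymptotics, write
\begin{equation}
\binom{L+d-1}{d-1} = \frac{(L+1)(L+2)\cdots(L+d-1)}{(d-1)!}.
\end{equation}
This is a polynomial in $L$ of degree $d-1$ with leading coefficient $1/(d-1)!$, since $d$ is fixed. Hence $\dim(\psym) \sim L^{d-1}/(d-1)!$.

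None of these steps is a real obstacle. The only point that needs care is the spanning claim, namely that every symmetric vector is a combination of the $\ket{\mathbf{n}}$. This follows because any $v\in\psym$ satisfies $v = P_\psym v$; expanding $v$ in the computational basis and applying $P_\psym$ to each term gives a combination of the $\ket{\mathbf{n}}$.
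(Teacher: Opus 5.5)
Your proposal is correct and follows the same route as the paper: a basis of symmetrized computational-basis states labeled by occupation numbers, orthogonal because their supports are disjoint, and counted by stars and bars. You also spell out the spanning argument and the leading-order asymptotics via $(L+1)\cdots(L+d-1)/(d-1)!$, both of which the paper leaves implicit.
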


\begin{lemma}\label{lemma:tsym_dim}
    The dimension of the translationally symmetric subspace $\tsym \subseteq (\C^d)^{\otimes L}$ of $L$ qudits is 
    \begin{equation}
        \dim(\tsym) = \frac{1}{L} \sum_{i=0}^{L-1} d^{\gcd(i,L)}.
    \end{equation}
    To leading order in $L$,  $\dim(\tsym)$ scales as $d^L/L$, up to a correction that is exponentially small in L.
\end{lemma}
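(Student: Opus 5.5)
The plan is to prove the dimension formula with Burnside's lemma (equivalently, a trace computation), and then get the asymptotics by isolating the $i=0$ term.

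\textbf{Step 1: trace of the projector.} Since $P_\tsym = \frac{1}{L}\sum_{i=0}^{L-1} T^i$ is the projector onto $\tsym$, we have $\dim(\tsym) = \Tr P_\tsym = \frac{1}{L}\sum_{i=0}^{L-1}\Tr(T^i)$. Each $T^i$ permutes the computational basis states $\ket{\boldsymbol{\sigma}}$. Hence $\Tr(T^i)$ equals the number of strings $\boldsymbol{\sigma}$ with $\sigma_{j+i}=\sigma_j$ for all $j$, indices taken mod $L$. Such a string is constant on each orbit of $j\mapsto j+i$ in $\Z_L$. These orbits are the cosets of the subgroup generated by $i$, which has order $L/\gcd(i,L)$, so there are exactly $\gcd(i,L)$ orbits. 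Therefore $\Tr(T^i)=d^{\gcd(i,L)}$, using the convention $\gcd(0,L)=L$, and the formula follows. Equivalently, this is Burnside's count of necklaces: each orbit of basis states under $\Z_L$ contributes exactly one zero-momentum vector, namely $P_\tsym\ket{\boldsymbol{\sigma}}$.

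\textbf{Step 2: asymptotics.} Separate the $i=0$ term, which gives $d^L/L$. For $i\neq 0$, $\gcd(i,L)$ is a proper divisor of $L$, so $\gcd(i,L)\le L/2$. The remaining terms therefore sum to at most $\frac{L-1}{L}d^{L/2}\le d^{L/2}$. This yields
\begin{equation}
\dim(\tsym)=\frac{d^L}{L}\left(1+O\!\left(L\,d^{-L/2}\right)\right),
\end{equation}
so the relative correction is exponentially small in $L$. Nothing here is a real obstacle. The only point needing care is counting the cycle structure of $T^i$, i.e.\ that the number of orbits is $\gcd(i,L)$, together with the bound $\gcd(i,L)\le L/2$ for $0<i<L$.
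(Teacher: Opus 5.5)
Your proposal is correct and follows the paper's proof essentially step for step: you take $\dim(\tsym)=\Tr P_\tsym$, count the basis strings fixed by $T^i$ as $d^{\gcd(i,L)}$, and isolate the $i=0$ term using $\gcd(i,L)\le L/2$ for $0<i<L$. Your orbit/coset justification and the explicit relative error $O(L\,d^{-L/2})$ make precise what the paper only states briefly.
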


For system sizes $L$ satisfying the conditions stated in Theorem \ref{thm:tsym_entangled}, one finds $ \dim(\psym) < \dim(\tsym )$ (see Appendix.~\ref{appendix:table_dim} for detailed comparison), and therefore $\rho_\tsym$ is entangled. Otherwise, one has $\psym = \tsym$ and $\rho_\tsym = \rho_\psym$, and we prove that the MMIS $\rho_\psym$ is fully separable in Appendix.~\ref{sec:span_product}.

From Theorem \ref{thm:tsym_entangled}, we see that $\rho_\tsym$ is not fully separable for a large enough $L$. In fact, the strong translation symmetry allows us to prove a stronger theorem:

\begin{theorem}\label{thm:tsym_bipent}
    The translation symmetry MMIS $\rho_\tsym$ is bipartite entangled for large enough $L$. Specifically, $\rho_\tsym  \neq  \sum_{\alpha} p_\alpha  \ketbra{A_\alpha}{ A_{\alpha} } \otimes\ketbra{B_\alpha}{ B_{\alpha} } $, for $L \geq 3$ for $d \geq 3$, and $L \geq 4$ for $d=2$.
\end{theorem}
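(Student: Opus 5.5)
The plan is to follow the same strategy as for Theorem~\ref{thm:tsym_entangled}: compare the dimension of the subspace that product states must span with $\dim(\tsym)$. Fix a bipartition of the ring into a proper nonempty set of sites $A$ and its complement $B$. Suppose $\rho_\tsym = \sum_\alpha p_\alpha \ketbra{A_\alpha}{A_\alpha}\otimes\ketbra{B_\alpha}{B_\alpha}$.

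\textbf{Step 1 (strong symmetry).} Because $T\rho_\tsym=\rho_\tsym$, every vector $\ket{\psi_\alpha}=\ket{A_\alpha}\ket{B_\alpha}$ lies in the range of $P_\tsym$. Since $\rho_\tsym\propto P_\tsym$, these vectors must span all of $\tsym$. It therefore suffices to show that translation-invariant vectors which are product across $A|B$ span a proper subspace of $\tsym$.

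\textbf{Step 2 (structure of such vectors).} Let $\ket{\psi}\in\tsym$ be product across $A|B$, so its reduced state on $A$ is pure. By invariance, its reduced state on every translate $T^k(A)$ is also pure, and so is the reduced state on every complement. Strong subadditivity, $S(X\cap Y)+S(X\cup Y)\le S(X)+S(Y)=0$, shows that purity is preserved under intersections and unions. Hence $\ket{\psi}$ has a pure marginal on every set in the Boolean algebra generated by the translates of $A$.

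The atoms of this algebra form a partition of the sites that is permuted by $T$. Since $\langle T\rangle\cong\Z_L$ acts transitively on the sites, the atoms must be the cosets of a subgroup $m\Z_L$ with $m\mid L$, and $m\ge 2$ because $A$ is proper. Consequently
\begin{equation}
\ket{\psi}=\ket{a}\otimes T\ket{a}\otimes\cdots\otimes T^{m-1}\ket{a},
\end{equation}
where $\ket{a}$ lives on the $n=L/m$ sites $\{1,1+m,1+2m,\dots\}$ and satisfies $T^m\ket{a}=\ket{a}$ (zero momentum, up to a phase absorbed into $\ket{a}$). The case $m=L$ is the fully product case of Theorem~\ref{thm:tsym_entangled}.

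\textbf{Step 3 (dimension count).} The map $\ket{a}\mapsto\bigotimes_k T^k\ket{a}$ is a fixed linear map applied to $\ket{a}^{\otimes m}$. Hence, for a given $m$, the span of these vectors has dimension at most $\dim \mathrm{Sym}^m(V)$, where $V$ is the $T^m$-invariant subspace on $n$ sites. By Lemma~\ref{lemma:tsym_dim} applied to the $n$-site ring, $\dim V\approx d^{n}/n$. Thus
\begin{equation}
\dim\mathrm{Sym}^m(V)\lesssim \frac{(d^n/n)^m}{m!}=\frac{d^L}{n^m\, m!},
\end{equation}
to be compared with $\dim(\tsym)\approx d^L/L=d^L/(nm)$. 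The ratio is $nm/(n^m m!)$, which is below $1$ whenever $n\ge2$ and $m\ge 2$. For $n=1$ we are back to $\psym$, where Lemma~\ref{lemma:psym_dim} gives polynomial growth. Since a fixed bipartition determines a single $m$, this gives a proper subspace of $\tsym$ and hence a contradiction for large $L$.

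\textbf{Main obstacle.} The hard part is making Step 3 sharp enough to reach the stated thresholds ($L\ge3$ for $d\ge3$, $L\ge4$ for $d=2$), rather than only an asymptotic statement. For this I would use the exact formulas $\dim\mathrm{Sym}^m(V)=\binom{\dim V+m-1}{m}$ and $\dim V=\frac1n\sum_i d^{\gcd(i,n)}$, and check the finitely many small cases by direct tabulation, in the same way as Appendix~\ref{appendix:table_dim} does for Theorem~\ref{thm:tsym_entangled}. A secondary point to verify carefully is the atom/coset argument in Step 2, including the possible phase in $T^m\ket{a}$.
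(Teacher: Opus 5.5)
Your Steps 1 and 2 are sound. Step 2 (closing pure marginals under unions, intersections and complements via strong subadditivity, then identifying the atoms with cosets of $m\Z_L$) is a clean generalization of Lemma~\ref{lemma:overlaping_pure_marginals} and Lemma~\ref{lemma:tsym_bipsep_implies_fullsep}.

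The gap is in Step 3, and it comes from allowing non-contiguous bipartitions. Your claim that $\dim\mathrm{Sym}^m(V)<\dim(\tsym)$ whenever $n,m\ge 2$ fails inside the stated range. Take $d=2$, $L=4$, $A=\{0,2\}$. Then $m=n=2$, $V$ is the symmetric subspace of two qubits with $\dim V=3$, and $\dim\mathrm{Sym}^2(V)=\binom{4}{2}=6=\dim(\tsym)$. Your heuristic bound also points the wrong way, since $\binom{D+m-1}{m}\ge D^m/m!$ and $\dim V\ge d^n/n$.

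This is not merely a failure of the counting. Embedded on $\{0,2\}|\{1,3\}$, $\mathrm{Sym}^2(V)$ coincides with $\tsym$, so $\rho_\tsym\propto\int \ketbra{w}{w}_{02}\otimes\ketbra{w}{w}_{13}\,dw$ with $w$ Haar-random in $V$. Hence $\rho_\tsym$ is genuinely separable across the even/odd cut. No tabulation can deliver the stated thresholds for arbitrary cuts; for $d=2$ the threshold would have to be raised past $L=4$.

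The theorem is meant for contiguous $A|B$, which is how the paper proves it. Lemma~\ref{lemma:tsym_bipsep_implies_fullsep} shows that a translation-invariant state that is product across a contiguous cut equals $\ket{\psi}^{\otimes L}$, and this reduces Theorem~\ref{thm:tsym_bipent} to Theorem~\ref{thm:tsym_entangled}.

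Your Step 2 recovers this at once. For contiguous $A$ with $1\le |A|\le L-1$, $A\setminus T(A)$ is a single site. So the atoms are singletons, $m=L$, $n=1$, and Step 3 becomes Lemma~\ref{lemma:psym_dim} versus Lemma~\ref{lemma:tsym_dim} together with the table in Appendix~\ref{appendix:table_dim}. Restricted to that case, your argument is essentially the paper's. The general-$m$ analysis only adds an asymptotic, large-$L$ statement for non-contiguous cuts.

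A minor point: $T\ket{\psi}=\ket{\psi}$ forces $T^m\ket{a}=\ket{a}$ exactly. A phase in $T^m\ket{a}$ could not be absorbed by rescaling $\ket{a}$ in any case.
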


This theorem follows from the following Lemma: 

\begin{lemma}\label{lemma:tsym_bipsep_implies_fullsep}
    A translationally symmetric pure state $\ket{\Psi} \in \Hilb^{\otimes L}$ (i.e. $T \ket{\Psi} \propto \ket{\Psi}$) that is bipartite separable with respect to contiguous regions $A|B$, i.e. $\ket{\Psi} = \ket{A}\ket{B}$, is fully separable: $\ket{\Psi} = \ket{\psi}^{\otimes L}$.
\end{lemma}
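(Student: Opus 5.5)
Our plan is to use translation symmetry to move the cut between $A$ and $B$ around the ring, and to combine the resulting product decompositions until every single site is split off. Label the sites $1,\dots,L$ cyclically. Take $A=\{1,\dots,m\}$ and $B=\{m+1,\dots,L\}$ with $1\le m\le L-1$. Since $T\ket{\Psi}=e^{ik}\ket{\Psi}$, applying $T^j$ shows that $\ket{\Psi}$ is also a product across the translated cut $T^j(A)|T^j(B)$, for every $j$. Equivalently, for each $j$ the reduced state on the contiguous arc $T^j(A)$ is pure. We will phrase everything in terms of purity of reduced density matrices. A pure state is a product across $X|X^c$ if and only if $\rho_X$ is pure. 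Moreover, if $\rho_X$ and $\rho_Y$ are both pure, then $\ket{\Psi}$ factorizes as a product over the atoms $X\cap Y$, $X\setminus Y$, $Y\setminus X$, and $(X\cup Y)^c$. This holds because $\ket{\Psi}=\ket{a}_X\ket{a'}_{X^c}$, and the purity of $\rho_Y$ forces $\ket{a}$ and $\ket{a'}$ to factor further along $Y$. The second part of this is a small standard lemma, which we would prove via Schmidt decompositions.

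The key step is to generate a single-site cut. Purity is preserved under complements, intersections, and unions of pure regions, because each of these is a union of atoms of a full product decomposition. So the family of regions $X$ with $\rho_X$ pure is closed under these operations. It is therefore a Boolean algebra of subsets of the ring, and it contains every cyclic translate of the arc $[1,m]$. Consider the arcs $[1,m]$ and $[2,m+1]$. Their symmetric difference is $\{1\}\cup\{m+1\}$, and their intersection is $[2,m]$. We want to isolate site $1$, using $m<L$ and all translates. Set $g=\gcd(m,L)$ and proceed in two steps.

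First, we show that the arc $[1,g]$ lies in the algebra. The set $[1,m]\setminus[2,m+1]=\{1\}$ when $m\ge 1$ and $m+1\ne 1 \bmod L$, which holds since $m\le L-1$. So $\{1\}$ is already an atom, obtained as the intersection of $[1,m]$ with the complement of $[2,m+1]$. Therefore $\rho_{\{1\}}$ is pure, and by translation every $\rho_{\{i\}}$ is pure. This is easier than feared: the $\gcd$ argument is unnecessary, because two overlapping shifted arcs already cut out one site.

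Second, we conclude. If every single-site reduced state is pure, then $\ket{\Psi}=\ket{\psi_1}\otimes\cdots\otimes\ket{\psi_L}$, by iterating the lemma. Translation covariance, $T\ket{\Psi}=e^{ik}\ket{\Psi}$, then gives $\ket{\psi_{i+1}}\propto\ket{\psi_i}$ for each $i$. After absorbing phases, this yields $\ket{\Psi}=c\,\ket{\psi}^{\otimes L}$.

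The main obstacle is stating and proving the purity-lattice lemma carefully. Specifically, we must show that if $\ket{\Psi}$ is a product across $X|X^c$ and also across $Y|Y^c$, then $\rho_{X\cap Y}$ is pure. Our approach is to write $\ket{\Psi}=\ket{a}_X\ket{a'}_{X^c}$ and compute $\rho_Y=\rho^{a}_{X\cap Y}\otimes\rho^{a'}_{X^c\cap Y}$. Since $\rho_Y$ is pure, both tensor factors must be pure, because a tensor product is pure only if its factors are. Hence $\rho_{X\cap Y}$ is pure. The same reasoning gives purity of $\rho_{X\cap Y^c}$, which is the set $\{1\}$ in our application. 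We also need to handle the degenerate case where an atom is empty; this is trivial.
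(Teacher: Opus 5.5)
Your argument is correct and is essentially the paper's proof: your purity-lattice lemma is the paper's ``overlapping pure marginals'' lemma, and intersecting $[1,m]$ with its one-site translate $[2,m+1]$ to split off site $1$ is exactly how the paper obtains the $A_L|A'|B_L$ factorization. Your finish, translating the purity of $\rho_{\{1\}}$ to every site instead of re-running the cut argument shift by shift, is a slightly cleaner conclusion. The leftover sentence announcing a $\gcd(m,L)$ step and the arc $[1,g]$ should simply be deleted, since you never use it.
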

The central proof idea for this Lemma is to note that when $\ket{\Psi} = \ket{A}\ket{B}$, one can apply a lattice translation to construct various overlapping regions that satisfy the separability condition as well. This, in turn, eliminates all correlations, hence yielding a fully separable product state (see Appendix.~\ref{append:bipartite_fully_separable}). With Lemma \ref{lemma:tsym_bipsep_implies_fullsep} above, Theorem \ref{thm:tsym_bipent} reduces to Theorem \ref{thm:tsym_entangled}.

So far, Theorem \ref{thm:tsym_entangled} and Theorem \ref{thm:tsym_bipent} dictate that $\rho_\tsym$ is not fully separable, and is in fact, bipartite entangled. A natural question is whether such entanglement is short-range or long-range. In the following, we will address this question within the matrix product state (MPS) formalism. First, we define an uniform MPS (uMPS) as
\begin{equation}
\ket{\psi (A)} = \sum_{\{\sigma_i \}} \Tr[A^{\sigma_1 }  A^{\sigma_2 } \cdots A^{\sigma_L } ] \ket{\sigma_1 \sigma_2 \cdots \sigma_L}.
\end{equation}
where $A^{\sigma}$ is a $\chi \times \chi $ matrix, with $\chi$ known as the bond dimension. The tensor $A$ does not depend on the site label. A uMPS can also be pictorially represented as 
\begin{equation}\label{eq:uMPS_picture}
\begin{aligned}
\begin{tikzpicture}[thick]%
        \filldraw[fill = \tensorColor] (1.7,-.3) rectangle node {$A$} (2.3,.3);
        \draw (2,.3) -- (2,.6) node[above] {$\sigma_1$};
        \draw (2.7,0) --  (2.3,0);
        
        \filldraw[fill = \tensorColor] (2.7,-.3) rectangle node {$A$} (3.3,.3);
        \draw (3,.3) -- (3,.6) node[above] {$\sigma_2$};
        \draw (3.5,0) --  (3.3,0);
        
        \node at (3.8,0) {\ldots};
        \draw (4.1,0) --  (4.3,0);
        
        \filldraw[fill = \tensorColor] (4.3,-.3) rectangle node {$A$} (4.9,.3);
        \draw (4.6,.3) -- (4.6,.6) node[above] {$\sigma_L$};
        
        \draw[rounded corners=8pt] (1.7,0) -- (1.2,0) -- (1.2,-0.6) -- (3.5,-0.6);
        \node at (3.8,-0.6) {\ldots};
        \draw[rounded corners=8pt] (4.1, -0.6) -- (5.4,-0.6) -- (5.4,0) -- (4.9,0);
\end{tikzpicture}
\end{aligned}  
\end{equation}

Within the MPS formalism, we prove 
\begin{theorem}\label{thm:rho_LRE}
    The translation symmetry MMIS $\rho_\tsym  \neq   \sum_{\alpha} p_{\alpha}\ketbra{\psi_\alpha}{\psi_\alpha}$, where $\ket{\psi_\alpha}$ is a constant bond-dimension uMPS for large enough $L$.    
\end{theorem}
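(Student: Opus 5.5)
The argument mirrors the proof of Theorem~\ref{thm:tsym_entangled}, but with uniform MPS of bond dimension $\chi$ playing the role of homogeneous product states. I would argue by contradiction. Suppose $\rho_\tsym = \sum_\alpha p_\alpha \ketbra{\psi_\alpha}{\psi_\alpha}$ with each $\ket{\psi_\alpha} = \ket{\psi(A_\alpha)}$ a uMPS of bond dimension at most $\chi$, where $\chi$ is independent of $L$.

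The range of a convex combination of rank-one projectors is the span of the vectors involved. Hence every $\ket{\psi_\alpha}$ lies in $\tsym$, and they must span all of $\tsym$. Here $\tsym$ is the support of $\rho_\tsym \propto P_\tsym$. So it suffices to show that
\[
V_\chi \defeq \spn\{\ket{\psi(A)} : A^\sigma \in \C^{\chi\times\chi}\}
\]
has dimension at most polynomial in $L$. Comparing with $\dim(\tsym)\sim d^L/L$ from Lemma~\ref{lemma:tsym_dim} then gives the contradiction for large $L$. This polynomial bound is the content of Lemma~\ref{lemma:uMPS_dim}, which I would prove as the main step.

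\textbf{Bounding $\dim V_\chi$.} The key observation is that
\[
\ket{\psi(A)} = \sum_{\boldsymbol\sigma} \Tr[A^{\sigma_1}\cdots A^{\sigma_L}]\ket{\boldsymbol\sigma}
\]
is the image of a symmetric tensor power. Regard $A$ as a vector $a \in W \defeq \C^d\otimes \C^{\chi}\otimes\C^{\chi}$, of dimension $D = d\chi^2$. Then $\ket{\psi(A)} = \Phi(a^{\otimes L})$, where $\Phi: W^{\otimes L}\to (\C^d)^{\otimes L}$ is the fixed linear map that contracts the virtual indices cyclically. Since $a^{\otimes L}$ lies in the symmetric subspace $\mathrm{Sym}^L(W)$, we get
\[
V_\chi \subseteq \Phi\bigl(\mathrm{Sym}^L(W)\bigr),
\]
so that
\[
\dim V_\chi \le \dim \mathrm{Sym}^L(W) = \binom{L+D-1}{D-1} = O\bigl(L^{d\chi^2-1}\bigr),
\]
by Lemma~\ref{lemma:psym_dim} applied with local dimension $D$. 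Equivalently, each amplitude is a homogeneous degree-$L$ polynomial in the $D$ entries of $A$, and the span of evaluations of such a polynomial map is at most the number of degree-$L$ monomials in $D$ variables.

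This bound is polynomial in $L$ for fixed $\chi$ and $d$, while $\dim\tsym \ge d^L/L - O(d^{L/2})$. Hence $\dim V_\chi < \dim\tsym$ for $L$ large enough, which contradicts the spanning requirement.

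\textbf{Main obstacle.} The only delicate point is setting up $\Phi$ correctly and ensuring the bond dimension is uniformly bounded, i.e. that "constant bond dimension" means a $\chi$ fixed independently of $L$. The ensemble may also contain uncountably many states, but the span argument does not care about this. Bond dimensions varying with $\alpha$ but bounded by $\chi$ can be padded with zeros up to $\chi$, so all states lie in the same $V_\chi$. One could sharpen the bound by using gauge freedom ($A\mapsto XAX^{-1}$) to reduce the effective number of variables, but this is unnecessary for the theorem.
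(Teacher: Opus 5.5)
Your proposal is correct and follows essentially the same route as the paper. The paper likewise regards $A$ as a product state $\ket{\Psi(A)}\in(\C^{d\chi^2})^{\otimes L}$ in the permutation-symmetric subspace, bounds its span via Lemma~\ref{lemma:psym_dim}, and recovers the uMPS through a fixed linear contraction (projecting adjacent virtual legs onto Bell pairs, i.e.\ your map $\Phi$); it then concludes by comparing the resulting polynomial rank bound with the exponential growth of $\dim(\tsym)$. Your remarks on padding smaller bond dimensions with zeros and on possibly uncountable ensembles are valid additions that the paper leaves implicit.
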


To appreciate how this theorem constrains the entanglement structure of $\rho_\tsym$, we first recall that in the context of MPSs, $\ket{\psi(A)}$ is called short-range entangled (SRE) \cite{cirac_mps_review} if it is a normal MPS with finite bond dimension\footnote{An MPS is normal if and only if its MPS tensor, under blocking a finite number of sites, defines an injective map from the virtual space to the physical space \cite{cirac_mps_review}. This is in contrast to the long-range entangled GHZ state, whose MPS representation is not normal.}. Normal MPSs have several important properties: they arise as unique ground states of a local gapped Hamiltonian, and their transfer matrices have a unique largest eigenvalue, implying a finite spectral gap and a finite correlation length. Crucially, every translationally invariant normal MPS admits a uMPS representation \cite{cirac_mps_review}. One therefore has the following result:  

\begin{corollary}\label{corollary:lre}
The translation symmetry MMIS is long-range entangled for large enough $L$. Specifically, $\rho_\tsym  \neq   \sum_{\alpha} p_{\alpha}\ketbra{\psi_\alpha}{\psi_\alpha}$, where $\ket{\psi_\alpha}$ is an SRE state represented by uMPS.
\end{corollary}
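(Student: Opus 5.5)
The plan is to deduce Corollary~\ref{corollary:lre} directly from Theorem~\ref{thm:rho_LRE}, together with the structural facts about normal MPS recalled just above. Suppose, for contradiction, that for some large $L$ we had
\[
\rho_\tsym = \sum_\alpha p_\alpha \ketbra{\psi_\alpha}{\psi_\alpha},
\]
where each $\ket{\psi_\alpha}$ is SRE, i.e.\ a normal MPS with bond dimension bounded by a constant $\chi$ independent of $L$.

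\textbf{Step 1: each $\ket{\psi_\alpha}$ lies in $\tsym$.} Since $\rho_\tsym \propto P_\tsym$ is strongly symmetric, every state in any ensemble decomposition satisfies $T\ket{\psi_\alpha}=\ket{\psi_\alpha}$. Indeed, $\ket{\psi_\alpha}$ lies in the support of $\rho_\tsym$, which is exactly the zero-momentum subspace $\tsym$. So every $\ket{\psi_\alpha}$ is translation invariant with zero momentum.

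\textbf{Step 2: each $\ket{\psi_\alpha}$ is a uniform MPS.} I would invoke the cited result \cite{cirac_mps_review} that every translationally invariant normal MPS admits a uMPS representation. Thus each $\ket{\psi_\alpha}$ can be written as $\ket{\psi(A_\alpha)}$ with a site-independent tensor $A_\alpha$.

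\textbf{Step 3: the main obstacle, bounding the bond dimension.} The uMPS bond dimension must remain bounded by a constant $\chi'$ independent of $L$ and of $\alpha$. Generically, the translation-invariant form of an MPS with site-dependent tensors can require bond dimension growing with $L$, for example a factor of $L$ from symmetrization. For normal MPS, however, the standard argument is as follows. The tensors are related by gauge transformations, $A_{i+1} = X_i A_i X_{i+1}^{-1}$ up to phases, and the fundamental theorem of MPS lets one choose a uniform tensor of the same bond dimension $\chi$, after blocking a fixed number of sites to reach injectivity. I would rely on the cited statement as giving a bounded bond dimension $\chi'=\chi'(\chi,d)$. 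If blocking is needed, I would either restrict to $L$ divisible by the blocking length or note that the uniform form is recovered on the unblocked tensor.

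\textbf{Step 4: conclusion.} The decomposition now expresses $\rho_\tsym$ as a mixture of constant-bond-dimension uMPS. This contradicts Theorem~\ref{thm:rho_LRE} for large enough $L$. The underlying count is that such uMPS span a subspace of dimension $\mathrm{poly}(L)$ (Lemma~\ref{lemma:uMPS_dim}), whereas $\dim\tsym=\Theta(d^L/L)$ by Lemma~\ref{lemma:tsym_dim}.
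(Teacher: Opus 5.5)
Your proposal is correct and matches the paper's argument. The paper likewise obtains the corollary from Theorem~\ref{thm:rho_LRE}: strong symmetry forces every state in a decomposition into $\tsym$, translation-invariant normal (SRE) MPS admit constant-bond-dimension uMPS representations \cite{cirac_mps_review}, and the polynomial-versus-exponential dimension count then rules out the decomposition. Your Step~3 treats with more care a point the paper simply delegates to the citation; it is not strictly needed, because the corollary as stated already assumes the SRE states are represented by (finite-bond-dimension) uMPS.
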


We note that Theorem \ref{thm:rho_LRE} imposes a stronger constraint on the structure of the mixed state $\rho_\tsym$ than Corollary \ref{corollary:lre}. This is because uMPS with constant bond dimension form a broader class of states than just short-range entangled uMPS, including for example GHZ-type states.

The proof for Theorem \ref{thm:rho_LRE} follows from a dimensional counting based on the following lemma:
\begin{lemma}
\label{lemma:uMPS_dim}
The dimension of the linear span of all uMPSs is upper bounded by
    \begin{equation}
        \dim(\spn(\uMPS(\chi, d, L))) \leq \binom{L+d\chi^2-1}{d\chi^2-1},
    \end{equation}
     where $\uMPS(\chi, d, L)$ is the set of uMPS of bond dimension $\chi$ over $L$ $d$-dimensional sites. In particular, it can grow at most polynomially as $O(L^{d\chi^2-1})$ as $L \to \infty$.
\end{lemma}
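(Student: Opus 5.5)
The plan is to realize every uMPS as the image of one fixed \emph{symmetric} tensor under a linear map, and then invoke the dimension count of Lemma~\ref{lemma:psym_dim}. Set $D \defeq d\chi^2$. Collect the entries of the MPS tensor into a single vector
\[
a = \sum_{\sigma,\alpha,\beta} (A^\sigma)_{\alpha\beta}\, \ket{\sigma,\alpha,\beta} \in \C^{D}.
\]
With this notation, each coefficient $\Tr[A^{\sigma_1}\cdots A^{\sigma_L}]$ is a multilinear expression in $L$ copies of $a$. Hence there is a fixed linear map $M : (\C^{D})^{\otimes L} \to (\C^d)^{\otimes L}$, independent of $A$, with
\[
\ket{\psi(A)} = M\, a^{\otimes L}.
\]
Concretely, $M$ sends $\ket{\sigma_1\alpha_1\beta_1}\otimes\cdots\otimes\ket{\sigma_L\alpha_L\beta_L}$ to $\prod_i \delta_{\beta_i,\alpha_{i+1}}\ket{\sigma_1\cdots\sigma_L}$, where indices are taken cyclically. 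This identity is checked directly by expanding the trace as a sum over virtual indices.

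It follows that $\spn(\uMPS(\chi,d,L)) \subseteq M\bigl(\spn\{a^{\otimes L} : a\in\C^D\}\bigr)$. Every vector $a^{\otimes L}$ lies in the permutationally symmetric subspace $\psym_D \subseteq (\C^D)^{\otimes L}$. A linear map cannot increase dimension, so
\[
\dim \spn(\uMPS(\chi,d,L)) \le \dim \psym_D = \binom{L+D-1}{D-1}
\]
by Lemma~\ref{lemma:psym_dim} with $d$ replaced by $D = d\chi^2$. This is exactly the claimed bound. The same lemma gives the leading behavior $L^{D-1}/(D-1)!$, that is, $O(L^{d\chi^2-1})$. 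Note that we only need the easy inclusion $\spn\{a^{\otimes L}\}\subseteq\psym_D$, not equality.

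The only step needing care is the precise definition of $M$ and verifying that it is independent of $A$ and linear. This is the point where the trace, which is not itself a product structure, is rewritten as a fixed contraction applied to a product tensor $a^{\otimes L}$. Once the contraction is written as the projection of the virtual legs onto the closed loop of bond indices, everything else is pure dimension counting. I expect no further obstacle.

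Finally, the bound is stated for fixed $\chi$. If states of bond dimension at most $\chi$ are meant, one pads smaller tensors with zeros to size $\chi$, which leaves the state unchanged, so the same bound covers them.
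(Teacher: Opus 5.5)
Your proposal is correct and follows essentially the paper's route. The paper also forms the product state $\ket{\Psi(A)} = a^{\otimes L}$ in $(\C^{d\chi^2})^{\otimes L}$, bounds its span using Lemma~\ref{lemma:psym_dim}, and then glues the virtual legs by projecting them onto Bell pairs; that projection is exactly your fixed linear map $M$, and your observation that only the inclusion into the symmetric subspace is needed (not the equality the paper states) is a valid, slightly more economical use of that lemma.
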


This Lemma first appeared in Ref.~\cite{werner_2006_mps_dimension}; see also Ref.\cite{de2022linear} for a related discussion. Here we provide an alternative and intuitive derivation: 
\begin{proof}
    Given a local tensor $A_{ij}^{\sigma}$, we can construct a product state $\ket{\Psi(A)} \in (\C^{d\chi^2})^{\otimes L}$ in an artificial Hilbert space of a chain of sites with dimension $d\chi^2$ by combining virtual and physical degrees of freedom of the MPS. In the local basis $\{(\sigma, i, j)\}$ of $\C^{d \chi^2} \simeq \C^{d} \otimes \C^\chi \otimes \C^\chi$, the wavefunction is $\braket{(\sigma_1, i_1,j_1), \ldots, (\sigma_L, i_L, j_L) |\Psi(A)} \defeq \mathcal{N{{}}} A_{i_{1}j_{1}}^{\sigma_1} \cdots A_{i_{L}j_{L}}^{\sigma_L}$, where $\mathcal{N}$ is a normalization constant. Diagrammatically, this wavefunction can be represented as
    \vspace{-0.5em}
    \begin{align}
            \begin{tikzpicture}[baseline=10pt]
                \disctensor{0}{$i_{1}$}{$\sigma_1$}{$j_{1}$}{$A$}
                \disctensor{3.5}{$i_{L}$}{$\sigma_L$}{$j_{L}$}{$A$}
                \node at (1.75, 0.4) {\LARGE \dots};
            \end{tikzpicture}
    \end{align}

    Each of the states $\ket{\Psi(A)}$ is symmetric under arbitrary permutations of the $d \chi^2$-dimensional sites. Thus, by Lemma \ref{lemma:psym_dim}, the subspace $V$ spanned by the product state $\ket{\Psi(A)} \in (\C^{d\chi^2})^{\otimes L} $ has a dimension $\dim(V)  = \binom{L + d\chi^2 - 1}{d\chi^2 -1}$.

    Furthermore, by projecting all neighbouring virtual legs of $\ket{\Psi(A)}$ onto the Bell state $\ket{00}+\ket{11}$, we can glue all virtual legs to construct the MPS wavefunction $\Tr[A^{\sigma_1} A^{\sigma_2} A^{\sigma_3} \cdots]$ that is pictorially represented as
    \begin{equation}
        \makebox[0.85\linewidth][c]{
        \begin{tikzpicture}[baseline=(current bounding box.center)]
            \foreach \x in {1, 2, 3}
                \disctensor{{2.4*(\x-1)}}{}{$\sigma_\x$}{}{$A$}
            \draw[thick, rounded corners=8pt] (-1.2, 1.0) node[left] {\dots} -- (-0.8, 1.0) -- (-0.8, 0.7);
            \draw[thick, rounded corners=8pt] (0.8, 0.7) -- (0.8, 1.0) -- (1.6, 1.0) -- (1.6, 0.7);
            \draw[thick, rounded corners=8pt] (3.2, 0.7) -- (3.2, 1.0) -- (4.0, 1.0) -- (4.0, 0.7);
            \draw[thick, rounded corners=8pt] (5.6, 0.7) -- (5.6, 1.0) -- (6.0, 1.0) node[right] {\dots};
        \end{tikzpicture}
        }
    \end{equation}
    Since this projection cannot increase the subspace dimension, one finds 
    \begin{equation}
        \dim(\spn(\uMPS(\chi, d, L))) \leq \dim(V) = \binom{L+d\chi^2-1}{d\chi^2-1}
    \end{equation}
\end{proof}

In particular, any incoherent mixture of $\uMPS(\chi, d, L)$ can only form a mixed-state density matrix with a rank upper bounded by $  \dim(\spn(\uMPS(\chi, d, L)))$, which at most grows polynomially with $L$ for finite bond dimension $\chi$. On the other hand, the rank of MMIS $\rho_\tsym$ is the dimension of the translationally invariant subspace, which grows exponentially with $L$ (Lemma \ref{lemma:tsym_dim}). This dimensional mismatch therefore proves Theorem \ref{thm:rho_LRE}. 

As a final remark, we can even consider uMPS with a different virtual boundary condition, which is dubbed MPS-X in Ref.~\cite{mps-x_2025_cirac}. This consideration amounts to inserting a single tensor defect X in the virtual space of a uMPS defined under periodic boundary conditions: $\ket{\psi(A;X)} = \sum_{\{\sigma_i \}} \Tr[XA^{\sigma_1 }  \cdots A^{\sigma_L } ] \ket{\sigma_1 \cdots \sigma_L}$. Notably, MPS-X of finite bond dimension can represent the W state exactly, which would require a bond dimension growing polynomially with $L$ in the conventional uMPS. Based on our analysis above, the dimension of the span of MPS-X remains growing polynomially, so $\rho_\tsym$ cannot be a mixture of MPS-X with finite bond dimension as well.

Now using the strong translational symmetry, one can prove a stronger theorem to show that the long-range entanglement is bipartite in nature:

\begin{theorem}\label{thm:Bipartite_LRE}
    The translation symmetry MMIS $\rho_\tsym$ is long-range bipartite entangled for large enough $L$. That is, it is not a mixture of states of the form $U \ket{A}\ket{B}$, with $U$ a finite-depth local unitary circuit (FDLU). 
\end{theorem}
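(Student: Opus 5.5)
We argue by contradiction, again through a dimension count. Suppose $\rho_\tsym = \sum_\alpha p_\alpha \ketbra{\Phi_\alpha}{\Phi_\alpha}$ with each $\ket{\Phi_\alpha} = U_\alpha \ket{A_\alpha}\ket{B_\alpha}$, where $U_\alpha$ is an FDLU of depth at most some constant $D$. By strong symmetry, every $\ket{\Phi_\alpha}$ lies in $\tsym$, so $T\ket{\Phi_\alpha} = \ket{\Phi_\alpha}$. The $\ket{\Phi_\alpha}$ must also span all of $\tsym$, whose dimension is $\Theta(d^L/L)$ by Lemma~\ref{lemma:tsym_dim}. The goal is to show that these states all lie in a subspace of dimension at most polynomial in $L$, at least when $A|B$ is a bipartition into contiguous regions.

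\textbf{Step 1: translation-invariant states of this form have bounded-bond-dimension MPS representations.} Since $\ket{A}\ket{B}$ has Schmidt rank one across the cut, and an FDLU of depth $D$ raises the Schmidt rank across any cut by at most $d^{O(D)}$, every cut of $\ket{\Phi_\alpha}$ far from the $A|B$ boundaries has rank $\le d^{O(D)}$. The cuts near the two $A|B$ boundaries, however, may have large entanglement coming from $\ket{A}$ and $\ket{B}$ themselves. This is where translation invariance enters. Since $T\ket{\Phi_\alpha}=\ket{\Phi_\alpha}$, the entanglement spectrum across a cut is the same for every translate of that cut. Every cut is therefore equivalent to one lying deep inside the light cone of the $A|B$ product structure. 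More concretely, the Schmidt rank of any single cut of the ring is bounded by $\chi_0 = d^{O(D)}$, because a cut located $\gtrsim D$ sites from both boundaries has bounded rank. On the ring we need pairs of cuts, i.e.\ the rank of the reduced state of an interval. For that we choose an interval $I$ whose two endpoints both lie within distance $O(D)$ of the $A|B$ boundary (equivalently, we translate the interval). The light-cone argument bounds $\mathrm{rank}(\rho_I)$ by $d^{O(D)}$, and translation carries this bound to every interval. A state on a ring with uniformly bounded interval ranks $\chi_0$ admits an MPS of bond dimension $\mathrm{poly}(\chi_0)$, as in the standard Vidal/SVD construction with periodic boundary conditions. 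Lemma~\ref{lemma:tsym_bipsep_implies_fullsep} is the $D=0$ instance of this picture: there every such rank equals $1$.

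\textbf{Step 2: pass to a uniform MPS and count.} A translation-invariant MPS with bond dimension $\chi$ on a ring of size $L$ can be rewritten as a uMPS (or an MPS-X) with bond dimension $\mathrm{poly}(L)\cdot\chi$ via the standard symmetrization; this is too weak for the count. We instead apply the MPS-X remark that follows Lemma~\ref{lemma:uMPS_dim}. The state $\ket{\Phi_\alpha}$ equals $P_\tsym\ket{\Phi_\alpha}$, which is $\frac1L\sum_k T^k$ applied to a (nonuniform) bond-dimension-$\chi$ MPS. To make the count polynomial, we will instead show directly that each $\ket{\Phi_\alpha}$ is an MPS-X with bond dimension $\chi'=\mathrm{poly}(\chi_0)$. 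Away from the $A|B$ boundaries, the tensors of $U\ket{A}\ket{B}$ are fixed by translation invariance to coincide with a single bulk tensor up to gauge, and the defect $X$ absorbs the boundary region. Granting this, the argument in the proof of Lemma~\ref{lemma:uMPS_dim} gives $\dim\spn\{\ket{\Phi_\alpha}\} \le \binom{L + d\chi'^2 - 1}{d\chi'^2 - 1}\cdot \chi'^2$, where the extra factor accounts for $X$ entering linearly. This is polynomial in $L$ and so smaller than $\dim\tsym$ for large $L$, which is a contradiction.

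\textbf{Main obstacle.} The difficult step is the gauge-fixing in Step 2: showing that translation invariance forces the bulk tensors to be equal, so that a uniform description holds up to a single defect. If this fails, I would fall back on a coarser count. Each such state is $P_\tsym$ applied to a state $U\ket{A}\ket{B}$, and the relevant data would have to be encoded in $\mathrm{poly}(L)$ parameters of bounded local dimension. The cleanest route would show that, once translation invariance is imposed, the pair $(\ket{A},\ket{B})$ is itself an FDLU image of product-like data, iterating the overlapping-region trick from Lemma~\ref{lemma:tsym_bipsep_implies_fullsep}. That would reduce the problem to SRE uMPS and hence to Theorem~\ref{thm:rho_LRE}.
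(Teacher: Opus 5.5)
Your proposal has a genuine gap: the step that makes the count work is assumed rather than proved.

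\textbf{Step 2 is only granted.} You assert that translation invariance forces the bulk tensors of $U\ket{A}\ket{B}$ to coincide up to gauge, leaving a single defect $X$, and then say ``granting this''. Gauge-fixing of that kind comes from the fundamental theorem of MPS. That theorem needs injectivity (normality) of the tensors, which is exactly the short-range-entangled property you have not yet established. Translation invariance plus bounded bond dimension does not by itself yield a bounded-bond-dimension uMPS; the W state, mentioned in the paper, already shows this. So the polynomial span bound currently has no foundation.

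\textbf{Step 1 is also flawed.} For $U\ket{A}\ket{B}$, the cuts with a priori bounded Schmidt rank are those \emph{at} the two $A|B$ boundary points, where the product structure sits. A cut deep inside $A$ or $B$ inherits the arbitrary entanglement of $\ket{A}$ or $\ket{B}$; you have this reversed. Moreover, translation preserves interval length. It therefore transfers the $d^{O(D)}$ rank bound only to intervals of length $|A|$ or $|B|$. An interval of any other length cannot be translated so that both endpoints sit near the boundary points. Strong subadditivity would bound the von Neumann entropy of every interval, but not its rank, so it does not rescue an exact MPS description.

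\textbf{The missing ingredient is Lemma~\ref{lemma:tsym_FDLU_bipsep_implies_fullsep}.} Your fallback points at exactly this, but does not carry it out. The paper's argument runs as follows.
\begin{enumerate}
\item Absorb bulk gates into $\ket{A}$ and $\ket{B}$, so that $U=U_L\otimes U_R$, with $U_L$ and $U_R$ supported within width $d$ of the two boundary points.
\item Translation invariance then gives $\ket{\Psi}=U_L(kd)U_R(kd)\ket{A+kd}\ket{B+kd}$ for every shift $kd$.
\item Applying all the disjointly supported inverses $U_L(kd)^\dagger$ makes the state uncorrelated across every point $x=1+kd$.
\item The overlapping-pure-marginals argument (Lemma~\ref{lemma:overlaping_pure_marginals}, as in Lemma~\ref{lemma:tsym_bipsep_implies_fullsep}) then turns this into a product of $d$-site blocks.
\end{enumerate}
Hence $\ket{\Psi}=\prod_k U_L(kd)\ket{\Phi}$ is SRE, and the theorem follows from Theorem~\ref{thm:rho_LRE} via Corollary~\ref{corollary:lre}. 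This representation is built from translates of a single gate acting on a block product state. That circuit structure is what supplies both the bounded interval ranks for all lengths and the uniformity you were trying to extract by gauge-fixing.
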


In order to prove this theorem, we introduce Lemma \ref{lemma:tsym_FDLU_bipsep_implies_fullsep} below, which relaxes Lemma \ref{lemma:tsym_bipsep_implies_fullsep} by allowing for short-range entanglement (see Appendix.~\ref{proof:tsym_FDLU_bipsep_implies_fullsep} for proof): 

\begin{lemma}\label{lemma:tsym_FDLU_bipsep_implies_fullsep}
    A translationally symmetric pure state $\ket{\Psi} \in \Hilb^{\otimes L}$ that is short-range bipartite entangled, i.e. $\ket{\Psi} = U \ket{A}\ket{B}$, with $U$ an FDLU and regions $A$ and $B$ large enough, 
    is a short-range entangled state preparable from a product state using an FDLU. 
\end{lemma}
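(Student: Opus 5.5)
The plan is to mimic the proof of Lemma~\ref{lemma:tsym_bipsep_implies_fullsep}, working with reduced density matrices instead of exact product structure. The key idea is a lightcone argument. Write $\ket{\Psi} = U\ket{A}\ket{B}$, where $U$ has depth $D$ and gate range $r$, so its lightcone spreads by $\ell = Dr$. Let $\partial$ denote the two cut points between the contiguous regions $A$ and $B$ on the ring.

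\textbf{Step 1: no correlations across the cut.} Take any two regions $X\subset A$ and $Y\subset B$ that each stay at distance more than $2\ell$ from $\partial$. Heisenberg evolution $U^\dagger O_X U$ is supported within $\ell$ of $X$, so it stays inside $A$, and likewise for $Y$ inside $B$. Hence $\langle O_X O_Y\rangle_\Psi = \langle O_X\rangle\langle O_Y\rangle$, i.e. $\rho_{XY}=\rho_X\otimes\rho_Y$. This holds for every such pair, including pairs far apart.

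\textbf{Step 2: translate the cut.} Since $T\ket{\Psi}\propto\ket{\Psi}$, every translate $T^k\ket{\Psi} = (T^kUT^{-k})\,T^k\ket{A}\ket{B}$ has the same form with the cut shifted by $k$. The factorization of Step 1 therefore holds for any two regions $X,Y$ that are separated by some translated cut by more than $2\ell$ on each side. Because $A$ and $B$ are large, every pair of regions separated by distance greater than $4\ell$ (modulo small boundary bookkeeping) qualifies. So $\ket{\Psi}$ has \emph{exactly} zero connected correlations beyond distance $R=O(\ell)$, for all multipartite local observables as well. Iterating the same argument over disjoint far-apart blocks gives $\rho_{X_1\cdots X_n}=\bigotimes_i\rho_{X_i}$ whenever the $X_i$ are pairwise more than $R$ apart.

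\textbf{Step 3: build the circuit.} Coarse-grain the ring into blocks of length $R$. The state's reduced density matrix on any block is determined by $U$ acting on a product-like region, so it has bounded rank. More directly, I would purify using the original structure: $U^\dagger$ maps $\ket\Psi$ to $\ket{A}\ket{B}$. Applying $T^kU^\dagger T^{-k}$ for a cut located far from the first cut, the region around the first cut becomes decoupled from the region around the second cut. I would then use the exact finite-range Markov/factorization structure to argue that $\ket{\Psi}$ is an exact finite-range-correlated pure state. Such a state has bounded entanglement across every cut (each cut can be moved to the location of some translated $A|B$ cut, giving entanglement $O(\ell)$). Hence $\ket\Psi$ is an MPS with bounded bond dimension and zero correlation length. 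By translation symmetry and the known classification of MPS with exactly vanishing correlations beyond finite range (quantum cellular automaton / "plaquette" form: a zero-correlation-length normal MPS is a product of local entangled pairs up to a local unitary), $\ket{\Psi}$ is an FDLU applied to a product state.

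\textbf{Main obstacle.} The hardest step is the last one: turning exact vanishing of long-distance correlations together with bounded cut entanglement into an explicit FDLU preparation. This is also where the GHZ-type non-normal components must be ruled out, and exact factorization of $\rho_{XY}$ does exactly that. I would first try to cite the fixed-point (renormalization) classification of MPS: bounded-bond-dimension, zero-correlation-length states are finite-depth circuits applied to product states. Failing that, I would try to construct the disentangler directly by applying $T^kU^\dagger T^{-k}$ for $O(1)$ distinct shifts $k$ to successive blocks.
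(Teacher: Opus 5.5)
Your Steps~1 and~2 are sound. The light-cone argument gives exact factorization $\rho_{XY}=\rho_X\otimes\rho_Y$ across any translate of the $A|B$ cut, up to the bookkeeping you mention. The gap is Step~3, which carries the whole content of the lemma and is not actually carried out.

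\textbf{Bounded entanglement.} You justify it by moving each cut onto a translated $A|B$ cut. On a ring, however, a translation moves both endpoints of an interval rigidly, so this only controls intervals of length exactly $|A|$. Other lengths need a separate argument, for example subadditivity combined with your Step~2 factorization.

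\textbf{The MPS classification.} More seriously, invoking the classification of zero-correlation-length MPS presupposes a uniform, normal tensor with $L$-independent bond dimension whose blocked transfer matrix is rank one. These are tensor-level hypotheses. You give no way to extract them from exact factorization of a single state on a single finite ring, and doing so is essentially the lemma itself.

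\textbf{The fallback.} It fails as stated. We have $T^kU^\dagger T^{-k}\ket{\Psi}\propto T^k\ket{A}\ket{B}$, and a second translated copy of the full circuit $U^\dagger$ then undoes nothing. Copies of the whole circuit act on every site, so they neither commute nor compose into a disentangler.

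The missing idea, which is how the paper argues, is to localize $U$ at the cuts before translating. Gates of $U$ that are not in the causal future of a cut-crossing gate form a past-closed set. They can therefore be executed first and absorbed into redefined $\ket{A}$ and $\ket{B}$. Without loss of generality, $U=U_L\otimes U_R$, with each factor supported in a window of width $d=O(\ell)$ around one boundary point.

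Translated copies $U_L(kd)^\dagger$ spaced by $d$ then have disjoint supports and commute. Relative to the $k$-th translated bipartition, every other copy acts inside $A+kd$ or inside $B+kd$, except near the distant right cut. Hence in $\ket{\Phi}=\prod_k U_L(kd)^\dagger\ket{\Psi}$, every window around every cut $kd$ factorizes across that cut, and long intervals factorize into $d$-site blocks.

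It remains to show the blocks are pure. The paper's justification of this is brief, but a rank count does it. On a slightly shrunken translate of $A$, $\rho$ has rank bounded independently of $|A|$, yet it is a product of $\sim|A|/d$ translation-equivalent block marginals. So $\ket{\Phi}$ is a product of $d$-site block states, and $\ket{\Psi}=\prod_k U_L(kd)\ket{\Phi}$ is prepared by a depth-two circuit of width-$d$ gates.

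This route is constructive and needs no MPS theory. To keep your route instead, you would need an actual proof that exact finite-range factorization plus bounded Schmidt rank implies FDLU preparability.
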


With Lemma \ref{lemma:tsym_FDLU_bipsep_implies_fullsep} established, Theorem \ref{thm:Bipartite_LRE} directly follows from Theorem \ref{thm:rho_LRE}.

Finally, we note that the mechanism for long-range entanglement resulting from the dimensional constraint can already lead to many other examples beyond the MMIS $\rho_\tsym \propto P_\tsym$. For instance, for qubit systems ($d=2$), the state $ \rho \propto (1+ \prod_i X_i)P_\tsym$ with both spin flip and translation strong symmetries remains LRE since the matrix rank still grows exponentially with $L$. More generally, one can impose any finite symmetry (e.g., global $\Z_n$ or spatial reflection) to arrive at the same conclusion. Moreover, in higher spatial dimensions, similar arguments imply that the MMIS with translation symmetry along any spatial direction is not a mixture of uniform projected entangled pair states (PEPSs).

\textbf{Correlation structure.} So far, we have shown that $\rho_\tsym$ is long-range entangled (LRE), but its precise correlation and entanglement structure have yet to be determined. Indeed, LRE states can exhibit very different behaviors: they can exhibit volume-law entanglement scaling as in Haar random states, or appear as area-law entangled ground states of local Hamiltonians (as in the GHZ state, topological orders, or quantum critical states), with varying correlation and entanglement structures. In the following, we will present results on the correlation structures of $\rho_{\tsym}$. Detailed calculations and proofs can be found in Appendix.~\ref{appendix:correlation}. 

As a warm-up, we first notice that $\rho_{\tsym}$ has a weak symmetry under $U= u^{\otimes L}$, with $u$ being any single-site unitary operator:  $U \rho_{\tsym} U^{\dagger}  = \rho_{\tsym}$. This constrains the correlation structure of $\rho_\tsym$. For instance, consider an operator of the form $O_{a,b}=\prod_{i=1}^{L} X_i^{a_i} Z_{i}^{b_i}$, where $X_i$ and $Z_i$ are the generalized Pauli operators acting on $i$-th qudit\footnote{ $X$ and $Z$ are defined as follows: $X\ket{q}= \ket{q+1 ~\text{mod} ~d} $ and $Z\ket{q}= e^{\frac{2\pi iq}{d } }\ket{q}$, with $q= 0,1,2, ...  d-1$ labeling the computational basis of a $d$-dimensional qudit.}, and $a_i,b_i\in \{ 0,1,2,... d-1 \}$ for all $i$. The weak symmetry generated by $\prod_{i=1}^L X_i$ implies that $\langle O_{a,b} \rangle$ vanishes when $\sum_{i=1}^{L} b_i \neq 0 $ mod $d$. Similarly, the weak symmetry generated by $\prod_{i=1}^L Z_i$ implies that $\langle O_{a,b} \rangle$ vanishes when $\sum_{i=1}^{L} a_i \neq 0 $ mod $d$. Namely, any operator charged under the weak symmetry has a vanishing expectation value. In particular, this implies that the single-site reduced density matrix must be a maximally mixed state. More generally, we prove the following theorem: 

\begin{theorem}\label{thm:rdm}
Given the translation symmetry MMIS $\rho_\tsym$, the reduced density matrix $\Tr_{A^C}[\rho_\tsym]$ on any subset of sites $A$ is exponentially close to the maximally mixed state:
\begin{equation}
\norm{\Tr_{A^C}[\rho_\tsym] - \frac{\one}{d^{|A|}}}_1  \leq  O(Ld^{-(L-|A|)/2}),
\end{equation}
where $L$ is the total system size, and $|A|$ is the number of sites contained in $A$. In particular, when fixing the subregion size fraction $f= \frac{|A|}{L}$ and taking the thermodynamic limit $L\to\infty$, the local reduced density matrix is exactly a maximally mixed state for any $f<1$. 
\end{theorem}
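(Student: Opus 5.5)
We write $\rho_\tsym = P_\tsym/\dim(\tsym)$ with $P_\tsym = \frac{1}{L}\sum_{k=0}^{L-1} T^k$, so that
\begin{equation}
\Tr_{A^C}[\rho_\tsym] = \frac{1}{L\,\dim(\tsym)} \sum_{k=0}^{L-1} \Tr_{A^C}[T^k].
\end{equation}
The $k=0$ term gives $d^{L-|A|}\one_A$. By Lemma~\ref{lemma:tsym_dim}, $L\dim(\tsym) = d^L + \sum_{k\neq 0} d^{\gcd(k,L)}$, so this term alone equals $\frac{\one}{d^{|A|}}\cdot \frac{d^L}{L\dim(\tsym)}$. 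That is the maximally mixed state times a factor $1-O(\epsilon)$, where
\begin{equation}
\epsilon = d^{-L}\sum_{k\neq 0} d^{\gcd(k,L)} \le L\, d^{-L/2}.
\end{equation}
The bound uses $\gcd(k,L)\le L/2$ for $k\neq 0$. The remaining work is to show that the partial traces of the nontrivial translations are small compared with $d^L$ in trace norm.

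The key step is to compute $\Tr_{A^C}[T^k]$ diagrammatically. $T^k$ is a permutation operator for the cyclic permutation $\pi_k: i\mapsto i+k$, which splits into $\gcd(k,L)$ cycles of length $L/\gcd(k,L)$. Tracing out a site $j\in A^C$ contracts the wire entering $j$ with the wire leaving $j$. This removes $j$ from its cycle, unless $j$ is a fixed point, which never happens for $k\neq 0$.

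After tracing out all of $A^C$, one therefore gets
\begin{equation}
\Tr_{A^C}[T^k] = d^{c_k}\, \Pi_{\tau_k},
\end{equation}
where $\Pi_{\tau_k}$ is the permutation operator on $A$ of the induced permutation $\tau_k$ (each element of $A$ is sent to the next element of $A$ along its cycle). Here $c_k$ is the number of cycles of $\pi_k$ that lie entirely in $A^C$. Such a closed loop contributes a factor $d$, and there are at most $\gcd(k,L)$ of them.

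A permutation operator is unitary, so $\norm{\Pi_{\tau_k}}_1 = d^{|A|}$. Hence
\begin{equation}
\norm{\Tr_{A^C}[T^k]}_1 \le d^{c_k + |A|}.
\end{equation}

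The main obstacle is bounding $c_k$ sharply enough to get $d^{-(L-|A|)/2}$ rather than something weaker. We argue as follows. Each cycle contained in $A^C$ has length $L/g$ with $g=\gcd(k,L)\ge$ nothing smaller than the cycle count, so $c_k \le |A^C|/(L/g)$. Combined with $c_k\le g$ and $L/g\ge 2$, this gives
\begin{equation}
c_k \le \min\bigl(g,\; g(L-|A|)/L\bigr) \le (L-|A|)/2.
\end{equation}
The last inequality holds because $g\le L/2$. Consequently $\norm{\Tr_{A^C}[T^k]}_1 \le d^{|A| + (L-|A|)/2}$.

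Summing over the $L-1$ nontrivial $k$, dividing by $L\dim(\tsym)\ge d^L$, and adding the normalization discrepancy $O(\epsilon)$ from the $k=0$ term, we obtain
\begin{equation}
\norm{\Tr_{A^C}[\rho_\tsym] - \frac{\one}{d^{|A|}}}_1 \le O(L\, d^{-(L-|A|)/2}) + O(L\, d^{-L/2}).
\end{equation}
The second term is dominated by the first, which gives the stated bound.

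For the thermodynamic statement, fix $f=|A|/L<1$. The bound becomes $O(L\,d^{-(1-f)L/2})$, which tends to zero as $L\to\infty$.
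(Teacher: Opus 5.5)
Your proposal is correct and follows essentially the paper's argument: the same string-diagram reduction $\Tr_{A^C}[T^k] = d^{c_k}\,\Pi_{\tau_k}$, where $c_k$ counts the cycles of length $L/\gcd(k,L)$ lying entirely in $A^C$. It also uses the same bound $c_k \le \gcd(k,L)(1-|A|/L) \le (L-|A|)/2$ (from $\gcd(k,L)\le L/2$) and the same separate $O(Ld^{-L/2})$ normalization error. The only difference is cosmetic: you bound $\norm{\Tr_{A^C}[T^k]}_1 \le d^{c_k+|A|}$ directly by the triangle inequality, whereas the paper bounds $|\Tr[T^k O_A]|$ by $d^{c_k}\norm{O_A}_\infty d^{|A|}$ and then uses the duality $\norm{X}_1 = \sup_{\norm{O}_\infty\le 1}|\Tr(XO)|$.
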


Physically, this shows that the entanglement and correlations are encoded only nonlocally in the global structure of the mixed state and are invisible to any local subsystem. For example, with the methods described in Appendix.~\ref{appendix:correlation}, we find the expectation value of the two-point Pauli-Z operator $Z_i Z_j^{-1}$ for $L$ prime to be
\begin{equation}
       \forall i\neq j, \ \Tr [\rho_{\tsym} Z_i Z_j^{-1} ]   = \frac{d (L-1)  }{  d^L  + d (L-1) },   
\end{equation}
which decays exponentially with $L$. In fact, one can show that $\Tr [\rho_{\tsym}  \prod_{i=1}^L Z_i^{b_i}]$ takes the value given on the right-hand side of the equation above whenever $\sum_{i=1}^{L} b_i =0 ~\text{mod}~d$.

\textbf{Information-theoretic measures}. After presenting the correlation structure of $\rho_\tsym$, we now explore the properties of the state using information-theoretic measures. To quantify the bipartite entanglement in such a mixed state, one natural measure is the entanglement of formation \cite{eof_1996}, defined as the averaged entanglement entropy of a region $A$ minimized over all possible pure-state decompositions of the mixed state $\rho$:  
\begin{equation}\label{eq:eof}
E_f(\rho) =  \min_{ \{p_\alpha, \ket{\psi_\alpha}\}_\rho }\sum_\alpha p_\alpha  S_A ( \ket{\psi_\alpha}),
\end{equation}
where $\{p_\alpha, \ket{\psi_\alpha}\}_\rho$ denotes the set of decompositions of $\rho = \sum_\alpha p_\alpha \ketbra{\psi_\alpha}{\psi_\alpha}$. While we are not able to provide a precise estimate of $E_f (\rho_\tsym)$, it is straightforward to derive an upper bound:
\begin{theorem}\label{thm:eof}
The entanglement of formation $E_F$ of $\rho_\tsym$ w.r.t. any bipartition is upper bounded by $\log L$. 
\end{theorem}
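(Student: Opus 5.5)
Since $E_f$ is a minimum over decompositions, any explicit decomposition of $\rho_\tsym$ gives an upper bound. I will use the ensemble already written in the introduction: $\rho_\tsym \propto \sum_{\boldsymbol{\sigma}} P_\tsym \ketbra{\boldsymbol{\sigma}}{\boldsymbol{\sigma}} P_\tsym$. Grouping basis states $\boldsymbol{\sigma}$ by their orbit under cyclic shifts, $P_\tsym\ket{\boldsymbol{\sigma}}$ depends only on the orbit $[\boldsymbol{\sigma}]$. It is proportional to the normalized orbit state
\begin{equation}
\ket{\phi_{[\boldsymbol{\sigma}]}} = \frac{1}{\sqrt{m}}\sum_{k=0}^{m-1} T^k\ket{\boldsymbol{\sigma}},
\end{equation}
where $m$ is the orbit size, which divides $L$. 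These orbit states are orthonormal, because distinct orbits involve disjoint sets of computational basis states. They span $\tsym$, because $P_\tsym$ maps the computational basis onto them. Hence $\rho_\tsym = \frac{1}{\dim\tsym}\sum_{[\boldsymbol{\sigma}]}\ketbra{\phi_{[\boldsymbol{\sigma}]}}{\phi_{[\boldsymbol{\sigma}]}}$ is a legitimate pure-state decomposition. Equivalently, it is the eigendecomposition of the projector.

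Next I bound the entanglement entropy of each $\ket{\phi_{[\boldsymbol{\sigma}]}}$ across an arbitrary bipartition $A|B$. Each translate $T^k\ket{\boldsymbol{\sigma}}$ is a computational basis product state $\ket{a_k}_A\ket{b_k}_B$. So the orbit state is a superposition of $m\le L$ product states, and its Schmidt rank is at most $m$. Concretely, grouping the terms by distinct $A$-configurations gives at most $m$ Schmidt terms. Therefore $S_A(\ket{\phi_{[\boldsymbol{\sigma}]}}) \le \log(\text{Schmidt rank}) \le \log m \le \log L$. This argument does not require $A$ to be contiguous, so the bound holds for any bipartition.

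Averaging over the decomposition with uniform weights then gives $E_f(\rho_\tsym)\le \sum_{[\boldsymbol{\sigma}]} \frac{1}{\dim\tsym}\log L = \log L$, which is the claim. The only point needing care is that the orbit states are genuinely orthonormal and exhaust $\tsym$, so that the uniform mixture really equals $\rho_\tsym$; this follows immediately from the disjointness of orbits.
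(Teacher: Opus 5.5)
Your proposal is correct and follows the paper's argument. It uses the same ensemble $\{P_\tsym\ket{\boldsymbol{\sigma}}\}$ and bounds each member's Schmidt rank by its number of cyclic translates, at most $L$; grouping these into orthonormal orbit states (the eigendecomposition of $P_\tsym$) just makes explicit the weights and normalization that the paper leaves implicit.
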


To prove this theorem, consider the decomposition introduced right after \cref{eq:intro_rho_tsym}, $\rho_{\tsym} \propto \sum_{\boldsymbol{\sigma} } P_\tsym \ketbra{\boldsymbol{\sigma}}{\boldsymbol{\sigma}} P_\tsym$. For each $P_\tsym\ket{\boldsymbol{\sigma}}$, the reduced density matrix has a rank bounded by $L$, so that the bipartite entanglement entropy of $P_\tsym\ket{\boldsymbol{\sigma}}$ is bounded by $\log L$. This in turn implies the $\log L $ upper bound for the entanglement of formation.

We contrast this bound with the average entanglement of Haar random pure states with translation symmetry, which was found to follow volume-law bipartite entanglement~\cite{nakata_generic_2020}, instead of the much lower $O(\log L)$ from the decomposition above.

Another quantity of interest is the conditional mutual information (CMI) $I(A,C|B)$, which quantifies the information shared between two disconnected regions $A$ and $C$ mediated by the intermediate region $B$. It plays an essential role in the characterization of mixed-state phases of matter \cite{hsieh_markov_2025, sang_mixedstate_2025}. CMI is defined as $I(A,C|B) = S_{AB} + S_{BC} -S_B - S_{ABC}$, where $S_X$ denotes the von-Neumann entropy of the reduced density matrix on $X$. When $A, B, C$ all have a finite system fraction as $L\to  \infty$, Theorem \ref{thm:rdm} implies that the reduced states on $A\cup B$, $B\cup C$, and $B$ are simply the maximally mixed state on the corresponding region. It follows that 
\begin{equation}
I(A,C|B) =  \log L    
\end{equation}    
for $L \to \infty$. Physically, the non-vanishing CMI indicates the existence of certain global correlations that cannot be locally reconstructed (i.e., non-Markovian). Here, this global constraint comes from the strong translation symmetry.

Finally, we discuss the operator space entanglement, which captures the amount of (quantum and classical) correlations across a bipartition \cite{Zanardi_2001_operator,Zanardi_2002_operator,prosen_2007_operator_ising,prosen_2007_operator_xy,Dubail_2017}. It is closely related to the computational cost for simulating the mixed states using matrix product operators (MPO) \cite{2026_mpo_simulation}. To this end, we first vectorize the mixed state  $\rho_\tsym \propto \sum_{\boldsymbol{\sigma}}  P_\tsym \ketbra{\boldsymbol{\sigma}}{\boldsymbol{\sigma}}$ as a state defined in a doubled Hilbert space:
\begin{align} \label{eq:doubled}
\kett{\rho_{\tsym}} & \propto \sum_{\boldsymbol{\sigma}} P_\tsym \otimes \one \ket{\boldsymbol{\sigma}} \ket{\boldsymbol{\sigma}} \\
& \propto \sum_{\boldsymbol{\sigma}} (\one+T + ...+ T^{L-1}) \otimes \one \ket{\boldsymbol{\sigma}} \ket{\boldsymbol{\sigma}}.
\end{align}
where the first ket and the second ket denote the state in $\mathcal{H}^{\otimes L}$ and $\mathcal{\overline{H}}^{\otimes L}$ respectively. In other words, it is a superposition of Bell-pair configurations across two copies of the 1D ring, with $L$ distinct patterns of shifts.

By choosing a subsystem $A \cup \overline{A}$ consisting of the sites $\{ 1, 2, \cdots,  |A|\} \cup   \{ \overline{1}, \overline{2}, \cdots, \overline{  |A|  }\}$, the Rényi-$k$ operator space entanglement of $\rho_\tsym$ is given by the Rényi-$k$ entropy of the reduced density matrix on $A\cup \overline{A}$: $S_k = \frac{1}{1-k} \log \Tr \rho_{A \cup \overline{A}}^{k }$. In Appendix.~\ref{appendix:operator_space_entanglement}, we analytically find that for $\log L  \ll  |A| \leq \frac{L}{2}$, the von-Neumann entropy takes the form

\begin{equation}
S_{k=1}= 2 |A|\left(1-\frac{|A|}{L}\right) \log d + 2\frac{|A|}{L}\log L + O(1).
\end{equation}
Namely, the leading term follows a volume-law scaling, and interestingly, with a coefficient depending on the subregion fraction. This volume-law scaling implies that this doubled state cannot be efficiently simulated using MPS techniques \cite{mps_entropy_cost_2008}. Moreover, we find that all other Rényi entropies can be analytically calculated as well, under the same condition for $|A|$ specified above. For $k>1$, $S_k$ saturates to $\frac{k}{k-1} \log L$, i.e., a quantity that is a constant of $|A|$, but grows with the total system size $L$. For $k<1$, $S_k$ grows maximally as $2 |A| \log d$. 

\textbf{SW-SSB of lattice translation.} Strongly symmetric many-body mixed states can exhibit a novel phenomenon without pure-state counterparts known as the strong-to-weak spontaneous symmetry breaking (SW-SSB) \cite{lee_2023_swssb,you_2024_swssb,lessa_2025_swssb}, in which the strong symmetry can be spontaneously broken down to the weak one. The canonical example is the strong $\mathbb{Z}_2$ symmetric maximally mixed state $ \rho\propto \one + \prod_i X_i$, which, despite lacking true long-range order $\lim_{|i-j | \to \infty } \Tr [\rho Z_i Z_j]=0 $, it exhibits long-range order for certain non-linear observables such as the Rényi-2 correlator $R^{(2)} =\bbrakett{ \rho | Z_i \overline{Z}_i  Z_j \overline{Z}_j    | \rho  } =1 $, where $\kett{\rho} $ is the doubled state (associated with $\rho$) that has been properly normalized as $\bbrakett{\rho|\rho} = 1$. 

For simplicity, we choose a particular non-local operator $O_k= \frac{1}{L} \sum_{x=0}^{L-1} e^{ikx} Z_x$ that is charged under the lattice translation, $T^\dagger O_k T = e^{ik} O_k$. Then, we introduce the following \textit{variance-normalized} Rényi-2 correlator defined as
\begin{equation}
\tilde{R}_k^{(2)} =  \frac{  \bbra{\rho_\tsym}  O_k \overline{O}_{-k}  O_{-k} \overline{O}_{k}\kett{ \rho_\tsym}   }{  \sqrt{\bbrakett{   \rho_\tsym | (O_k \overline{O}_{-k}  O_{-k} \overline{O}_{k})^2| \rho_\tsym   }  }     }.
\end{equation}

This variance-normalized correlator $\tilde{R}_k^{(2)}$ is more appropriate to diagnose the SSB of translation due to the following reasoning: an SSB order parameter can be thought of as a probe for a response w.r.t. inserting symmetry charges, and this is exactly captured by $\tilde{R}_k^{(2)}$, which is the wavefunction overlap between two wavefunctions with and without the insertion of charged operators. Importantly, computing this overlap requires appropriate normalization given the non-unitary and non-local nature of $O_k$. This additional normalization factor distinguishes $\tilde{R}_k^{(2)}$ from the Rényi-2 correlator $R^{(2)}$ commonly defined in the literature \cite{lee_2023_swssb, lessa_2025_swssb, you_2024_swssb}. For most previously considered cases, this distinction does not change whether a state has SW-SSB or not, since either the charged operator is unitary or the difference in normalization amounts to a constant. For the case of the MMIS $\rho_\tsym$, however, this difference matters: as shown in Appendix. \ref{appendix:main_swssb}, its variance-normalized correlator $\tilde{R}_k^{(2)}$ is a non-zero constant, implying SW-SSB of the translation symmetry, despite $R^{(2)}$ vanishing in the thermodynamic limit. This also makes $\rho_\tsym$ follow the same SW-SSB pattern previously observed in MMISs associated with other symmetries, such as the $\one + \prod_i X_i$ state.

\textbf{State preparation.} Finally, we show that the MMIS of translation $\rho_\tsym$ can naturally emerge under a simple Markovian dynamics that preserves strong translation symmetry. We consider the Lindbladian $\mathcal{L}(\rho) = -i [H, \rho] + \sum_i \gamma_i (L_i \rho L_i^\dagger - \frac{1}{2} \{L_i^\dagger L_i, \rho\})$ of a one-dimensional ring of qubits generated by the transverse-field Ising Hamiltonian $H$, and a collective two-body jump operator $L_{XY}$:
\begin{equation}\label{eq:linblad}
\begin{split}
H & = - \sum_i Z_i Z_{i+1} - \lambda \sum_i X_i \\
    L_{XY} & = \sum_{i} X_i Y_{i+1}. 
\end{split}
\end{equation}

\begin{figure}[t]
    \centering
    \begin{tikzpicture}

\begin{axis}[
    width=\linewidth,
    height=5cm,
    xlabel={$t$},
    ylabel={$F(\rho(t), \rho_\mathcal{T})$},
    xlabel style={font=\normalsize},
    ylabel style={font=\normalsize},
    xmin=-0.5, xmax=10.5,
    ymin=0.15, ymax=1.05,
    grid=major,
    grid style={gray!20},
    axis background/.style={fill=white},
    legend pos=south east,
    legend cell align={left},
    legend style={nodes={scale=0.9, transform shape}},
    tick label style={font=\small},
    xtick={0, 2.5, 5.0, 7.5, 10.0},
    ytick={0.2, 0.4, 0.6, 0.8, 1.0},
    axis lines=left,
    axis line style={-},
    thick,
]

\addplot[
    color=cyan,
    line width=1.5pt,
] table[x index=0, y index=1] {sym_xy_data.txt};

\draw[dashed, cyan, line width=1.2pt] (axis cs:-0.5,1) -- (axis cs:10.5,1);

\coordinate (insetPosition) at (axis cs: 4.5, 0.4);

\end{axis}

\begin{axis}[
    at={(insetPosition)},
    anchor=south west,
    width=0.53\linewidth,
    height=3.3cm,
    ymode=log,
    ylabel={$1 - F$},
    ylabel style={font=\small},
    tick label style={font=\footnotesize},
    grid=major,
    grid style={gray!20},
    axis background/.style={fill=white},
    xmin=2.5, xmax=10.2,
    xtick={4,6,8,10},
    ymin=1e-5, ymax=1e-1,
    axis lines=left,
    axis line style={-}
]

\addplot[
    color=cyan,
    line width=1.5pt,
] table[x index=0, y expr={1 - \thisrowno{1}}] {sym_xy_data.txt};

\end{axis}

\end{tikzpicture}
    \caption{Fidelity between the MMIS $\rho_\tsym$ and the evolved state $\rho(t) = e^{\mathcal{L}t} \rho_0$ for the Lindbladian specified by Eq.~\eqref{eq:linblad}.  We choose the initial state $\rho_0 = \ketbra{0}{0}^{\otimes L}$ with $L=9$ qubits. The Lindbladian parameters are: $\lambda = 1$ and $\gamma_{XY} = 0.5$. The inset indicates that the initial state approaches the steady state exponentially fast.}
    \label{fig:plot_sym_xy}
\end{figure}

In contrast to the usual Lindbladian that involves a set of site-dependent local jump operators, the jump operator $L_{XY}$ here is a sum of local terms. This corresponds to the scenario where a single environment degree of freedom simultaneously couples to all physical sites, hence describing a collective dissipation process (See Fig.~\ref{fig:optics_state_spaces}a). This feature is essential for preserving the strong translation symmetry throughout the dynamics. Moreover, this particular jump operator prevents $\mathcal{L}$ from having additional strong symmetries. A simpler choice such as $\sum_i X_i$, for example, would preserve spatial reflection and spin flip symmetries as well. Therefore, the Lindbladian above describes a unital evolution with only strong translation symmetry, so that the MMIS $\rho_\tsym$ arises as the unique steady state in the zero momentum sector (see Ref.~\cite{zhang_stationary_2020, yoshida_uniqueness_2024, Sahu_2024_mmis} for discussion related to unital channels). This is confirmed based on the numerics shown in Fig. \ref{fig:plot_sym_xy}.

\textbf{Summary and discussion:} In this work, we have revealed a new mechanism for long-range entanglement of mixed states: whenever the rank of a mixed state $\rho$ with a strong symmetry $S$ is greater than the dimension of the space spanned by $S$-symmetric short-range entangled states, the mixed state $\rho$ must be LRE. This mechanism is intrinsic to mixed states without any pure-state counterpart, therefore providing a new perspective to explore the landscape of LRE mixed-states and likely leading to more examples beyond the ones we discuss in this work.

Within this new mechanism, we discuss the MMIS of translation $\rho_\tsym$ and several of its interesting features. In particular, it provides an example of SW-SSB of a non-onsite and non-gaugeable symmetry, whose diagnostic requires certain non-linear observables based on non-local charged operators. Exploring the stability of such SW-SSB and examples in higher dimensions is a natural future direction. 

As we have shown, the emergence of the MMIS $\rho_\tsym$ as a steady state requires collective dissipation in order to preserve strong translation symmetry. It would be interesting to explore more detailed properties of such Linbladians from other perspectives, such as the mixing time and the spectrum. Relatedly, the steady state of any Markovian dissipative dynamics always admits a natural holographic wavefunction in one higher spatial dimension as discussed in Ref.~\cite{lu_2025holographic}. Exploring the physical aspects of such a wavefunction may potentially provide new insights into such dynamics and the possible LSM-type constraints associated with it. 

Finally, as we have emphasized, the LRE mechanism based on dimensional constraints is a phenomenon specialized to mixed states. It is therefore natural to suspect that $\rho_\tsym$ is an intrinsically mixed quantum state, defined as one that is not two-way connected to any pure (either SRE or LRE) state through finite-depth local quantum channels. One such example is the one-dimensional CZX mixed state: $\rho_{\text{CZX}} \propto \one + \prod_i  X_i  \prod_i \text{CZ}_{i,i+1}$, due to its intriguing feature of being bipartite separable but tripartite long-range entangled \cite{lessa_2025_Multipartite}. Whether $\rho_\tsym$ and its generalizations are intrinsically mixed in this sense is an interesting question, which we leave for future work. 

\textbf{Note added.} While finalizing this manuscript, we became aware of a related work \cite{to_appear}, which will appear in the same arXiv posting.

\textbf{Acknowledgments}: We thank Chong Wang for helpful discussions. We also thank the programs ``Learning the Fine Structure of Quantum Dynamics in Programmable Quantum Matter'' and ``Noise-robust Phases of Quantum Matter'' at the Kavli Institute for Theoretical Physics (KITP), supported in part by NSF Grant No. PHY-2309135, as well as the workshop ``Bridging Classic and Contemporary Perspectives on Open Quantum Systems'' at the Simons Center for Geometry and Physics, where part of this work was completed. T.-C.L acknowledges the support of the RQS postdoctoral fellowship through the National Science Foundation (QLCI grant OMA-2120757). L.A.L. acknowledges support from the Natural Sciences and Engineering Research Council of Canada (NSERC) under Discovery Grant No. RGPIN-2020-04688 and No. RGPIN-2018-04380.  This work was also supported by an Ontario Early Researcher Award. Research at Perimeter Institute is supported in part by the Government of Canada through the Department of Innovation, Science and Industry Canada and by the Province of Ontario through the Ministry of Colleges and Universities.

\let\oldaddcontentsline\addcontentsline%
\renewcommand{\addcontentsline}[3]{}%
\bibliography{bibliography}
\let\addcontentsline\oldaddcontentsline%

\renewcommand\refname{Reference}

\clearpage
\widetext

\begin{center}
\textbf{\large Supplemental Material} \\~\\
 	\textit{   }
\end{center}

\tableofcontents

\section{Lemmas and proofs involved in Theorem \ref{thm:tsym_entangled}}

\subsection{The spans of homogeneous pure product states}\label{sec:span_product}

\begin{lemma}\label{lemma:homog_prod_span_psym}
    The set of homogeneous pure product states $S = \{\ket{\psi}^{\otimes L} \mid \ket{\psi} \in \Hilb \}$ spans the permutationally symmetric subspace: $\spn{S} = \psym \subseteq \tsym$.  
\end{lemma}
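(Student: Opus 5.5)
The proof has two inclusions: $\spn S \subseteq \psym$ and $\psym \subseteq \spn S$. The inclusion $\psym\subseteq\tsym$ is immediate, since $T$ is the cyclic permutation of sites and so fixes every permutation-invariant vector.

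\emph{Easy inclusion.} Every $\ket{\psi}^{\otimes L}$ is invariant under all site permutations $\pi\in S_L$, and $\psym$ is a linear subspace. Hence $\spn S\subseteq\psym$.

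\emph{Reverse inclusion.} The plan is to exhibit a spanning set of $\psym$ inside $\spn S$. Let $P_\psym=\frac{1}{L!}\sum_{\pi}\pi$ be the symmetrizer. Then $\psym$ is spanned by the symmetrized basis vectors $P_\psym\ket{\boldsymbol\sigma}$, and these are labelled by occupation numbers $(n_0,\dots,n_{d-1})$ with $\sum_a n_a=L$, which also recovers the count in Lemma \ref{lemma:psym_dim}. Take $\ket{\psi}=\sum_a x_a\ket{a}$ with $x\in\C^d$ arbitrary. The multinomial expansion gives
\begin{equation}
\ket{\psi}^{\otimes L}=\sum_{n}\binom{L}{n_0,\dots,n_{d-1}}\,x_0^{n_0}\cdots x_{d-1}^{n_{d-1}}\,P_\psym\ket{\boldsymbol\sigma_n},
\end{equation}
where $\ket{\boldsymbol\sigma_n}$ is any basis state with occupation $n$.

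The key step is to show that the symmetric basis vectors lie in the span of these product vectors as $x$ varies. Suppose a linear functional $\phi$ on $\psym$ annihilates every $\ket{\psi}^{\otimes L}$. Then the polynomial $\sum_n \binom{L}{n}\phi(P_\psym\ket{\boldsymbol\sigma_n})\,x^n$ vanishes identically on $\C^d$. A polynomial vanishing on all of $\C^d$ has all coefficients zero, and since the monomials $x^n$ are linearly independent and the multinomial coefficients are nonzero, every $\phi(P_\psym\ket{\boldsymbol\sigma_n})=0$. So $\phi=0$ on $\psym$, and therefore $\spn S=\psym$.

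An alternative is to extract each coefficient explicitly by a finite Fourier sum: choose $x_a=\omega^{k_a}$ with $\omega=e^{2\pi i/(L+1)}$ and average over the $k_a$. I will use the polynomial-identity argument since it is shorter. The only point needing care is that the linear-independence argument requires genuinely complex (or at least infinitely many) values of $x$, which holds here because $\ket{\psi}$ ranges over all of $\Hilb$. Normalization of $\ket{\psi}$ does not matter, since scaling is absorbed into the span.
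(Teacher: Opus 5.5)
Your proof is correct, but it takes a different route from the paper's. The paper forms the Haar twirl $M=\int(\ketbra{\psi}{\psi})^{\otimes L}\,d\psi$. It observes that $M$ commutes with every $V^{\otimes L}$ and is strongly permutation invariant, and then uses Schur--Weyl duality and Schur's lemma to conclude that $M$ is supported only on the $\lambda=(L)$ block, i.e.\ $M\propto P_\psym$. Since the range of $M$ lies in $\spn S$, the spans agree.

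You instead expand $\ket{\psi}^{\otimes L}$ in the symmetrized occupation-number basis. Your coefficients are right, since $\sum_{\boldsymbol\sigma:\,\mathrm{occ}(\boldsymbol\sigma)=n}\ket{\boldsymbol\sigma}=\binom{L}{n}P_\psym\ket{\boldsymbol\sigma_n}$. You then use the polynomial identity theorem to show that no nonzero functional on $\psym$ annihilates $S$. This argument is elementary and needs no representation theory. Its Fourier variant even gives an explicit finite spanning family of $(L+1)^d$ product states, in the spirit of the paper's W-state example.

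What the paper's route buys in addition is the operator identity $\int(\ketbra{\psi}{\psi})^{\otimes L}\,d\psi\propto P_\psym$. This is stronger than equality of spans, and it is exactly what Proposition~\ref{prop:psym_MMIS_fullsep} reuses to write $\rho_\psym$ as a convex mixture of $(\ketbra{\psi}{\psi})^{\otimes L}$. Equality of spans alone does not immediately give such a positive decomposition. For example, the uniform mixture of the $(L+1)^d$ phase states from your Fourier variant weights the normalized Dicke projectors by $\binom{L}{n}$ rather than uniformly.

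A minor aside: your caveat about needing complex $x$ is unnecessary. Vanishing on $\R^d$, or on a grid of $L+1$ values in each coordinate, already forces all coefficients to vanish.
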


\begin{proof}
    Consider the uniform mixture of all homogeneous pure product states:
    \begin{align}
        M & \defeq \int \ketbra{\psi}{\psi}^{\otimes L} \ d\ket{\psi} \\
        & = \int U^{\otimes L} \ketbra{0\cdots 0}{0 \cdots 0} (U^\dagger)^{\otimes L} \ dU,
    \end{align}
    where the integral is over the Haar measure.  We aim to show that this operator is proportional to the projector onto the fully symmetric subspace: $M \propto P_\psym$.

    By Schur-Weyl duality between the on-site unitary group $U(\Hilb)$ and the site permutation group $S_L$, the total Hilbert space $\Hilb^{\otimes L}$ decomposes as
    \begin{equation}\label{eq:SW-decomp}
        \Hilb^{\otimes L} = \bigoplus_{\lambda} \sigma_{\lambda} \otimes \pi_\lambda, 
    \end{equation}
    where the direct sum is over all Young diagrams $\lambda$, which label the irrep $\sigma_\lambda$ of $U(\Hilb)$, and $\pi_\lambda$ of $S_n$. 

    By invariance under left multiplication of the Haar measure, $M$ commutes with all on-site unitaries $V^{\otimes L}$. Thus, by Schur's lemma, $M$ acts as a scalar multiple of the identity in each of the $\sigma_\lambda$ irreps:
    \begin{equation}
        M = \bigoplus_{\lambda} \one_{\sigma_{\lambda}} \otimes M_{\pi_{\lambda}}.
    \end{equation}
    Moreover, $M$ has a strong permutation symmetry. Hence, it is supported solely on the invariant irrep of $S_L$, corresponding to $\lambda = (L)$, while $M_{\pi_{\lambda}} = 0$ for $\lambda \neq (L)$. Since $\pi_{(L)}$ is one-dimensional, $M_{\pi_{(L)}}$ is proportional to the a complex number, thus proving $M \propto P_{\psym}$, which can only happen with $S$ spans $\psym$.    
\end{proof}

\noindent \textbf{Two remarks}: \\
(1) Ref.\cite{harrow2013church} provides a distinct proof for Lemma \ref{lemma:homog_prod_span_psym} without resorting to group theory and Schur-Weyl duality.\\
(2) Due to Lemma \ref{lemma:homog_prod_span_psym}, any pure state with permutation symmetry, including the long-range entangled ones, can be written as a linear combination of homogeneous product pure states. One simple example is the GHZ state $ \frac{1}{\sqrt{2}}(\ket{0}^{\otimes L} + \ket{1}^{\otimes L})$. A more interesting example is the W state $\ket{W} =\frac{1}{\sqrt{L}} \sum_{i=1}^L X_i \ket{0}^L  \propto \ket{1000...}+  \ket{0100...}+ \ket{0010...} + \cdots$. One can check that 

\begin{equation}
\ket{W} = \frac{1}{\sqrt{\mathcal{N}}}\sum_{k=0}^{L-1} \omega^{-k} \ket{\psi_k}^{\otimes L }, 
\end{equation}
where $\omega = e^{\frac{2\pi i}{L}}$ is $L$-th root of unity, $\ket{\psi_k} = \frac{1}{\sqrt{2}} ( \ket{0}+ \omega^k \ket{1})$, and $\mathcal{N}$ is a normalization constant. Therefore, W state is a superposition of the homogeneous product pure states $\ket{\psi_k}^{\otimes L }$.

\begin{proposition}\label{prop:psym_MMIS_fullsep}
    The permutation symmetry MMIS $\rho_\psym$ is fully separable.
\end{proposition}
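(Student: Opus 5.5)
The plan is to reuse the operator $M$ built in the proof of Lemma~\ref{lemma:homog_prod_span_psym}. There we showed that the Haar average
\begin{equation}
M = \int \ketbra{\psi}{\psi}^{\otimes L} \, d\ket{\psi}
\end{equation}
is proportional to $P_\psym$. Since $\rho_\psym = P_\psym/\Tr P_\psym$, normalizing gives $\rho_\psym = M/\Tr M$. Note that $\Tr M = 1$ automatically, because each $\ketbra{\psi}{\psi}^{\otimes L}$ has unit trace and the Haar measure is a probability measure. Hence $\rho_\psym = M$ exactly. This already exhibits $\rho_\psym$ as a (continuous) convex combination of homogeneous product states $(\ketbra{\psi}{\psi})^{\otimes L}$. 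That is full separability in the strong, homogeneous form used in Theorem~\ref{thm:tsym_entangled}.

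The only point needing care is that the theorem is phrased with a discrete sum $\sum_\alpha p_\alpha (\ketbra{\psi_\alpha}{\psi_\alpha})^{\otimes L}$. I would handle this with a Carathéodory argument. The operator $M$ lies in the closed convex hull of the compact set $K=\{(\ketbra{\psi}{\psi})^{\otimes L}\}$, which sits inside the finite-dimensional real vector space of Hermitian operators on $\psym$. The convex hull of a compact set in $\R^n$ is already compact, so it is closed. Therefore $M$ lies in the convex hull of $K$ itself, and Carathéodory's theorem bounds the number of terms needed by $(\dim\psym)^2+1$.

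Alternatively, one can avoid the abstract argument by replacing the Haar integral with a finite weighted design. A projective $L$-design $\{p_\alpha,\ket{\psi_\alpha}\}$ on $\C^d$ satisfies $\sum_\alpha p_\alpha (\ketbra{\psi_\alpha}{\psi_\alpha})^{\otimes L} = M$ by definition, and such designs exist for every $L$ and $d$. I would mention this as a concrete realization, but rely on Carathéodory for the proof.

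I expect no real obstacle; the main subtlety is the passage from integral to finite sum, which compactness resolves. I would also note the consequence used in Theorem~\ref{thm:tsym_entangled}: in the small cases where $\psym=\tsym$ (namely $L\le 2$, or $L\le3$ when $d=2$), we have $\rho_\tsym=\rho_\psym$, which is therefore fully separable.
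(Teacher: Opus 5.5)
Your proposal is correct and follows essentially the same route as the paper: identify $\rho_\psym$ with the Haar average $\int (\ketbra{\psi}{\psi})^{\otimes L}\,d\psi$ from Lemma~\ref{lemma:homog_prod_span_psym}, then pass to a finite convex combination via Carath\'eodory's theorem (equivalently, a projective $L$-design). You also make explicit a step the paper leaves implicit: the convex hull of the compact set of homogeneous product states is closed, so the Haar average already lies in that convex hull.
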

\begin{proof}
    By the proof of Lemma \ref{lemma:homog_prod_span_psym}, we have that $\rho_{\psym} \propto \int \ketbra{\psi}{\psi}^{\otimes L} \ d\psi$. By Carathéodory's theorem, the integral can be replaced by a sum over a finite set of homogeneous product states (i.e., a quantum projective $L$-design): $\rho_{\psym} = \sum_\alpha p_\alpha \ketbra{\psi_\alpha}{\psi_\alpha}^{\otimes L}$, which gives a fully separable decomposition for $\rho_{\psym}$.
\end{proof}

\subsection{Proof of Lemma \ref{lemma:psym_dim}} \label{proof_lemma:psym_dim}

\begin{proof}
A basis for the permutationally symmetric subspace $\mathcal{H}_{\mathrm{sym}}$ is given by the symmetrized computational basis states
\begin{align}
\ket{\boldsymbol{\sigma}_{\mathrm{sym}}}
&\propto P_{\mathrm{sym}} \ket{\boldsymbol{\sigma}} \\
&\propto \sum_{\pi \in S_L} \ket{\sigma_{\pi(1)} \sigma_{\pi(2)} \cdots \sigma_{\pi(L)}}.
\end{align}

Each such state depends only on the occupation numbers $n_i$ of symbols $i \in \{0,\dots,d-1\}$, i.e., on the number of occurrences of each label in a length-$L$ string. In particular, distinct occupation-number configurations produce orthogonal and linearly independent states.

Therefore, a basis is labeled by integer tuples $(n_0,\dots,n_{d-1})$ with $n_i \ge 0$ and $\sum_{i=0}^{d-1}n_i = L$. By the stars-and-bars counting method, the number of such tuples is
\[
\binom{L+d-1}{d-1},
\]
which is therefore the dimension of $\mathcal{H}_{\mathrm{sym}}$.
\end{proof}

\subsection{Proof of Lemma \ref{lemma:tsym_dim}}\label{proof_lemma:tsym_dim}

\begin{proof}
Let $T$ denote the translation operator on $(\mathbb{C}^d)^{\otimes L}$. The projector onto the translationally symmetric subspace $\tsym$ is given by
\begin{align}
    P_{\tsym} = \frac{1}{L} \sum_{n=0}^{L-1} T^n.
\end{align}
Hence, its dimension is
\begin{align}
    \dim(\tsym) = \Tr(P_{\tsym}) = \frac{1}{L} \sum_{n=0}^{L-1} \Tr(T^n).
\end{align}

We now compute $\Tr(T^n)$. In the computational basis $\{ \ket{\boldsymbol{\sigma}} \}_{\boldsymbol{\sigma} \in \{0,\ldots,d-1\}^L}$, the operator $T^n$ acts by cyclically shifting the string $\boldsymbol{\sigma}$ by $n$ sites. Therefore, $\Tr(T^n)$ counts the number of basis states invariant under this shift:
\begin{align}
    \Tr(T^n) = \# \left\{ \boldsymbol{\sigma} \in \{0,\ldots,d-1\}^L : T^n \boldsymbol{\sigma} = \boldsymbol{\sigma} \right\}.
\end{align}

A string is invariant under a shift by $n$ sites if and only if it is periodic with period $\gcd(n,L)$. Thus, such strings are fully determined by their values on $\gcd(n,L)$ sites, giving
\begin{align}
    \Tr(T^n) = d^{\gcd(n,L)}.
\end{align}

Substituting back, we obtain
\begin{align}\label{eq:dim_T_appendix}
    \dim(\tsym ) = \frac{1}{L} \sum_{n=0}^{L-1} d^{\gcd(n,L)},
\end{align}
where $\gcd(n=0,L)=L$. 

We note that since $\gcd(n,L) \leq \frac{L}{2}$ for $n=1,2,..., L-1$, the dominant contribution in the summation comes from $n=0$. In particular, in the asymptotic limit  $L \to \infty$, one has 

\begin{equation}
 \dim(\tsym )  \to \frac{d^L}{L}.
\end{equation}

Finally, we note that this expression coincides with the number of distinct necklaces of length $L$ with $d$ colors, i.e., equivalence classes of strings under cyclic translations, as given by Pólya enumeration theorem.
\end{proof}

\subsection{Comparison between $\dim (\psym)$ and $\dim (\tsym{})$ }\label{appendix:table_dim}
For the qubit case $d=2$, and $n=1$, 2, or 3, we have $\psym = \tsym$. This can be checked by calculating their respective dimensions from Lemmas \ref{lemma:psym_dim} and \ref{lemma:tsym_dim} (See Table \ref{tab:pt-dimensions}). Hence, $\rho_\tsym = \rho_\psym$ is fully separable for those small system sizes. Otherwise, we have $\dim(\psym) < \dim(\tsym)$, which implies $\rho_{\tsym}$ is entangled for $n \geq 4$ from the argument above. For qudits of higher on-site dimension $d \geq 3$, $\psym = \tsym$ happens only for $n=1$ and 2, where the symmetry groups $S_n$ and $C_n$ themselves coincide.

\begin{table}[h!]
\centering
\begin{tabular}{c c @{~} c @{~} c}
\hline
$L$ & $\dim(\psym)$ & & $\dim(\tsym)$ \\
\hline
1 & 2 &  & 2 \\
2 & 3 &  & 3 \\
3 & 4 &  & 4 \\
4 & 5 & \makebox[0pt]{$<$} & 6 \\
5 & 6 & \makebox[0pt]{$<$} & 8 \\
6 & 7 & \makebox[0pt]{$<$} & 14 \\
\hline
\end{tabular}
\caption{Dimensions of $\psym$ and $\tsym$ for $L = 1$ to $6$ qubits ($d=2$).}
\label{tab:pt-dimensions}
\end{table}

\section{Proof of Lemma \ref{lemma:tsym_bipsep_implies_fullsep}: bipartite separabiltiy implies full separabiltiy }\label{append:bipartite_fully_separable} 

In this main text, we introduce Lemma \ref{lemma:tsym_bipsep_implies_fullsep}, which states that a translationally symmetric pure state that is bipartite separable with respect to contiguous bipartitioning $A|B$, i.e. $\ket{\Psi} = \ket{A}\ket{B}$, must be fully separable: $\ket{\Psi} = \ket{\psi}^{\otimes L}$. Here we will present the proof. For that, we first introduce a lemma that will be useful for our proof:  

\begin{lemma}[Overlapping pure marginals imply tripartite separability]\label{lemma:overlaping_pure_marginals}
    If $\rho_{ABC}$ is a state on a tripartite system $\Hilb_A \otimes \Hilb_B \otimes \Hilb_C$ with pure reduced density matrices on $AB$ and $BC$:
    \begin{align}
        \rho_{AB} & = \Tr_C[\rho_{ABC}] = \ketbra{\Psi_{AB}}{\Psi_{AB}}, \\
        \rho_{BC} & = \Tr_A[\rho_{ABC}] = \ketbra{\Phi_{BC}}{\Phi_{BC}},
    \end{align}
    then $\rho_{ABC}$ is a tripartite separable pure state $\rho_{ABC} = \ketbra{\Xi_{ABC}}{\Xi_{ABC}}$, $\ket{\Xi_{ABC}} = \ket{\psi_A}\ket{\xi_B}\ket{\phi_C}$.
\end{lemma}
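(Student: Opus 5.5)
The plan rests on one fact: a bipartite state with a pure marginal has no correlations with the rest of the system. Concretely, if $\rho_{XY}$ is any state whose reduced state $\rho_X = \ketbra{\chi}{\chi}$ is pure, then $\rho_{XY} = \ketbra{\chi}{\chi}\otimes \rho_Y$.

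To prove this, purify $\rho_{XY}$ to $\ket{\Omega}_{XYR}$ and Schmidt-decompose it across $X|YR$. The Schmidt coefficients on the $X$ side are the eigenvalues of $\rho_X$. Since $\rho_X$ is pure, there is exactly one nonzero coefficient, so $\ket{\Omega} = \ket{\chi}_X\ket{\omega}_{YR}$. Tracing out $R$ then gives the claimed product form.

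The argument then proceeds in three steps.

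\emph{Step 1.} Apply the fact with $X=AB$ and $Y=C$, using $\rho_{AB}=\ketbra{\Psi_{AB}}{\Psi_{AB}}$. This gives
\[
\rho_{ABC}=\ketbra{\Psi_{AB}}{\Psi_{AB}}\otimes\rho_C .
\]

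\emph{Step 2.} Apply the fact symmetrically with $X=BC$ and $Y=A$. This gives
\[
\rho_{ABC}=\rho_A\otimes\ketbra{\Phi_{BC}}{\Phi_{BC}} .
\]
Tracing $A$ out of this second expression gives $\rho_{BC}$ pure, as assumed. Tracing $A$ out of the Step 1 expression gives $\rho_{BC}=\rho_B\otimes\rho_C$. Comparing the two, $\rho_B\otimes\rho_C$ is pure, and a product of density matrices is pure only if each factor is pure. Hence $\rho_C=\ketbra{\phi_C}{\phi_C}$ and $\rho_B$ is pure. By the symmetric argument, $\rho_A=\ketbra{\psi_A}{\psi_A}$ is pure as well.

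\emph{Step 3.} Return to $\rho_{AB}=\ketbra{\Psi_{AB}}{\Psi_{AB}}$. Its marginal $\rho_A$ is pure, so the fact applied with $X=A$ and $Y=B$ gives $\ket{\Psi_{AB}}=\ket{\psi_A}\ket{\xi_B}$ up to a phase. Substituting into Step 1 yields
\[
\rho_{ABC}=\ketbra{\Xi_{ABC}}{\Xi_{ABC}},\qquad \ket{\Xi_{ABC}}=\ket{\psi_A}\ket{\xi_B}\ket{\phi_C},
\]
which is the claim.

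The only step that needs care is the decoupling fact, specifically the Schmidt-rank argument on the purification. Once it is in hand, the rest is bookkeeping of partial traces, together with the observation that a tensor product is pure only if its factors are.
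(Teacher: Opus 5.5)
Your proof is correct and follows essentially the same route as the paper. Both arguments use purity of $\rho_{AB}$ to get $\rho_{ABC}=\ketbra{\Psi_{AB}}{\Psi_{AB}}\otimes\rho_C$, deduce that the pure $\rho_{BC}=\rho_B\otimes\rho_C$ factorizes, and apply the symmetric argument to factor $\ket{\Psi_{AB}}$. You additionally prove the decoupling fact via purification and Schmidt decomposition, which the paper takes for granted, and you make the final ``only compatible extension'' step explicit by substituting into the Step~1 expression.
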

\begin{proof}
Since $\rho_{AB} = \ketbra{\Psi_{AB}}{\Psi_{AB}}$ is pure, $\rho_{ABC} = \ketbra{\Psi_{AB}}{\Psi_{AB}} \otimes \rho_C$. In particular, $B$ and $C$ are uncorrelated, which implies that $\ket{\Phi_{BC}}$ factorizes into $\ket{\Phi_{BC}} = \ket{\xi_B}\otimes\ket{\phi_C}$. By the same reasoning applied to $\rho_{BC}$, we have $\ket{\Psi_{AB}} = \ket{\psi_A}\otimes \ket{\xi_B}$. The only possible extension compatible with both marginals is the tripartite separable pure state $\ket{\psi_A}\ket{\xi_B}\ket{\phi_C}$.
\end{proof}

Equipped with Lemma \ref{lemma:overlaping_pure_marginals}, we now present the proof of Lemma \ref{lemma:tsym_bipsep_implies_fullsep}: 

\begin{proof}
Without loss of generality, let us assume $A$ comes before $B$ in the site index order set by the translation operator $T$. Then, divide each region as $A = A_L \sqcup A'$ and $B = B_L \sqcup B'$, where $A_L$ and $B_L$ are the leftmost sites of regions $A$ and $B$, respectively (with $A'$ or $B'$ possibly empty).
    
By assumption, given the state $\ket{\Psi}$, the reduced density matrix on the region $A = A_L A'$ is also a pure state. By translation symmetry, the reduced density matrix on $A'B_L$ is pure as well because $A'B_L$ is the translation of $B$ by one site. Since the regions $A= A_L A'$ and $A'B_L$ overlaps on $A'$, by Lemma \ref{lemma:overlaping_pure_marginals}, the state on $A_L|A'|B_L$ is pure and tripartite separable. By a similar logic applied to regions $B$ and its translated version $B'A_L$, we conclude that the state is also tripartite separable with respect to the $B_L|B'|A_L$ tripartition. Both facts together imply four-partite separability w.r.t. $A_L|A'|B_L|B'$.  

    By repeating the procedure above, each time starting with a bipartitioning one site shifted to the right compared to the previous one (e.g., $A'B_L|B'A_L$ instead of $A|B$), we iteratively factorize $\ket{\Psi}$ into a product state, which has to be homogeneous due to translation symmetry. 
\end{proof}

\section{Proof for Lemma \ref{lemma:tsym_FDLU_bipsep_implies_fullsep}}\label{proof:tsym_FDLU_bipsep_implies_fullsep}

\begin{figure}[h]
    \centering
    \begin{tikzpicture}[
    x=1cm, y=1cm,
    box/.style={draw=blue!80!cyan, thick, fill=cyan!15, rounded corners=0.5pt},
    ulbox/.style={draw=blue!80!cyan, thick, fill=blue!10!red!25, rounded corners=0.5pt},
    urbox/.style={draw=blue!80!cyan, thick, fill=blue!10!magenta!25, rounded corners=0.5pt},
    dot/.style={fill=black, circle, inner sep=1.2pt},
    scale = 0.6
]

\draw[line width=8pt, green!50, line cap=butt] (0.4,-0.4) -- (4,-0.4);
\draw[line width=8pt, green!50, line cap=round] (3,-0.4) -- (4,-0.4);

\draw[line width=8pt, red!40, line cap=round] (5,-0.4) -- (12,-0.4);

\draw[line width=8pt, green!50, line cap=butt] (13,-0.4) -- (16.6,-0.4);
\draw[line width=8pt, green!50, line cap=round] (13,-0.4) -- (14,-0.4);

\node[green!60!black] at (2.5, -1.2) { $|B\rangle$};
\node[red!80!black] at (8.5, -1.2) { $|A\rangle$};
\node[green!60!black] at (14.5, -1.2) { $|B\rangle$};

\draw[thick, densely dotted, red!60, rounded corners=2pt, fill=red!15] 
    (2.65, 2.15) -- (6.35, 2.15) -- (6.35, 1.25) -- (5.35, 1.25) -- 
    (5.35, 0.25) -- (3.65, 0.25) -- (3.65, 1.25) -- (2.65, 1.25) -- cycle;

\draw[thick, densely dotted, magenta!80, rounded corners=2pt, fill=magenta!15] 
    (10.65, 2.15) -- (14.35, 2.15) -- (14.35, 1.25) -- (13.35, 1.25) -- 
    (13.35, 0.25) -- (11.65, 0.25) -- (11.65, 1.25) -- (10.65, 1.25) -- cycle;

\foreach \x in {1,...,16} {
    \draw[thick] (\x, -0.4) -- (\x, 2.4);
    \node[dot] at (\x, -0.4) {};
}

\node[below, inner sep = 0.8em] at (5, -0.4) {$x=1$};

\draw[box, fill=cyan!15] (0.4, 1.0) -- (1.2, 1.0) -- (1.2, 0.4) -- (0.4, 0.4); %
\draw[box] (1.8, 0.4) rectangle (3.2, 1.0); %
\draw[ulbox] (3.8, 0.4) rectangle (5.2, 1.0); %
\draw[box] (5.8, 0.4) rectangle (7.2, 1.0); %
\draw[box] (7.8, 0.4) rectangle (9.2, 1.0); %
\draw[box] (9.8, 0.4) rectangle (11.2, 1.0); %
\draw[urbox] (11.8, 0.4) rectangle (13.2, 1.0); %
\draw[box] (13.8, 0.4) rectangle (15.2, 1.0); %
\draw[box, fill=cyan!15] (16.6, 1.0) -- (15.8, 1.0) -- (15.8, 0.4) -- (16.6, 0.4); %

\draw[box] (0.8, 1.4) rectangle (2.2, 2.0); %
\draw[ulbox] (2.8, 1.4) rectangle (4.2, 2.0); %
\draw[ulbox] (4.8, 1.4) rectangle (6.2, 2.0); %
\draw[box] (6.8, 1.4) rectangle (8.2, 2.0); %
\draw[box] (8.8, 1.4) rectangle (10.2, 2.0); %
\draw[urbox] (10.8, 1.4) rectangle (12.2, 2.0); %
\draw[urbox] (12.8, 1.4) rectangle (14.2, 2.0); %
\draw[box] (14.8, 1.4) rectangle (16.2, 2.0); %

\node[red!70] at (4.5, 2.7) { $U_L$};
\node[magenta!90] at (12.5, 2.7) { $U_R$};
\draw[<->, thick] (2.65, 3.2) -- (6.35, 3.2) node[midway, above] { $d$};

\node at (-0.6, 1.2) {\large $|\Psi\rangle =$};

\end{tikzpicture}
    \begin{tikzpicture}[
    x=1cm, y=1cm,
    box/.style={draw=blue!80!cyan, thick, fill=cyan!15, rounded corners=0.5pt},
    scale=0.6
]

\node at (-0.6, 1.2) {\large $|\Phi\rangle =$};

\draw[thick, fill=gray!20] (0.5, -0.95) rectangle (16.5, 0);
\node at (8.5, -0.5) { $|\Psi\rangle$};

\begin{scope}
\clip (0.5, 0) rectangle (16.5, 3.2);

\foreach \xc in {0.5, 4.5, 8.5, 12.5, 16.5} {
    \draw[thick, densely dotted, red!60, rounded corners=2pt, fill=red!15] 
        (\xc - 0.85, 2.15) -- (\xc + 0.85, 2.15) -- 
        (\xc + 0.85, 1.25) -- (\xc + 1.85, 1.25) -- 
        (\xc + 1.85, 0.25) -- (\xc - 1.85, 0.25) -- 
        (\xc - 1.85, 1.25) -- (\xc - 0.85, 1.25) -- cycle;
}

\foreach \x in {1,...,16} {
    \draw[thick] (\x, 0) -- (\x, 2.4);
}

\foreach \k in {1,...,8} {
    \pgfmathsetmacro{\xleft}{2*\k - 1 - 0.2}
    \pgfmathsetmacro{\xright}{2*\k - 1 + 1.2}
    \draw[box] (\xleft, 0.4) rectangle (\xright, 1.0) node[midway] { $\dagger$};
}

\foreach \k in {0,...,4} {
    \pgfmathsetmacro{\xleft}{4*\k - 0.2}
    \pgfmathsetmacro{\xright}{4*\k + 1.2}
    \draw[box] (\xleft, 1.4) rectangle (\xright, 2.0) node[midway] { $\dagger$};
}
\end{scope}

\node[red!70] at (4.5, 2.8) { $U_L^\dagger$};

\end{tikzpicture}
    \caption{Schematic representation of the disentangling procedure. (Top) The state $\ket{\Psi}$ with regions $A$ and $B$. The unitary operation acting across the bipartition factorizes into $U_L \otimes U_R$, where $U_L$ and $U_R$ have support bounded by $d$ near the left ($x=1$) and right boundaries, respectively, up to unitaries that act on the bulk of $A$ and $B$. (Bottom) Construction of the short-range entangled ``bare'' state $\ket{\Phi}$ by systematically applying translated copies of the inverse boundary unitary, $U_L^\dagger$, to the original state $\ket{\Psi}$.}
    \label{fig:lemma_5}
\end{figure}

\begin{proof}
    Without loss of generality, we may assume $U$ only acts non-trivially near the two boundary points between $A$ and $B$, since we may redefine $\ket{A}$ and $\ket{B}$ to absorb the gates acting in the interior of each region. Specifically, $U$ factorizes into two small unitaries $U = U_L \otimes U_R$, each with support extending at most $d/2$ sites to either side of the left and right boundary points, respectively (See Fig. \ref{fig:lemma_5}). Let the site with index $x = 1$ be the one right next to the left boundary of $A$. By assumption, $|A|, |B| \gg d$.

    With $T$ being the translation operator by one site to the right, we define translated unitaries $U_{L}(d) \defeq (T^\dagger)^d U_{L} T^d$. Now, consider the ``bare'' state
    \begin{equation}\label{eq:naked_state}
        \ket{\Phi} \defeq \prod_{k = 0}^{\lfloor L / d \rfloor} U_L(k d)^\dagger 
        \ket{\Psi}.
    \end{equation}
    Let $R$ be a region around the left boundary of $A$ with size $|R| > 2d$. Since applying $U_L^\dagger$ to $\ket{\Psi}$ disentangles the left boundary, we have that $R$ is uncorrelated across $A|B$: $\rho_R = \Tr_{R^c} \ketbra{\Psi}{\Psi} = \rho_{R\cap B} \otimes \rho_{R \cap A}$. Due to the translation invariance of $\ket{\Psi}$ and the fact that we are also undoing the unitary $U_L(d)$, we also have that $R$ is uncorrelated across the translated bipartition $(A+d)|(B+d)$: $\rho_R = \rho_{R \cap (B + d)} \otimes \rho_{R \cap (A + d)}$. Hence, we have that $\rho_R = \rho_{B} \otimes \rho_{[1,d]} \otimes \rho_{R \cap (A+d)}$.

    Continuing this procedure by translating the system in steps of $d$ sites, we conclude that $\rho_{[1,kd]} = \rho_{[1,d]} \otimes \rho_{d+1, 2d} \otimes \cdots \otimes \rho_{[(k-1)d+1, kd]}$. Since the global state is pure, each of these mixed states are also pure, and we have that $\ket{\Phi}$ is a tensor product state:
    \begin{equation}
        \ket{\Phi} = \ket{\phi_{[1,d]}} \otimes \ket{\phi_{[d+1,2d]}} \otimes \cdots \otimes \ket{\phi_{[k^ * d, L]}}, 
    \end{equation}
    where $k^* = \lfloor L / d \rfloor$. By inverting the disentangling unitaries $U_L(k d)^\dagger$, we conclude that $\ket{\Psi} = \prod_{k = 0}^{\lfloor L / d \rfloor} U_L(k d) \ket{\Phi}$ is short-range entangled.

\end{proof}

\section{Correlation structure of $\rho_\tsym$}\label{appendix:correlation}

\subsection{Two-point function for prime $L$ }\label{appendix:two_point}
Given $\rho_{\tsym} = \frac{P_\tsym}{\mathcal{N}}$, with $P_\tsym = \sum_{k=0}^{L-1}  T^k$ and $\mathcal{N}$ the normalization constant, we label the lattice site from 0 to $L-1$. Below we will calculate the two-point correlation $\Tr  \rho_\tsym Z_0 Z_j^{-1}    =  \frac{1}{\mathcal{N}}\sum_{k=0}^{L-1}  \Tr  \left( T^k  Z_0 Z_j^{-1} \right)$. First, one has
\begin{equation} \label{eq:cyclic_trace_1}
\Tr ( T^kZ_0 Z_j^{-1} ) = \Tr (Z_{k} T^k Z_j^{-1}) =  \Tr (T^k  Z_{k}  Z^{-1}_j).  
\end{equation}

In other words, the two-point function between site $0$ and site $j$ is exactly equal to that between site $k$ and site $j$. Then one can repeat this process to find

\begin{equation}
\Tr ( T^k  Z_0 Z^{-1}_j) =  \Tr (T^k  Z_{mk}  Z^{-1}_j).   
\end{equation}
with $m$ being any positive integer and the subscript $mk$ in $Z_{mk}$ is defined as mod $L$. When $L$ and $k$ are co-prime, i.e. $\text{gcd}(L,k)=1$, any site $j$ can be obtained by $mk \text{ mod  } L$. Namely, the generator $T^k$ is ergodic: it can generate the entire cyclic group defined on $L$ sites. 

\begin{equation}\label{eq:cyclic_trace_2}
\Tr ( T^k Z_0 Z^{-1}_j) =  \Tr (T^k  Z_j  Z^{-1}_j )  = \Tr (T^k).   
\end{equation}

When $L$ is a prime number, the above is true for any $1\leq k \leq L-1$. Therefore,

\begin{equation}
\Tr (\rho_\tsym Z_0 Z^{-1}_j )   = \frac{1 }{\mathcal{N}}   \sum_{k=1}^{L-1}  \Tr ( T^k  )  = \frac{  \sum_{k=1}^{L-1}  \Tr ( T^k  )  }{  \sum_{k=0}^{L-1}  \Tr ( T^k  )  }  =   \frac{   \Tr ( T+ ... + T^{L-1}   )  }{  \Tr ( 1+ T+ ... + T^{L-1}   ) } 
\end{equation}
where we used $\Tr  Z_0Z_j^{-1}=0$  for $j\neq 0$. 

We consider the Pauli-Z basis $\ket{\sigma}\equiv \ket{ \{ \sigma_i \}  }$ with $\sigma_i \in \{ 0,1,2,...,d-1  \}$ to calculate the trace: $\Tr ( T^k ) = \sum_\sigma \bra{\sigma} T^k\ket{\sigma}$, which is equal to the number of $\sigma$ strings invariant under a shift over $k$ sites. The shift over $k$ sites can define an orbit, within which each of $\sigma$ strings can be connected using $k-$shift, and the number of disconnected orbits is $\text{gcd}(k,L)$. If the entire $\sigma$ string of size $L$ is invariant under $k-$shift, the only requirement is that within each orbit, the $\sigma_i$ should be the same for all $i$, which can take $d$ possible values.  Since there are $\text{gcd}(k,L)$ orbits, one finds $\Tr T^k  = 2^{\text{gcd}(k,L)} $ for $1\leq k \leq L-1$. As such, when $L$ is prime, $\Tr T^k  = d $, and one finds 

\begin{equation}\label{eq:two_site_correlation}
   \Tr (\rho_\tsym Z_0 Z_j^{-1})   = \frac{d (L-1)  }{  d^L  + d (L-1) }, 
\end{equation}
which is exponentially small in $L$.  

\subsection{$n$-point function for prime $L$ }
Here we present the calculation of $\Tr (\rho_{\tsym} \prod_{i=0}^{L-1}Z_i^{b_i})$ where each $b_i\in \{ 0,1,2,... , d-1\}$ and $\sum_{i=0}^{L-1} b_i=0 ~\text{mod } d$. Note that if $\sum_{i=0}^{L-1} b_i \neq 0 ~\text{mod } d$, the corresponding operator will have a vanishing expectation value due to the weak symmetry generated by $\prod_{i=0}^{L-1} X_i $, as discussed in the main text. We also note that when $b_i=0 ~\forall i$, the expectation value is simply $\Tr  \rho_\tsym =1$, so our discussion below will exclude this case as well.  

One central quantity that we need to calculate is $\Tr (T^k    \prod_{i=0}^{L-1}Z_i^{b_i})$. Again using the ergodicity of the $k$-shift generated by $T^k$, as discussed in the previous subsection (Eq.\ref{eq:cyclic_trace_1} to Eq.\ref{eq:cyclic_trace_2}), we find

\begin{equation}
\Tr (T^k    \prod_{i=0}^{L-1}Z_i^{b_i}) =  \Tr (T^k), 
\end{equation}
for $k=1,2,..., L-1$ when $L$ is prime. Following the same calculation as in the previous subsection (Appendix.\ref{appendix:two_point}), we find

\begin{equation}
   \Tr (\rho_\tsym   \prod_{i=0}^{L-1}Z_i^{b_i})   = \frac{d (L-1)  }{  d^L  + d (L-1) }, 
\end{equation}
when $\sum_{i=0}^{L-1} b_i=0 ~\text{mod } d$, and $\{b_i \}$ does not satisfy the condition $b_i=0 ~\forall i$.

\subsection{Exact results of single-site operator for arbitrary $L$}

Using similar techniques to the above, we will prove that, for arbitrary $L$ and $i \in \{1,2,\cdots,L\}$ an arbitrary site index, the one-site reduced density matrix of $\rho_{\tsym}$ is (exactly) maximally mixed 
\begin{equation}\label{eq:1site_reduced_tsym}
    \Tr_{\{i\}^C}[\rho_{\tsym}] = \frac{1}{d} \one_{\{i\}}.
\end{equation}

\textbf{Prime $L$}:

We will first start with the simpler case of prime $L$. In this case, for any local operator $O_i$ supported in site $i$, we have
\begin{equation}\label{eq:trace_translation_prime_n}
    \Tr[T^k O_i] = \begin{cases}
        \Tr_i[O_i], & \text{if } k\neq0 \\
        d^{L-1} \Tr_i[O_i], & \text{if } k=0
    \end{cases}
\end{equation}

The case $k=0$ is obvious. For the case $k\neq 0$, it reduces to a trace on a single-site Hilbert space, as illustrated in Fig. \ref{fig:trace_translation}.

\begin{figure}[h]
    \centering
    \includegraphics[width=0.5\linewidth]{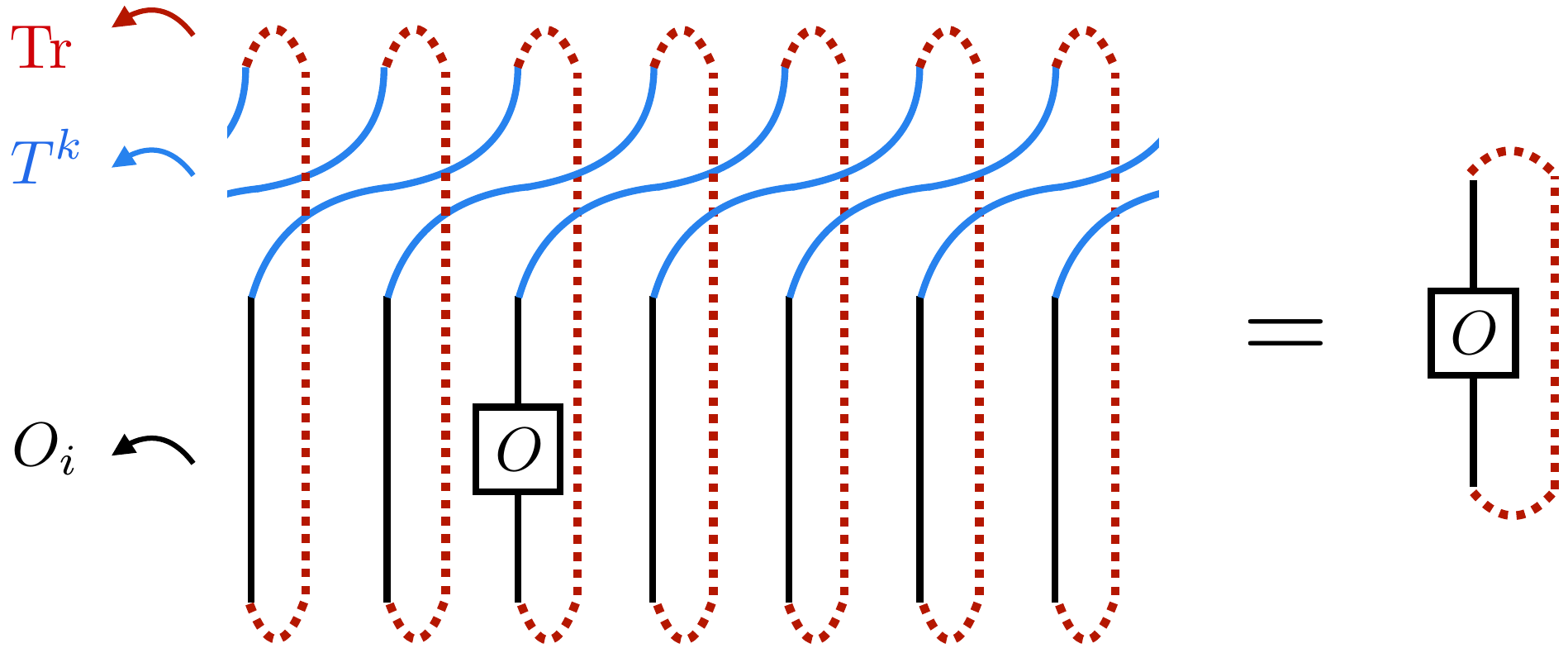}
    \caption{Visualization of $\Tr[T^k O_i] = \Tr_i [O_i]$ for $L=7$ sites, $i=3$, and $k = 2$.} 
    \label{fig:trace_translation}
\end{figure}

In algebraic terms,
\begin{equation}
    \Tr[T^k O_i] = \sum_{\sigma \in \{0,1,2,... d-1 \}^L} \braket{\sigma_{i+k}|O_i|\sigma_i} \prod_{j\neq i} \braket{\sigma_{j+k} | \sigma_j}.
\end{equation}
The product in the RHS above requires that $\forall j \neq i, \sigma_j = \sigma_{j+k}$. Since $k \neq 0 \pmod{L}$, then $i+k \neq i \pmod{L}$, which means $\sigma_{i+k} = \sigma_{i+2k} = \cdots = \sigma_{i+mk}$, as long as $i+mk \neq i \pmod{L}$, but this is true for all $1 \leq m \leq L-1$, due to the primality of $L$. Since $\sigma_{i+(L-1)k} = \sigma_{i-k} = \sigma_i$ as well (because $i-k \neq i \pmod{L}$), then all $\sigma_j$ are equal to a single $\sigma\in \{0,1,2,... d-1 \}$:
\begin{equation}
    \Tr[T^k O_i] = \sum_{\sigma \in \{0,1,2,... d-1 \}} \braket{\sigma|O_i|\sigma} = \Tr_i[O_i].
\end{equation}

Hence, 
\begin{equation}
    \Tr[\rho_{\tsym} O_i] = \frac{\sum_{k=0}^{n-1} \Tr[T^k O_i]}{\sum_{k=0}^{n-1} \Tr[T^k]} = \Tr_i[O_i] \frac{d^{n-1} + (n-1)}{d^n + d(n-1)} = \frac{1}{d} \Tr_i[O_i],
\end{equation}
which proves Eq. \eqref{eq:1site_reduced_tsym} for prime $L$.

\textbf{Arbitrary $L$}:

For arbitrary $L$, we can generalize Eq. \eqref{eq:trace_translation_prime_n} to
\begin{equation}
    \Tr[T^k O_i] = \Tr_i[O_i] d^{\gcd(k,L) -1} = \frac{1}{d} \Tr_i[O_i] \Tr[T^k],
\end{equation}
for arbitrary $k \in \Z$. Then, it immediately follows that
\begin{equation}
    \Tr[\rho_{\tsym} O_i] = \frac{\sum_{k=0}^{n-1} \Tr[T^kO_i]}{\sum_{k=0}^{n-1} \Tr[T^k]} = \frac{1}{d} \Tr_i[O_i].
\end{equation}

\subsection{Proof for Theorem \ref{thm:rdm}: reduced density matrix is maximally mixed}
Given $\rho_\tsym$, in this subsection, we derive an upper bound for the trace difference between the reduced density matrix and the maximally mixed state on a subregion $A$. We will first consider the case for prime $L$ as a warm-up, and then discuss the case with arbitrary $L$, which requires a more involved calculation.

\subsubsection{For prime $L$}

Let $O_A$ be an arbitrary operator in region $A$. Then, by the same mechanism of Figure \ref{fig:trace_translation}, for all $k \neq 0 \pmod{L}$ there is a permutation $\pi_k \in S_{A}$ of the indices of $A$ such that
\begin{align}
    \Tr[T^k O_A] =   \Tr_A[P_{\pi_k} O_A] =    \sum_{i_1,\cdots,i_{|A|}} [O_A]_{i_1,\ldots,i_{|A|}}^{i_{\pi(1)},\ldots,i_{\pi(|A|)}}     
\end{align}
where $P_{\pi_k} \in \mathcal{L}(\Hilb_A)$ is the permutation matrix associated with $\pi_k$, and in the right-hand side of the above equation, we have neglected the $k$ dependence for $\pi_k$ for notational simplicity. The above equation can also be visualized as shown in Fig.\ref{fig:prime_L_diagram}.

\begin{figure}[h]
    \centering
    \includegraphics[width=0.6\linewidth]{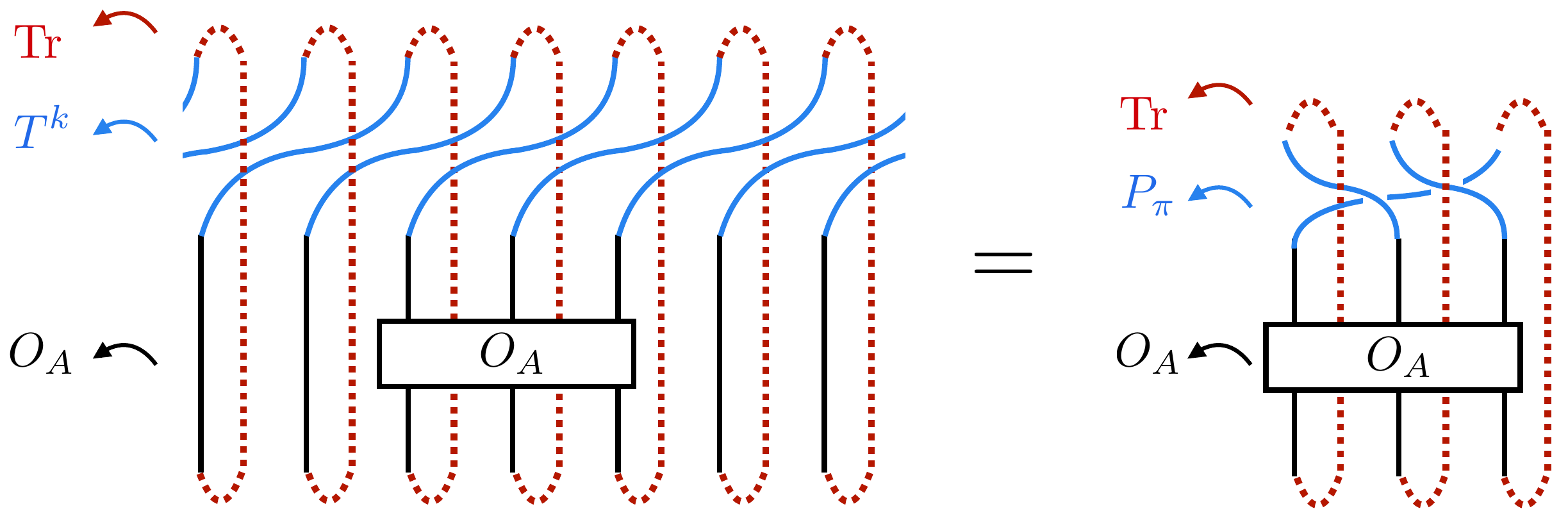}
    \caption{Visualization of $ \Tr[T^k O_A] =   \Tr_A[P_{\pi} O_A]$ for $L=7$ sites, 3-site operator $O_A$, and $k = 2$.} 
    \label{fig:prime_L_diagram}
\end{figure}

Moreover, using $|\mathrm{Tr}(X)| \le \|X\|_1$ and the unitary invariance of the trace norm, one has $| \Tr_A[P_{\pi} O_A]  | \leq \norm{P_{\pi} O_A}_1 = \norm{O_A}_1$. Then,

\begin{equation}
    \Tr[\rho_{\tsym} O_A] = \frac{\Tr_A[O_A]d^{L-|A|} +\sum_{k=1}^{L-1} \Tr_A[P_{\pi_k}O_A]}{d^L + d(L-1)} \leq  \frac{   d^{-|A|}\Tr_A [O_A]  }{  1+   d^{-L+1 } (L-1)   } +  \frac{d^{-L} (L-1) \norm{O_A}_1 }{ 1+   d^{-L+1 } (L-1)    }. 
\end{equation}

It follows that 
\begin{equation}
|\Tr[\rho_{\tsym} O_A]  -   \frac{1}{d^{|A|}}\Tr_A[O_A]  |  \leq \norm{O_A}_1 O(L d^{-L- |A|} ) +   \norm{O_A}_1 O(L d^{-L} ).
\end{equation}

Using $\norm{O_A}_1 \leq \norm{O_A}_\infty d^{|A|}$, one finds

\begin{equation}
|\Tr[\rho_{\tsym} O_A]  -   \frac{1}{d^{|A|}}\Tr_A[O_A]    |  \leq \norm{O_A}_\infty  O(L d^{-L} ) +   \norm{O_A}_\infty O(L d^{-L +|A|} ).
\end{equation}

Based on the definition of the trace norm: $\norm{X}_1  = \text{sup}_{\norm{O}_{\infty}<1} |\Tr (  XO)|$, the above bound for the difference of expectation values leads to the bound for the trace distance between the two density matrices:

\begin{equation}
\norm{\Tr_{A^C}[\rho_\tsym] - \frac{\one}{d^{|A|}}}_1   \leq   O(L d^{-L} ) +  O(L d^{-(L -|A|)} )  =  O(L d^{-(L -|A|)} ) .
\end{equation}

Therefore, as long as $\frac{|A|}{L}$ is a finite fraction less than $1$, in the thermodynamic limit, the reduced density matrix $\Tr_{A^C}[\rho_\tsym] $ is a maximally mixed state.

\subsubsection{For arbitrary $L$}

We now generalize the proof for arbitrary $L$. Compared to the prime case, the main new feature is the appearance of disconnected cycles in the permutation induced by $T^k$. 

Our goal is to upper bound $\Tr[T^k O_A]$ when $k \neq 0 \pmod{L}$. If $L$ is prime, as we saw before, it is equivalent to a reduced trace over a permuted operator: $\Tr[T^k O_A] = \Tr_A  [ P_{\pi_k} O_A]$. For non-prime $L$, there may be cycles of strings that do not touch any incoming or outgoing leg of $O_A$; see Fig. \ref{fig:non_prime_diagram}. 

\begin{figure}[h]
    \centering
    \includegraphics[width=0.97\linewidth]{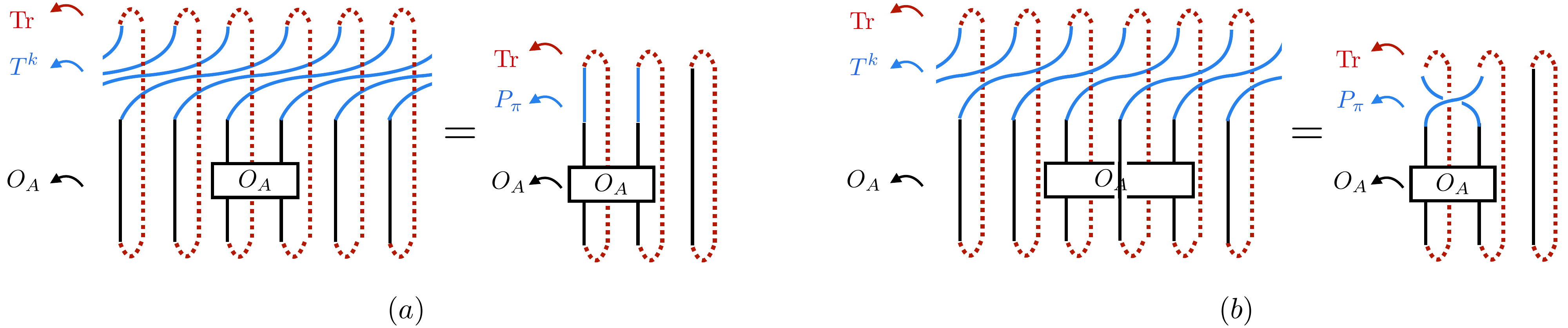}
    \caption{Visualization of $ \Tr[T^k O_A] =   C_{L,k,A} \Tr_A[P_{\pi_k} O_A]$ for non-prime $L$. Note that we have neglected the subscript $k$ when expressing $P_{\pi_k}$ in the diagram. (a) $L=6$, $k=4$, and the two-body operator $O_A$ acts on two neighboring sites. $C_{L,k,A} = d$ since there is a closed string that does not touch any incoming or outgoing leg of $O_A$. The permutation matrix $P_\pi$ is an identity. (b) $L=6$, $k=2$, and the two-body operator $O_A$ acts on the 3rd and 5th sites. $C_{L,k,A} = d$ since there is a closed string that does not touch any incoming or outgoing leg of $O_A$. The permutation matrix $P_\pi$ generates a swap between the two sites.} 
    \label{fig:non_prime_diagram}
\end{figure}

These cycles will contribute to an additional prefactor $C_{L,k,A}$: 
\begin{equation}
    \Tr[T^k O_A] = C_{L,k,A} \Tr_A[ P_{\pi_k}  O_A].
\end{equation}

To evaluate $C_{L,k,A}$,  we analyze the corresponding string diagram (see e.g., Fig.\ref{fig:trace_translation}). Each cycle contributes a factor of $d$ (the trace of identity, or a loop in the string picture) to $C_{L,k,A}$. Moreover, each cycle is in one-to-one correspondence with an arithmetic progression \((a_m) \subseteq \mathbb{Z}_L\) of the form \(a_m = a_0 + mk\), with common difference \(k\), such that all \(a_m \in A^C\), where $A^{C}$ denotes the complement of the subregion $A$. Let $N^C_{L,k,A}$ be the number of such cycles, one has 

\begin{equation}
  C_{L,k,A} = d^{N^C_{L,k,A}}.  
\end{equation} 
Since each cycle necessarily has length $L/\gcd(L,k)$, the number of such cycles that can fit $A^C$ is bounded by 
\begin{equation}\label{eq:number_cycles}
    N^C_{L,k,A} \leq  \frac{|A^C|}{L/\gcd(L,k)} = \gcd(L,k) \left(1- \frac{|A|}{L} \right).
\end{equation}

Using the definition of $\rho_\tsym$, one has 
\begin{align}
    \Tr[\rho_{\tsym} O_A] & =  \frac{\Tr_A[O_A]d^{L-|A|} + \sum_{k=1}^{L-1} C_{L,k,A} \Tr_A[ P_{\pi_k}O_A]}{d^L + \sum_{k=1}^{L-1} \Tr[T^k]} \\
    & = \frac{1}{d^A} \Tr_A[O_A] \frac{1}{1+\sum_{k=1}^{L-1}d^{\gcd(L,k)-L}} + \frac{d^{-L}\sum_{k=1}^{L-1} C_{L,k,A} \Tr_A[P_{\pi_k}O_A]}{1+\sum_{k=1}^{L-1}d^{\gcd(L,k)-L}}.
\end{align}
For $1\leq k \leq L-1$, we have $\gcd(L,k) \leq  \frac{L}{2}$ \footnote{More generally, if $p$ is the smallest prime factor of $L$, then $\gcd(L,k) \leq L/p$ for $1 \leq k \leq L-1$.}, so  that 

\begin{equation}
\sum_{k=1}^{L-1}d^{\gcd(L,k)-L} \leq (L-1)d^{-L/2} = O(Ld^{-L/2}).
\end{equation} 

On the other hand, Eq. \eqref{eq:number_cycles} implies 
\begin{equation}
d^{-L}\sum_{k=1}^{L-1} C_{L,k,A} \leq O(L d^{-(L+ |A|)/2}).
\end{equation}

Using the above two equations, one finds

\begin{align}
    |\Tr[\rho_\tsym O_A] - \frac{1}{d^{|A|}} \Tr_A[O_A]|& \leq \norm{O_A}_1  O(Ld^{-L/2  -|A|  }) + \norm{O_A}_1 O(Ld^{-(L+|A|)/2}),
\end{align}
where we have also used $| \Tr_A[P_{\pi} O_A]  | \leq \norm{P_{\pi} O_A}_1 = \norm{O_A}_1$. Moreover, with $\norm{O_A}_1 \leq \norm{O_A}_\infty d^A$, one finds

\begin{align}
    |\Tr[\rho_\tsym O_A] - \frac{1}{d^{|A|}} \Tr_A[O_A]|& \leq \norm{O_A}_\infty  O(Ld^{-L/2   }) + \norm{O_A}_\infty O(Ld^{-(L-|A|)/2})
\end{align}

Based on the definition of trace norm: $\norm{X}_1  = \text{sup}_{\norm{O}_{\infty}<1} |\Tr (  XO)|$, the above bound for the difference of expectation values leads to the bound for the trace distance between the two density matrices:

\begin{equation}
\norm{\Tr_{A^C}[\rho_\tsym] - \frac{\one}{d^{|A|}}}_1  \leq O(Ld^{-L/2 }) +  O(Ld^{-(L-|A|)/2}) =   O(Ld^{-(L-|A|)/2}).  
\end{equation}
Therefore, as long as $\frac{|A|}{L}$ is a finite fraction less than $1$, in the thermodynamic limit, the reduced density matrix $\Tr_{A^C}[\rho_\tsym] $ is a maximally mixed state. 

\section{Operator space entanglement}\label{appendix:operator_space_entanglement}

\begin{figure}[t]
    \centering
    \begin{tikzpicture}

\def\topL{7}
\def\topShift{0}
\def\topX{0}
\def\topY{5.2}

\drawblock{top}{\topL}{\topShift}{\topX}{\topY}

\foreach \i in {1,...,5} {
    \pgfmathtruncatemacro{\nexti}{\i+2}
    \connectIntra{top}{\i}{\nexti}
}

\path [name path=leftPlane] (\topX + 0.4*\xdistance, \topY - 2) -- (\topX + 0.4*\xdistance, \topY + 2);
\path [name path=rightPlane] (\topX + \topL*\xdistance + 0.6*\xdistance, \topY - 2) -- (\topX + \topL*\xdistance + 0.6*\xdistance, \topY + 2);

\path [name path=lineT1] (top-T-1) -- ++(-2*\xdistance - \topShift, -\ydistance);
\path [name path=lineT2] (top-T-2) -- ++(-2*\xdistance - \topShift, -\ydistance);
\draw[thickline, name intersections={of=leftPlane and lineT1, by=intT1}] (intT1) -- (top-T-1);
\draw[thickline, name intersections={of=leftPlane and lineT2, by=intT2}] (intT2) -- (top-T-2);

\pgfmathtruncatemacro{\topLminusOne}{\topL-1}
\path [name path=lineB6] (top-B-\topLminusOne) -- ++(2*\xdistance + \topShift, \ydistance);
\path [name path=lineB7] (top-B-\topL) -- ++(2*\xdistance + \topShift, \ydistance);
\draw[thickline, name intersections={of=rightPlane and lineB6, by=intB6}] (top-B-\topLminusOne) -- (intB6);
\draw[thickline, name intersections={of=rightPlane and lineB7, by=intB7}] (top-B-\topL) -- (intB7);

\node[left] at ([xshift=-1.0cm]top-B-1 |- {0, \topY+\ydistance/2}) {\Large $|T^2\rangle\!\rangle = d^{-7/2}$};
\node[right] at ([xshift=1.0cm]top-T-\topL) {\Large $\mathcal{H}$};
\node[right] at ([xshift=1.0cm]top-B-\topL) {\Large $\overline{\mathcal{H}}$};

\draw [decorate,decoration={brace,amplitude=6pt}, thick] 
    ([yshift=6pt]top-T-3.north west) -- ([yshift=6pt]top-T-5.north east) 
    node [midway,yshift=16pt, font=\Large] {$A$};

\coordinate (arrowStart) at ([yshift=-0.3cm]top-B-4);
\coordinate (arrowEnd) at ([yshift=-1.8cm]arrowStart);
\draw[->, ultra thick, >=stealth] (arrowStart) -- (arrowEnd) node[midway, right, xshift=2pt] {\Large $\text{Tr}_{(A \overline {A})^c}$};

\def\botL{3}
\def\botShift{0.4}
\def\outX{2.2}
\def\outY{2.0}
\def\inX{2.2}
\def\inY{0}

\drawblock{out}{\botL}{\botShift}{\outX}{\outY}
\drawblock{in}{\botL}{\botShift}{\inX}{\inY}

\connectIntra{in}{1}{3}

\connectInter[cross]{out}{T}{1}{in}{T}{1}
\connectInter[cross]{out}{T}{2}{in}{T}{2}
\connectInter[cross]{out}{B}{2}{in}{B}{2}
\connectInter[cross]{out}{B}{3}{in}{B}{3}

\connectIntra[cross]{out}{1}{3}

\coordinate (botCenterY) at (0, \outY/2 + \inY/2 + \ydistance/2);
\node[left] at ([xshift=-0.5cm]out-B-1 |- botCenterY) {\Large $R_2 = d^{-1-2\cdot 2}$};

\node[right] at ([xshift=0.5cm]out-T-\botL) {\Large $\mathcal{H}_{\text{out}}$};
\node[right] at ([xshift=0.5cm]out-B-\botL) {\Large $\overline{\mathcal{H}}_{\text{out}}$};
\node[right] at ([xshift=0.5cm]in-T-\botL) {\Large $\mathcal{H}_{\text{in}}$};
\node[right] at ([xshift=0.5cm]in-B-\botL) {\Large $\overline{\mathcal{H}}_{\text{in}}$};

\end{tikzpicture}
    \caption{Double Hilbert space representation of the translation operator $T^n$ (top) and its reduced density matrix $R_n$ on a contiguous subsystem $A \overline{A}$ (bottom). For a system of size $L=7$ and translation index $n=2$, the normalized pure state $\kett{T^2} \in \Hilb \otimes \overline{\Hilb}$ is depicted by slanted lines connecting the ``bra'' space $\overline{\Hilb}$ to the ``ket'' space $\mathcal{H}$, with a normalization prefactor of $d^{-7/2} = d^{-L/2}$. Tracing out the complement region $(A \overline{A})^c$ contracts its 'out' and 'in' spaces, generating the reduced operator $R_2 : \Hilb_{\text{in}} \otimes \overline{\Hilb}_{\text{in}} \to \Hilb_{\text{out}} \otimes \overline{\Hilb}_{\text{out}}$. The vertical lines represent identity operators that emerge from the Bell pairs straddling the boundary of region $A$. The final normalization $d^{-1-2\cdot 2} = d^{-(|A|-n) - 2n}$ ensures $\Tr(R_2) = 1$ by assigning a factor of $d^{-1}$ to each identity operator and each Bell pair.}
    \label{fig:translation_double_space}
\end{figure}

Here, we calculate the operator space entanglement of $\rho_\tau$ using both von Neumann and Rényi entropies. First, we introduce some basic terminology: the doubled state associated to $\rho_\tsym$ is $\kett{\rho_\tsym} \propto \sum_{n=0}^{L-1} \kett{T^n} \in \Hilb \otimes \overline{\Hilb}$, where $\kett{T^n}$ is the state associated to the translation operator, which can be expressed as a series of Bell states connecting sites related by translation: $\kett{T^n} = \bigotimes_{i=1}^L \kett{(i+n,\overline{i})}$. We assume that the subregion $A$ is an interval and, without loss of generality, that $|A| \leq L/2$. The reduced density matrix in $A$ is $R_A = \Tr_{(A\overline{A})^c} \kettbbra{\rho_\tsym}{\rho_\tsym} \propto  \sum_{m,n=0}^{L-1} R_{m,n}$, where $R_{m,n} = \Tr_{(A\overline{A})^c}  \kettbbra{T^m}{T^n}$. Here, $\overline{A}$ is the corresponding region $A$ in the bra space in $\overline{\Hilb}$. For simplicity, we denote the diagonal terms $R_{n,n}$ by $R_n \defeq R_{n,n}$. See Fig.~\ref{fig:translation_double_space} for a visual illustration of $\kett{T^n}$ and $R_n$ in terms of string diagrams, which will be useful for future calculations.

The rest of the section will proceed in three steps:
\begin{enumerate}
    \item Prove that the off-diagonal terms $R_{m,n}$ make up a vanishingly small contribution to $R_A$;
    \item Prove that the diagonal terms $R_n \equiv R_{n,n}$ are approximately orthogonal;
    \item Finally, calculate the entropy of $R_A$ by employing the approximations above.
\end{enumerate}
For the proofs below, we will make heavy use of the fact that the operators involved can be viewed as string diagrams in the tensor network sense, since they are formed by Bell pairs and identity matrices.

\begin{figure}
    \centering
    \def\blocksep{2.0}
\begin{tikzpicture}
\def\shift{0.3}

\def\X{2} %
\def\L{7}
\def\topY{0} %
\def\botY{-\blocksep} %

\node[left] at (\X + 0.2, -\blocksep/2 + \ydistance/2) {\Large $R_{2,1} = d^{-7}$};

\drawblock{topleft}{\L}{\shift}{\X}{\topY}
\drawblock{botleft}{\L}{\shift}{\X}{\botY}

\begin{pgfonlayer}{background}
\foreach \i in {1,2,6,7} {
    \draw[reddotted] (topleft-T-\i) -- (botleft-T-\i);
    \draw[reddotted] (topleft-B-\i) -- (botleft-B-\i);
}
\end{pgfonlayer}

\draw [decorate,decoration={brace,amplitude=6pt}, thick] 
    ([yshift=6pt]topleft-T-3.north west) -- ([yshift=6pt]topleft-T-5.north east) 
    node [midway,yshift=16pt, font=\Large] {$A$};

\connectIntra{topleft}{1}{3}
\connectIntra{topleft}{2}{4}
\connectIntra{topleft}{3}{5}
\connectIntra{topleft}{4}{6}
\connectIntra{topleft}{5}{7}

\connectIntra{botleft}{1}{2}
\connectIntra{botleft}{2}{3}
\connectIntra{botleft}{3}{4}
\connectIntra{botleft}{4}{5}
\connectIntra{botleft}{5}{6}
\connectIntra{botleft}{6}{7}

\path [name path=leftPlane] (\X + 0.4*\xdistance, \botY - 0.5) -- (\X + 0.4*\xdistance, \topY + \ydistance + 0.5);
\path [name path=rightPlane] (\X + \L*\xdistance + 0.6*\xdistance, \botY - 0.5) -- (\X + \L*\xdistance + 0.6*\xdistance, \topY + \ydistance + 0.5);

\path [name path=toplineT1] (topleft-T-1) -- ++(-2*\xdistance - \shift, -\ydistance);
\path [name path=toplineT2] (topleft-T-2) -- ++(-2*\xdistance - \shift, -\ydistance);
\draw[thickline, name intersections={of=leftPlane and toplineT1, by=topintT1}] (topintT1) -- (topleft-T-1);
\draw[thickline, name intersections={of=leftPlane and toplineT2, by=topintT2}] (topintT2) -- (topleft-T-2);

\pgfmathtruncatemacro{\LminusOne}{\L-1}
\path [name path=toplineB6] (topleft-B-\LminusOne) -- ++(2*\xdistance + \shift, \ydistance);
\path [name path=toplineB7] (topleft-B-\L) -- ++(2*\xdistance + \shift, \ydistance);
\draw[thickline, name intersections={of=rightPlane and toplineB6, by=topintB6}] (topleft-B-\LminusOne) -- (topintB6);
\draw[thickline, name intersections={of=rightPlane and toplineB7, by=topintB7}] (topleft-B-\L) -- (topintB7);

\path [name path=botlineT1] (botleft-T-1) -- ++(-\xdistance - \shift, -\ydistance);
\draw[thickline, name intersections={of=leftPlane and botlineT1, by=botintT1}] (botintT1) -- (botleft-T-1);

\path [name path=botlineB7] (botleft-B-\L) -- ++(\xdistance + \shift, \ydistance);
\draw[thickline, name intersections={of=rightPlane and botlineB7, by=botintB7}] (botleft-B-\L) -- (botintB7);

\node at (\X + \L*\xdistance + 0.6*\xdistance + 0.8, -\blocksep/2 + \ydistance/2) {\Large $= d^{-7}$};

\def\rightL{3}
\def\rightX{\X + \L*\xdistance + 0.6*\xdistance + 0.5}

\drawblock{topright}{\rightL}{\shift}{\rightX}{\topY}
\drawblock{botright}{\rightL}{\shift}{\rightX}{\botY}

\begin{pgfonlayer}{background}
    \connectIntra{topright}{3}{1}
    \connectInter{topright}{T}{2}{botright}{T}{1}
    \connectIntra{botright}{2}{3}
    \connectIntra{botright}{1}{2}
\end{pgfonlayer}

\connectIntra[cross]{topright}{1}{3}

\connectInter[cross]{topright}{B}{2}{botright}{B}{3}

\end{tikzpicture}
    \caption{Accompanying illustration for the proof of Lemma \ref{lemma:off_diagonal_vanish}. Depicted here is the double Hilbert space operator $R_{2,1} = \Tr_{(A\overline{A})^c}  \kettbbra{T^2}{T}$ for $L=7$. It has $T_{2,1} = 2$ horizontal lines in the outgoing space (on the top), $B_{2,1} = 2$ horizontal lines in the incoming space (on the bottom), and $V_{2,1} = 2$ vertical lines. No loops are formed in this contraction, so $L_{2,1} = 0$. One can further check that $\norm{R_{2,1}} =  d^{-3}$.}
    \label{fig:off_diagonal}
\end{figure}

For the first point, we start with the following lemma
\begin{lemma}\label{lemma:off_diagonal_vanish}
$\norm{R_{m,n}}_1 \leq d^{- |A| / 2}$ for $m \neq n$.
\end{lemma}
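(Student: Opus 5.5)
The plan is to compute $\norm{R_{m,n}}_1$ exactly in the string-diagram language of Figs.~\ref{fig:translation_double_space} and \ref{fig:off_diagonal}, and then bound the resulting exponent combinatorially.

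\textbf{Step 1: an exact formula.} Write $\kett{T^m}=d^{-L/2}\bigotimes_i \kett{(i+m,\overline{i})}$, where each factor is an unnormalized Bell pair. Then $R_{m,n}=\Tr_{(A\overline{A})^c}\kettbbra{T^m}{T^n}$ equals $d^{-L}$ times a contraction of strands. After contracting the complement sites, each connected strand is one of the following:
\begin{itemize}
\item a \emph{cup}, with both endpoints in the out space $\Hilb_{\rm out}\otimes\overline{\Hilb}_{\rm out}$ (say $T_{m,n}$ of these);
\item a \emph{cap}, with both endpoints in the in space ($B_{m,n}$ of these);
\item a \emph{through line}, joining out to in ($V_{m,n}$ of these);
\item a closed loop ($L_{m,n}$ of these), which contributes a scalar factor $d$.
\end{itemize}
Up to a permutation of the in legs, which does not change the trace norm, the operator is therefore
\[
d^{-L+L_{m,n}}\,\kettbbra{\Phi_{\rm cups}}{\Phi_{\rm caps}}\otimes \one^{\otimes V_{m,n}} .
\]
Its trace norm is $d^{-L+L_{m,n}}\,d^{T_{m,n}/2}\,d^{B_{m,n}/2}\,d^{V_{m,n}}$. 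Counting endpoints gives $2T_{m,n}+V_{m,n}=2|A|=2B_{m,n}+V_{m,n}$, so $T_{m,n}=B_{m,n}$ and
\[
\norm{R_{m,n}}_1=d^{-(L-2|A|)-T_{m,n}+L_{m,n}} .
\]
As a check, Fig.~\ref{fig:off_diagonal} has $L=7$, $|A|=3$, $T=2$, $L_{2,1}=0$, which gives $d^{-3}$. The lemma is thus reduced to the inequality
\[
L-2|A|+T_{m,n}-L_{m,n}\;\geq\; |A|/2 .
\]

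\textbf{Step 2: bounding the loops.} Follow a loop. It alternates ket strands $(j+m)\leftarrow \overline{j}$ and bra strands, and every junction is a contraction at a complement site. After one ket strand and one bra strand the position has shifted by $m-n\neq 0$. Hence a loop closes only after $L/\gcd(L,m-n)\geq 2$ such double steps, so it uses at least $4$ of the $2(L-|A|)$ complement contractions. More precisely, it must run along a full arithmetic progression of step $m-n$ lying inside the complement. This gives $L_{m,n}\leq (L-|A|)/2$. It also shows that loops exist only when $A$ misses an entire coset of $\gcd(L,m-n)\Z_L$.

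\textbf{Step 3: bounding the cups (main obstacle).} I need a lower bound on $T_{m,n}$ that compensates for the loops, especially near $|A|=L/2$, where $L-2|A|$ is small. Two sources of cups are available:
\begin{itemize}
\item ket strands $(j+m,\overline{j})$ with $j, j+m\in A$ are cups directly;
\item more generally, any strand path that enters the complement from $A$ and re-emerges in the out space is a cup.
\end{itemize}
The plan is a charging argument. The $2(L-|A|)$ complement contractions are partitioned among the loops and the interiors of the open strands. A through line or cup that traverses the complement consumes contractions. Contractions not used by loops must be absorbed by open strands, and because $m\neq n$ these strands cannot all be through lines: a through line pairs an out point with an in point that is shifted by the net displacement, and the boundary points of the interval $A$ force mismatches.

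I would first try the following concrete trade-off. Let $\ell$ be the total number of contractions used by loops. I would show
\[
T_{m,n}\geq |A|-\bigl(2(L-|A|)-\ell\bigr)/2 ,
\]
that is, each open-strand contraction can convert at most half a cup into a through-line pair. Combining this with $L_{m,n}\leq \ell/4$ from Step 2 gives the target inequality after simple algebra, using $|A|\leq L/2$.

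If that charging estimate fails, the fallback is Uhlmann's theorem. It gives
\[
\norm{\Tr_{c}\kettbbra{x}{y}}_1=\sup_{U_{A\overline A}}\bigl|\bbra{y}U\otimes\one\kett{x}\bigr|\leq F(\rho^x_c,\rho^y_c),
\]
where $\rho^x_c$ and $\rho^y_c$ are the reduced states of $\kett{T^m}$ and $\kett{T^n}$ on the complement. Both reduced states are products of Bell pairs and maximally mixed sites, and their fidelity can be bounded by comparing the two distinct pairings on the at least $L-|A|\geq|A|$ complement sites.
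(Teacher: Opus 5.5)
\textbf{There is a genuine gap in Step 3, and it cannot be filled as stated.} Your Step 1 is correct and is exactly the paper's computation: using $V_{m,n}=2|A|-2T_{m,n}$, your exponent $-(L-2|A|)-T_{m,n}+L_{m,n}$ equals the paper's $-(L-|A|)+L_{m,n}+V_{m,n}/2$. Step 2 is also fine.

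Step 3 fails in two ways.

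First, the proposed charging inequality $T_{m,n}\ge |A|-(2(L-|A|)-\ell)/2$ is just $V_{m,n}\le 2(L-|A|)-\ell$. That holds trivially, since every through line crosses at least one complement contraction not used by loops. Combined with $L_{m,n}\le \ell/4$, it only gives $L-2|A|+T_{m,n}-L_{m,n}\ge \ell/4$. This is $0$ when there are no loops, so the ``simple algebra'' never produces $|A|/2$.

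Second, and decisively, the reduced inequality is false on part of the range $|A|\le L/2$. Take $L$ even, $A=\{1,\dots,L/2\}$, $m=L/2$, $n=L/2-1$.
\begin{itemize}
\item Every complement site of $\kett{T^{L/2}}$ is paired into $A\overline{A}$, so its marginal on $(A\overline{A})^c$ is maximally mixed.
\item The marginal of $\kett{T^{L/2-1}}$ contains exactly one Bell pair, $(L,\overline{L/2+1})$, and is maximally mixed elsewhere.
\item Uhlmann's theorem, as in your fallback (it is an equality here), gives $\norm{R_{m,n}}_1=F=d^{-1}$.
\item In string language, $T_{m,n}=1$, $L_{m,n}=0$, $V_{m,n}=2|A|-2$, so your exponent equals $1$.
\end{itemize}
For $L\ge 6$ (e.g.\ $L=6$, $m=3$, $n=2$) this exceeds $d^{-|A|/2}$. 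So the fallback exposes the failure rather than repairing it.

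\textbf{The paper's proof has the same hole.} It closes the step with $V_{m,n}\le\min(|m|,|n|,|A|)$, which is false. In its own Fig.~\ref{fig:off_diagonal}, $V_{2,1}=2>\min(2,1,3)$, and in the example above $V_{m,n}=2|A|-2$.

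\textbf{What your approach can prove.} Sharpen your Step 2 observation: loops require $A$ to miss an entire coset of $g\Z_L$, where $g=\gcd(L,m-n)\le L/2$. Since $A$ is an interval, this gives $L_{m,n}\le\max(0,g-|A|)$. Together with $T_{m,n}\ge 0$, this yields $\norm{R_{m,n}}_1\le d^{-(L/2-|A|)}$, which implies the stated bound whenever $|A|\le L/3$. You can push this to $|A|\le 2L/5$: in that window loops force $g=L/2$, and a direct check shows the exponent is then exactly $L/2$. Near $|A|=L/2$, however, the lemma needs a different formulation.
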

\begin{proof}
    First, we refer the reader to Fig.~\ref{fig:off_diagonal}, which exemplifies much of the steps below.

    After tracing out $(A\overline{A})^c$, $R_{m,n}$ is a string diagram of normalization factor $d^{-L + L_{m,n}}$, with $L_{m,n}$ indicating the number of loops formed in the contraction process. Furthermore, it consists of $T_{m,n}$ horizontal lines in the outgoing space (on the top), $B_{m,n}$ horizontal lines in the incoming space (on the bottom), and $V_{m,n}$ vertical lines, which may connect points diagonally.
    
    To estimate the trace norm $\norm{R_{m,n}}_1 = \Tr \sqrt{R_{m,n}^\dagger R_{m,n}}$, we now examine the resulting string diagram formed by $R_{m,n}^\dagger R_{m,n}$. Since $R_{m,n}^\dagger$ is the same as $R_{m,n}$, but with incoming and outgoing degrees of freedom flipped, we have that the string diagram of $R_{m,n}^\dagger R_{m,n}$ contains $B_{m,n}$ horizontal lines at the bottom and, in parallel, $B_{m,n}$ at the top, and $V_{m,n}$ straight vertical lines. In the contraction of the middle layer, $R_{m,n}^\dagger R_{m,n}$ acquires a further factor of $d^{T_{m,n}}$. The final normalization factor is, then, $\mathcal{N} = d^{-2L + 2 L_{m,n} + T_{m,n}}$.

    From the form of the string diagram of $R_{m,n}^\dagger R_{m,n}$, we have that $(R_{m,n}^\dagger R_{m,n})^2 = \mathcal{N} d^{B_{m,n}} R_{m,n}^\dagger R_{m,n}$. By taking the square root of this equation, we have 
    \begin{equation}
        \sqrt{R_{m,n}^\dagger R_{m,n}} = \mathcal{N}^{-1/2} d^{-B_{m,n}/2} R_{m,n}^\dagger R_{m,n}.
    \end{equation}
    Furthermore, we have that 
    \begin{equation}
        \Tr[R_{m,n}^\dagger R_{m,n}] = \mathcal{N} d^{B_{m,n} + V_{m,n}}.
    \end{equation}
    Hence,
    \begin{align}
        \norm{R_{m,n}}_1  & = \Tr \sqrt{R_{m,n}^\dagger R_{m,n}} \\ 
        & = \mathcal{N}^{1/2} d^{\frac{B_{m,n}}{2} + V_{m,n}} \\
        & = d^{-L + L_{m,n} + \frac{T_{m,n}}{2} + V_{m,n} + \frac{B_{m,n}}{2}} \\
        & = d^{-(L-|A|) + L_{m,n} + \frac{V_{m,n}}{2}},
    \end{align}
    where we have used that $T_{m,n} + V_{m,n}/2 = B_{m,n} + V_{m,n}/2 =  |A|$ in the last line. 

    Now, we upper bound the one-norm as a function of $L$ and $|A|$. For $V_{m,n}$, it is simply bounded as $V_{m,n} \leq \min(|m|,|n|,|A|)$, where here the range of $m$ and $n$ is $-L/2 \leq m, n \leq L/2$. For the loop count $L_{m,n}$, first note that the period of any such loop after a ket site comes back to another ket site after the contractions is $p = |m-n|$. Furthermore, the available region for the loop is $(A \cup (A+p))^c$, for two sites separated by $p$, neither one can be inside region $A$. Thus, the number of loops $L_{m,n}$ is upper bounded by (this is analogous to Eq.~\eqref{eq:number_cycles})
    \begin{equation}
        L_{m,n} \leq \frac{L - |A \cup (A + p)|}{L/p} = p \left( 1 - \frac{|A| + \min(|A|,p)}{L}\right).
    \end{equation}
    Since $p = |m-n|$ and $|A| \leq L/2$, one can maximize the combination $L_{m,n} + V_{m,n}/2$, with $V_{n,m} \to \min(|m|, |n|, |A|)$, over the allowed values of $m$ and $n$ to find that $L_{m,n} + V_{m,n}/2 \leq L - 3 |A|/2$. Putting this bound back in the trace norm, we have
    \begin{align}
        \norm{R_{m,n}}_1  \leq d^{-(L - |A|) + L - 3|A|/2} = d^{-|A|/2}.
    \end{align}
\end{proof}

As an immediate corollary, we can restrict to the diagonal terms $R_n$ up to an exponentially small factor:
\begin{corollary}
    $R_A = \frac{1}{L} \sum_{n=0}^{L-1} R_n + O(L^2d^{-|A|/2})$, for $C > 0$ a constant.
\end{corollary}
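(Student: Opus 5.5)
The plan is to write the normalized reduced state explicitly and separate diagonal from off-diagonal terms. Since $\kett{\rho_\tsym} = \mathcal{Z}^{-1/2}\sum_{n=0}^{L-1}\kett{T^n}$ with $\mathcal{Z} = \sum_{m,n}\bbrakett{T^m|T^n}$, we have
\begin{equation}
R_A = \frac{1}{\mathcal{Z}}\Big(\sum_{n} R_n + \sum_{m\neq n} R_{m,n}\Big).
\end{equation}
First I will compute $\mathcal{Z}$. Each $\kett{T^n}$ is normalized, so the diagonal contributes exactly $L$. For $m\neq n$, $\bbrakett{T^m|T^n} = d^{-L}\Tr[T^{n-m}] = d^{\gcd(n-m,L)-L}\le d^{-L/2}$, using the computation of $\Tr T^k$ in the proof of Lemma \ref{lemma:tsym_dim}. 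Hence
\begin{equation}
\mathcal{Z} = L\bigl(1+O(Ld^{-L/2})\bigr).
\end{equation}

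Next I will bound the off-diagonal sum with Lemma \ref{lemma:off_diagonal_vanish}. There are $L(L-1)$ such terms, each with $\norm{R_{m,n}}_1\le d^{-|A|/2}$. The triangle inequality then gives
\begin{equation}
\Big\|\frac{1}{\mathcal{Z}}\sum_{m\neq n}R_{m,n}\Big\|_1 \le \frac{L^2 d^{-|A|/2}}{L} = O(Ld^{-|A|/2}),
\end{equation}
which is within the stated $O(L^2 d^{-|A|/2})$.

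Finally I will replace $1/\mathcal{Z}$ by $1/L$ in the diagonal part. Each $R_n$ is a reduced density matrix of a normalized pure state, so $\norm{R_n}_1 = 1$. The resulting error is
\begin{equation}
\Big|\frac{1}{\mathcal{Z}}-\frac{1}{L}\Big|\cdot L = O(Ld^{-L/2}).
\end{equation}
Since $|A|\le L/2<L$, this is dominated by $O(L^2d^{-|A|/2})$. Combining the two error terms in trace norm gives the corollary, where $O(\cdot)$ is understood in trace norm and the constant $C$ is absorbed into the big-O.

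The only step needing care is the normalization $\mathcal{Z}$: the overlaps $\bbrakett{T^m|T^n}$ must be exponentially small. This follows from $\gcd(k,L)\le L/2$ for $k\neq0$. Everything else is the triangle inequality combined with Lemma \ref{lemma:off_diagonal_vanish}.
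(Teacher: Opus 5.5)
Your deduction is the one the paper intends: it gives no separate proof and treats the corollary as immediate from Lemma~\ref{lemma:off_diagonal_vanish}. Your normalization bookkeeping is also correct and slightly sharper. Since $\bbrakett{T^m|T^n}=d^{\gcd(n-m,L)-L}>0$, one has $L\le\mathcal{Z}\le L+L^2d^{-L/2}$, so the off-diagonal part costs only $O(Ld^{-|A|/2})$. The trouble is the step you treat as routine. Lemma~\ref{lemma:off_diagonal_vanish} is false for $|A|$ close to $L/2$, and so is the corollary. Neither your proof nor the paper's establishes the statement on the full stated range $|A|\le L/2$.

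\textbf{Where the lemma fails.} In the lemma's proof, the formula $\norm{R_{m,n}}_1=d^{-(L-|A|)+L_{m,n}+V_{m,n}/2}$ is right. However, $V_{m,n}$ counts identity strings on both the ket and the bra side, so only $V_{m,n}\le 2\min(|m|,|n|,|A|)$ holds. Fig.~\ref{fig:off_diagonal} itself has $V_{2,1}=2>\min(2,1,3)$. Concretely, take $A=\{1,\dots,a\}$ with $a\le L/2$ and $(m,n)=(a,a+1)$. The diagram of $R_{a,a+1}$ has $2(a-1)$ vertical strings, one cup, one cap and no loops, so $\norm{R_{a,a+1}}_1=d^{-(L-2a)-1}$. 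This exceeds $d^{-a/2}$ once $a>2(L+1)/5$; for $d=2$, $L=8$, $a=4$ it equals $1/2>1/4$.

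\textbf{No cancellation rescues the sum.} All $R_{m,n}$ have nonnegative entries in the product basis. Take $Y=R_{a,a+1}^\dagger/\norm{R_{a,a+1}}_\infty$. Then $\Tr[R_{m,n}Y]\ge 0$ for all $m\neq n$, and $\Tr[R_{a,a+1}Y]=d^{-(L-2a)-1}$. Hence $\norm{R_A-\frac{1}{L}\sum_n R_n}_1\ge \mathcal{Z}^{-1}d^{-(L-2a)-1}-O(Ld^{-L/2})$. At $|A|=L/2$ this is of order $1/L$, not $O(L^2d^{-L/4})$.

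\textbf{What the formula does give.} Loops are orbits of the shift by $m-n$, i.e.\ residue classes modulo $g=\gcd(m-n,L)\le L/2$. An interval meets $\min(|A|,g)$ of them, so $L_{m,n}\le L/2-|A|$. Together with $V_{m,n}/2\le|A|$, this yields $\norm{R_{m,n}}_1\le d^{-(L/2-|A|)}$. Plugged into your argument, this proves $R_A=\frac{1}{L}\sum_nR_n+O(Ld^{-(L/2-|A|)})$. That recovers the stated bound for $|A|\le L/3$ and is small whenever $L/2-|A|\gg\log_d L$. The corollary should be restricted to such a regime.
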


More precisely, to be able to neglect the error term above, we will henceforth assume that $|A| \gg \log L$. For simplicity, we will assume that the equality above holds exactly.

For the approximate orthogonality of $R_n$, we separate the decomposition $R_A = \sum_{n=0}^{L-1} R_n = \sum_{n=-\lceil L/2 \rceil}^{\lfloor L/2 \rfloor} R_n$ into one sum of terms from $n = -|A|+1$ to $|A|-1$, and another with the rest of the terms. We do this because for $|A| \leq |n| \leq L/2$, $R_n$ is the maximally mixed state, $R_n \propto \one_{A \bar{A}}$. In that way, we have $R_A = \sum_{n=-|A|+1}^{|A|} p_n R_n$, where $p_n = 1/L$ if $|n| < |A|$ and $p_{|A|} = 1 - \frac{2|A|-1}{L}$. Hence, below, we will argue for the orthogonality between $R_n$ and $R_m$ if  both $n,m \leq |A|$, and between $R_n$ and the maximally mixed state. 

We separate the analysis of the von Neumann and Rényi entropies due to technical differences in their proofs

\subsection{Rényi operator space entanglement}
\label{app-sec:renyi_OSE}
We turn to the Rényi-$k$ operator space entanglement calculation. We will prove results for integer $k \geq 2$, but will later conjecture the value for other real-valued $k$ through analytical continuation. The final results are illustrated in Fig.~\ref{fig:operator_space_entanglement_comparison}.

\begin{figure}
    \centering
    \begin{tikzpicture}

\tikzset{
    r1line/.style={thickline, red},
    r3line/.style={thickline, blue},
    crossr1/.style={r1line, preaction={draw, white, line width=4.5pt}},
    crossr3/.style={r3line, preaction={draw, white, line width=4.5pt}},
    crossblue/.style={thickline, blue, preaction={draw, white, line width=4.5pt}},
    reddash/.style={thickline, red, dashed, dash pattern=on 4pt off 4pt}
}

\def\L{6}           %
\def\Shift{0.4}     %
\def\outX{3.2}      %
\def\outY{2.0}      %
\def\inX{3.2}       %
\def\inY{0}         %

\drawblock{out}{\L}{\Shift}{\outX}{\outY}
\drawblock{in}{\L}{\Shift}{\inX}{\inY}

\coordinate (leftLabel) at (\inX + 0.8, \outY/2 + \inY/2 + \ydistance/2);
\node[left] at (leftLabel) {\Large $\text{Tr}[\textcolor{red}{R_1}\textcolor{blue}{R_3}] = d^{-7} \cdot d^{-9}$};

\coordinate (rightLabel) at (\inX + \L*\xdistance + \Shift + 0.5, \outY/2 + \inY/2 + \ydistance/2);
\node[right] at (rightLabel) {\Large $= d^{-2 \cdot 6}$};

\connectIntra[r3line]{in}{1}{4}
\connectIntra[r3line]{in}{2}{5}
\connectIntra[r3line]{in}{3}{6}

\connectIntra[crossr1]{in}{1}{2}
\connectIntra[crossr1]{in}{2}{3}
\connectIntra[crossr1]{in}{3}{4}
\connectIntra[crossr1]{in}{4}{5}
\connectIntra[crossr1]{in}{5}{6}

\connectInter[crossblue]{out}{T}{2}{in}{T}{2}
\connectInter[crossblue]{out}{T}{3}{in}{T}{3}
\connectInter[crossblue]{out}{B}{4}{in}{B}{4}
\connectInter[crossblue]{out}{B}{5}{in}{B}{5}

\connectInter[crossblue]{out}{T}{1}{in}{T}{1}
\connectInter[reddash]{out}{T}{1}{in}{T}{1}

\connectInter[crossblue]{out}{B}{6}{in}{B}{6}
\connectInter[reddash]{out}{B}{6}{in}{B}{6}

\connectIntra[crossr3]{out}{1}{4}
\connectIntra[crossr3]{out}{2}{5}
\connectIntra[crossr3]{out}{3}{6}

\connectIntra[crossr1]{out}{1}{2}
\connectIntra[crossr1]{out}{2}{3}
\connectIntra[crossr1]{out}{3}{4}
\connectIntra[crossr1]{out}{4}{5}
\connectIntra[crossr1]{out}{5}{6}

\end{tikzpicture}
    \caption{Accompanying illustration to the proof of Lemma \ref{lemma:renyi-2_overlap}. Depicted here is the Hilbert-Schmidt inner product $\Tr[R_1 R_3]$ for $|A| = 6$. This trace is equivalent to superposing the string diagrams for both $R_1$ and $R_3$ and counting $d^{L_{1,3}}$, where $L_{1,3} = 4$ is the number of loops formed after contraction. Here, we can see that there is one loop for each pair of vertical lines from both $R_1$ and $R_3$, including coincidental ones represented by the red-and-blue dashed lines.}
    \label{fig:hs_overlap}
\end{figure}

First, we prove that the Hilbert-Schmidt overlap between the $R_n$ states is exponentially small in $|A|$. This measure of orthogonality will suffice for the Rényi-$k$ case with $k\geq 2$, since it is a native Rényi-2-type quantity.
\begin{lemma}\label{lemma:renyi-2_overlap}
    $\Tr[R_m R_n] = d^{-2|A|}$ for all $m \neq n$.
\end{lemma}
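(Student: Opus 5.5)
The plan is to compute $\Tr[R_m R_n]$ directly as a contraction of string diagrams and to count loops. Since $\kettbbra{T^n}{T^n}$ is Hermitian, so is $R_n$. The overlap is therefore the product of the two normalization prefactors times $d^{\#\text{loops}}$. For $|n|\geq |A|$ we already know $R_n\propto \one_{A\overline{A}}$, so $R_n = d^{-2|A|}\one$. In that case $\Tr[R_mR_n]=d^{-2|A|}\Tr[R_m]=d^{-2|A|}$ immediately, and we may restrict to $|m|,|n|<|A|$.

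First I will record the structure of $R_n$ for $0\le |n|<|A|$, read off from Fig.~\ref{fig:translation_double_space}. There are $|A|-|n|$ Bell pairs joining ket site $i+n$ to bra site $i$ with both sites in $A$; each appears identically in the out and in layers as a ``horizontal'' edge. There are $|n|$ ket sites and $|n|$ bra sites whose partners lie outside $A$; tracing out $(A\overline{A})^c$ turns each of these into a vertical identity line from the in layer to the out layer. The normalization is $d^{-(|A|-|n|)-2|n|}=d^{-|A|-|n|}$. Hence
\[
\Tr[R_mR_n]=d^{-2|A|-|m|-|n|}\,d^{\ell_{m,n}},
\]
and the lemma reduces to showing that the number of loops is $\ell_{m,n}=|m|+|n|$ whenever $m\neq n$. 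This agrees with Fig.~\ref{fig:hs_overlap}, where $m=1$, $n=3$ gives $4$ loops.

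Next I will count loops. In the trace, the out layer of $R_m$ is glued to the in layer of $R_n$ (call this layer 1), and the out layer of $R_n$ is glued to the in layer of $R_m$ (layer 2). Within one layer, the horizontal edges of $R_m$ and $R_n$ form two partial matchings on the $2|A|$ ket and bra sites, and their union has maximum degree two. The key observation is that this union contains no cycles. An alternating walk ket $x\to$ bra $x-m\to$ ket $x-m+n\to\cdots$ shifts the position by $n-m\neq 0$ at every step, so it cannot close. Each layer is therefore a disjoint union of paths. The endpoints of these paths are exactly the sites that are unmatched in one of the two diagrams, and these are precisely the sites that carry vertical lines.

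Layers 1 and 2 contain identical copies of this path system. At a site where $R_m$ (respectively $R_n$) has a vertical line, that line joins the layer-1 copy of the site to its layer-2 copy. So each path, together with its mirror copy and the two vertical lines at its ends, closes into exactly one loop. A site that is unmatched in both diagrams is a degenerate path of length zero; its two vertical lines close into a loop on their own (the coincident dashed lines in Fig.~\ref{fig:hs_overlap}). Counting each such site as two endpoints, there are $2|m|+2|n|$ endpoints in total, hence $|m|+|n|$ paths and $\ell_{m,n}=|m|+|n|$, as required.

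The main obstacle I expect is bookkeeping rather than anything conceptual. Concretely, I need to check that the endpoint count is exactly $2|m|+2|n|$ for all sign combinations of $m$ and $n$, including mixed signs and the case where one of them is $0$. I also need to treat doubly-unmatched sites correctly, and to make sure no path accidentally has both ends at vertical lines of the same diagram in a way that changes the count. I would handle this by checking explicitly that the drift argument is sign-independent, and by using the reflection symmetry $n\to -n$ (ket$\leftrightarrow$bra) to reduce to $m,n\ge 0$ when verifying the endpoint count.
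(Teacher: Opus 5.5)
Your proposal is correct and follows the same route as the paper. Both write $\Tr[R_mR_n]=d^{-(|A|+|m|)}d^{-(|A|+|n|)}d^{\ell_{m,n}}$ and show that each loop passes through exactly two of the $2|m|+2|n|$ vertical identity lines, so $\ell_{m,n}=|m|+|n|$; your drift argument excluding purely horizontal cycles and your pairing of each path with its mirror copy make this rigorous where the paper only asserts it, and they also cover the mixed-sign and $|n|\ge|A|$ cases that the paper's proof sets aside.
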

\begin{proof}
    For simplicity, we will only treat the case where $m$ and $n$ have the same sign, which we assume to be positive without loss of generality. See Fig. \ref{fig:hs_overlap} for an illustrative example of the arguments below.

    Each state $R_n$ consists of $2n$ vertical lines corresponding to identity operators, split half and half across different sides of the region $A$ and between ket and bra spaces. Because of this, any loop formed in the contraction of the trace $\Tr[R_m R_n]$, a $\min(|m|, |n|)$ has to pass through two of them, and only two. Thus, the number of loops is $L_{m,n} = |m|+|n|$, implying
    \begin{align}
        \Tr[R_m R_n] = d^{-(|A| + |m||)} d^{-(|A|+|n|)} d^{|m| + |n|} = d^{-2|A|}.
    \end{align}
\end{proof}

\begin{theorem}\label{thm:renyi_ose_integer}
    For $k \in \Z$, $k \geq 2$,
    $S_k(R_A) = \frac{k}{k-1} \log L - \frac{1}{k-1} \log \left(\frac{1 + d^{-2(k-1)}}{1-d^{-2(k-1)}}\right) + O(L^{2k} d^{-2|A|(k-1)})$
\end{theorem}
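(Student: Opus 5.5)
The plan is to reduce $\Tr[R_A^k]$ to a sum over words in the diagonal blocks $R_n$, compute the "all-letters-equal" words exactly, and show that every other word is exponentially suppressed in $|A|$. First, I invoke the Corollary to Lemma~\ref{lemma:off_diagonal_vanish} and write $R_A = D + E$. Here $D = \frac{1}{L}\sum_{n} R_n$ and $\norm{E}_1 = O(L^2 d^{-|A|/2})$. Since $\norm{D}_\infty \le 1$, the bound $|\Tr[(D+E)^k]-\Tr[D^k]| \le k\,\norm{E}_1 \max(\norm{D}_\infty,\norm{E}_\infty)^{k-1}$ lets me replace $R_A$ by $D$. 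This uses the standing assumption $|A| \gg \log L$, and, as in the text, I treat the replacement as exact at the order of the stated error.

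Second, I expand
\begin{equation}
\Tr[D^k] = L^{-k} \sum_{n_1,\ldots,n_k} \Tr[R_{n_1} R_{n_2} \cdots R_{n_k}].
\end{equation}
I evaluate the diagonal words $n_1=\cdots=n_k=n$ exactly from the string-diagram form of $R_n$ in Fig.~\ref{fig:translation_double_space}. For $|n|<|A|$, $R_n$ is a tensor product of $|A|-|n|$ normalized Bell-pair projectors, which are pure, and $2|n|$ maximally mixed single sites $\one/d$. Hence $\Tr[R_n^k] = d^{-2|n|(k-1)}$. For $|A| \le |n| \le L/2$, $R_n \propto \one_{A\overline A}$ and $\Tr[R_n^k] = d^{-2|A|(k-1)}$; there are at most $L$ such $n$.

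Summing the geometric series over $|n| < |A|$ with $q = d^{-2(k-1)}$ gives
\begin{equation}
\sum_{|n|<|A|} q^{|n|} = \frac{1+q}{1-q} - \frac{2q^{|A|}}{1-q}.
\end{equation}
The diagonal contribution is therefore $L^{-k}\bigl[\frac{1+q}{1-q} + O(L\, d^{-2|A|(k-1)})\bigr]$.

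Third, and this is the step I expect to be the main obstacle, I bound every non-constant word $(n_1,\ldots,n_k)$ by $O(d^{-2|A|(k-1)})$ up to $k$-dependent constants. I would generalize the loop-counting argument of Lemma~\ref{lemma:renyi-2_overlap}. Each $R_{n_j}$ carries a normalization $d^{-(|A|+|n_j|)}$, and the trace contributes $d^{\#\mathrm{loops}}$. I would argue that the identity lines, of which there are $2|n_j|$ per factor, are the only places where loops can close through a block with a different shift. Each such loop must consume at least two vertical lines or Bell strands from distinct factors. Whenever two adjacent letters differ, the Bell-pair strands of the two shifts cannot close on each other within the bulk of $A$; this is exactly the mechanism behind $\Tr[R_mR_n]=d^{-2|A|}$. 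This should give $\#\mathrm{loops} \le \sum_j |n_j| + |A|$, and hence each mixed word is at most $d^{-(k-1)|A|}$ times a bounded factor. If that first attempt only yields $d^{-(k-1)|A|}$ rather than $d^{-2(k-1)|A|}$, I would fall back on Hölder: $|\Tr[R_{n_1}\cdots R_{n_k}]| \le \prod_j \norm{R_{n_j}}_k$, refined by pairing one adjacent unequal pair through Lemma~\ref{lemma:renyi-2_overlap}. In either case the crude count of at most $L^k$ words, together with the $L^k$ prefactor produced in the final step, yields the stated error scale $O(L^{2k} d^{-2|A|(k-1)})$.

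Finally, I combine the pieces:
\begin{equation}
\Tr[R_A^k] = L^{-k}\,\frac{1+q}{1-q}\,\bigl(1+O(L^{2k}d^{-2|A|(k-1)})\bigr).
\end{equation}
Taking $\frac{1}{1-k}\log$ gives $S_k = \frac{k}{k-1}\log L - \frac{1}{k-1}\log\frac{1+d^{-2(k-1)}}{1-d^{-2(k-1)}}$ plus the error term, after linearizing the logarithm. This linearization is legitimate because the relative error is small when $|A| \gg \log L$.
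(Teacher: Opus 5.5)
\paragraph{Verdict.} Your route is the paper's: drop the off-diagonal $R_{m,n}$, expand $\Tr[R_A^k]$ into words, evaluate the constant words exactly, and bound the mixed words. Your diagonal values $\Tr[R_n^k]=d^{-2|n|(k-1)}$ and your geometric sum are correct. The gap is in your third step. The per-word bound $O(d^{-2|A|(k-1)})$ you aim for is false for $k\geq 3$, so the stated error term cannot be reached this way.

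\paragraph{Counterexample to the per-word bound.} Take the word $(0,\ldots,0,m)$ with $m\neq 0$. $R_0$ is the pure product of $|A|$ Bell projectors, so $R_0^{k-1}=R_0$. Lemma~\ref{lemma:renyi-2_overlap} then gives exactly $\Tr[R_0^{k-1}R_m]=\Tr[R_0R_m]=d^{-2|A|}$. In your loop language, for $0<|m|<|A|$ this word has $(k-2)|A|+|m|$ loops. So your inequality $\#\mathrm{loops}\le\sum_j|n_j|+|A|$ also fails once $k\geq 4$.

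\paragraph{Consequence for the error term.} There are $k(L-1)$ such words, each with weight $L^{-k}$. All words have nonnegative traces, since every operator involved has nonnegative entries in the product basis, so nothing cancels them. Even within the diagonal model, they therefore shift $\Tr[R_A^k]$ by a relative amount of order $L\,d^{-2|A|}$. For $k\geq 3$ and, say, $|A|$ a fixed fraction of $L$, this is much larger than $L^{2k}d^{-2|A|(k-1)}$. No refinement of the loop count can fix this. Your sentence ``in either case \ldots\ yields the stated error scale'' does not follow: the fallback via Lemma~\ref{lemma:renyi-2_overlap} gives at best $d^{-2|A|}$ per mixed word.

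\paragraph{Comparison with the paper.} That fallback is exactly the paper's argument. It obtains $\Tr[R_A^k]=\sum_n p_n^k\Tr[R_n^k]+O(|A|^k d^{-2|A|})$, and only in its final line writes $O(|A|^k d^{-2|A|(k-1)})$. That exponent is justified only for $k=2$. The most a word-by-word argument can support is the same leading expression, $\frac{k}{k-1}\log L-\frac{1}{k-1}\log\frac{1+d^{-2(k-1)}}{1-d^{-2(k-1)}}$, with error $O(L^{2k}d^{-2|A|})$. This still vanishes for $|A|\gg\log L$, so the leading behaviour is unaffected.

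\paragraph{Step 1.} Your replacement of $R_A$ by $D$ is not ``exact at the order of the stated error.'' With $\norm{D}_\infty=O(1/L)$ and the Corollary's $\norm{E}_1=O(L^2d^{-|A|/2})$, it costs a relative error of order $L^{3}d^{-|A|/2}$. The paper likewise just assumes the diagonal form holds exactly rather than bounding this.
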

\begin{proof}
    Given the reduced decomposition of $R_A = \sum_{n=-|A|+1}^{|A|} p_n R_n$, the trace $\Tr[R_A^k]$ will separate into $2|A|$ terms coming from the diagonal $\Tr[R_n^k]$, and $(2|A|)^k-2|A|$ terms from the off-diagonal $\Tr[R_{n_1} \cdots R_{n_k}]$, with at least one pair of distinct terms $n_i \neq n_j$. Using \cref{lemma:renyi-2_overlap}, we can bound any such off-diagonal term by
    \begin{equation}
        |\Tr[R_{n_1} \cdots R_{n_k}]| \leq \norm{\prod_{l \neq i,j} R_{n_l}}_\infty \Tr[R_{n_i} R_{n_j}] \leq d^{-2|A|}.
    \end{equation}
    Hence,
    \begin{align}
        \Tr[R_A^k] & = \sum_{n_1, \ldots, n_k} \left(\prod_{i=1}^k p_{n_i}\right)\Tr[R_{n_1}\cdots R_{n_k}] \\
        & = \sum_n p_n^k\Tr[R_n^k] +O(|A|^kd^{-2|A|})\\
        & = \sum_{n=-|A|+1}^{|A|-1} \frac{1}{L^k} d^{-2|n| (k-1)} + \left(1 - \frac{2|A|-1}{L}\right)^k d^{-2|A|(k-1)} + O(|A|^kd^{-2|A|}) \label{eq:renyi_for_analytical_continuation} \\
        & = \frac{1}{L^k} \frac{1 + d^{-2(k-1)}}{1-d^{-2(k-1)}} + O(|A|^kd^{-2|A|(k-1)}).
    \end{align}
    By taking the log of this expression for the Rényi-$k$ entropy, we arrive at
    \begin{align}
        S_k(R_A) & = -\frac{1}{k-1} \log \Tr[R_A^k] \\
        & = \frac{k}{k-1} \log L - \frac{1}{k-1} \log \left(\frac{1 + d^{-2(k-1)}}{1-d^{-2(k-1)}}\right) + O(L^{2k} d^{-2|A|(k-1)}).
    \end{align}
\end{proof}

From this result, we see that the Rényi-$k$ entropy only depends on the total system size $L$, and not on the subregion size $|A|$, up to an approximation error $O(L^k d^{-2|A|(k-1)})$ that becomes relevant only for region sizes $|A| \lessapprox \log L$.

Although the final expression in the theorem above was calculated assuming integer $k \geq 2$, we can consider any real $k \in \R$, starting from \cref{eq:renyi_for_analytical_continuation}.

\begin{figure}[t]
    \centering
\begin{tikzpicture}[scale=0.7]
    \begin{axis}[
        width=10cm, height=7cm,
        axis lines=left,
        xlabel={\Large $|A|$},
        ylabel={\Large $S_k(R_A)$},
        every axis x label/.style={at={(ticklabel* cs:1)}, anchor=west},
        every axis y label/.style={at={(ticklabel* cs:1)}, anchor=south},
        xmin=0, xmax=1.1,
        ymin=0, ymax=1.15,
        xtick={0.5, 1},
        xticklabels={\large $L/2$,\large $L$},
        ytick={0.16, 0.5, 1.0},
        yticklabels={\large $O(\log L)$,\large $L/2$,\large $L$},
        tick style={line width=1.5pt, color=black},
        clip=false,
        axis line style={-{Stealth[length=10pt]}, line width=1.5pt},
        axis on top
    ]

        \addplot[domain=0:0.5, dotted, line width=1.5pt] {2*x};
        \addplot[domain=0.5:1, dotted, line width=1.5pt] {2*(1-x)};

        \draw[dashed, line width=1.5pt] (axis cs:0, 1) -- (axis cs:0.5, 1);
        \addplot[domain=0:1, samples=200, color=green!60!black, line width=3pt] {1.001-sqrt(0.01+4*(x-1/2)^2)};

        \draw[dashed, line width=1.5pt] (axis cs:0, 0.5) -- (axis cs:0.5, 0.5);
        \addplot[domain=0:1, samples=200, color=blue, line width=3pt] {2*x*(1-x)};

        \draw[dashed, line width=1.5pt] (axis cs:0, 0.16) -- (axis cs:0.2, 0.16);
        \addplot[domain=0:1, samples=200, color=red, line width=3pt] {0.16 * (1 - abs(2*x-1)^6)};

        \node[text=green!60!black, font=\Large\bfseries] at (axis cs:0.65, 0.94) {$k < 1$};
        \node[text=blue, font=\Large\bfseries] at (axis cs:0.5, 0.58) {$k = 1$};
        \node[text=red, font=\Large\bfseries] at (axis cs:0.5, 0.24) {$k > 1$};

    \end{axis}
\end{tikzpicture}
    \caption{Rényi-$k$ operator space entanglement profiles. The dotted line indicates the maximum entanglement over all states, $S_k \leq 2\min(|A|, L - |A|)  \log d $, while the dashed lines indicate the approximate maxima achieved for each $S_k(R_A)$. For integer $k \geq 2$, it was shown in Theorem~\ref{thm:renyi_ose_integer} to be constant equal to $\frac{k}{1-k} \log L + O(1)$ if $|A| \gg \log L$. The von Neumann operator space entanglement corresponds to $k=1$, and was shown in Theorem~\ref{thm:von_Neumann_ose} to follow a volume-law $S(R_A) = 2|A| (1-\frac{2|A|}{L}) \log d + O(\log L)$. For the rest ($0 \leq k < 1$ and non-integer $k > 1$), we display results after analytical continuation from $k \in \Z$, $k \geq 2$ (Corollary \ref{corollary:renyi_ose_analytical_cont}). In particular, for $0 \leq k < 1$, we have $S_k = 2 \min(|A|, L - |A|)  \log d + O(1)$.}
    \label{fig:operator_space_entanglement_comparison}
\end{figure}

\begin{corollary}\label{corollary:renyi_ose_analytical_cont}
    By analytically continuing the results for $S_k(R_A)$ from $k \in \Z$ and $k\geq 2$ to $k \in \R_+$, and for $|A| \gg \log L$, we have
    \begin{equation}
        S_k(R_A) = \begin{cases}
            \frac{k}{k-1} \log L  , & \text{if } k > 1 \\
            2 |A|\left(1-\frac{|A|}{L}\right) \log d + 2\min\left(\frac{|A|}{L}, 1 - \frac{|A|}{L}\right)\log L, & \text{if } k=1 \\
            2 \min(|A|, L - |A|) \log d, & \text{if } 0 \leq k < 1
        \end{cases}
        + O(1),
    \end{equation}
\end{corollary}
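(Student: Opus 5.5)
The plan is to start from the exact intermediate expression \cref{eq:renyi_for_analytical_continuation}, obtained in the proof of Theorem~\ref{thm:renyi_ose_integer} for integer $k\ge 2$. We drop the off-diagonal error term, which is legitimate when $|A|\gg\log L$, and regard what remains as a function of real $k$:
\begin{equation}
F(k) \defeq \sum_{n=-|A|+1}^{|A|-1} \frac{d^{-2|n|(k-1)}}{L^k} + \Big(1-\tfrac{2|A|-1}{L}\Big)^k d^{-2|A|(k-1)} .
\end{equation}
This is a finite sum of exponentials in $k$, so it is entire in $k$ and is the natural analytic continuation of $\Tr[R_A^k]$. The spectral reading behind it is that $R_A$ is approximately block diagonal. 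The blocks are the nearly orthogonal $R_n$ for $|n|<|A|$, which are supported on $d^{2|n|}$-dimensional flat spectra with weight $1/L$, together with a maximally mixed piece of dimension $d^{2|A|}$ and weight $p_{|A|}$. We then define $S_k=\frac{1}{1-k}\log F(k)$ and evaluate it in the three regimes of $k$ separately.

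\textbf{Case $k>1$.} Every factor $d^{-2|n|(k-1)}$ decays, so the sum is dominated by small $|n|$ and gives $L^{-k}\,\frac{1+d^{-2(k-1)}}{1-d^{-2(k-1)}}$. The last term is at most $d^{-2|A|(k-1)}$, which is negligible against $L^{-k}$ once $|A|\gg\log L$. Hence $S_k=\frac{k}{k-1}\log L+O(1)$.

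\textbf{Case $0\le k<1$.} Now $d^{-2|n|(k-1)}=d^{2|n|(1-k)}$ grows, and the dominant contributions are $n=\pm(|A|-1)$ and the maximally mixed term. Together they give $F(k)=\Theta\big(d^{2|A|(1-k)}\big)\cdot\big(L^{-k}+p_{|A|}^k\big)$. For $|A|\le L/2$ the prefactor $p_{|A|}^k$ is of order one, so $S_k=2|A|\log d+O(1)$. The case $|A|>L/2$ follows from the symmetry $A\leftrightarrow A^c$ of a pure doubled state, which yields $2\min(|A|,L-|A|)\log d$.

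\textbf{Case $k=1$.} Since $F(1)=1$, we have $S_1=-F'(1)$ by L'Hôpital. Differentiating term by term gives
\begin{equation}
S_1=\sum_{|n|<|A|}\frac{1}{L}\big(\log L+2|n|\log d\big) \;-\; p_{|A|}\log p_{|A|}+2|A|\,p_{|A|}\log d .
\end{equation}
The sum contributes $\frac{2|A|}{L}\log L+\frac{2\log d}{L}|A|(|A|-1)$. The maximally mixed term contributes $2|A|\big(1-\frac{2|A|}{L}\big)\log d+O(1)$. Adding these, the coefficient of $\log d$ is $2|A|-\frac{2|A|^2}{L}+O(|A|/L)$, which gives $2|A|\big(1-\frac{|A|}{L}\big)\log d+\frac{2|A|}{L}\log L+O(1)$, and the complement symmetry again gives the $\min$.

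The main obstacle is justifying the continuation itself. Carlson's theorem requires a growth bound on $\Tr[R_A^k]$ along the real direction, and it also requires controlling the dropped off-diagonal contributions uniformly near $k\le 1$, where the Hilbert--Schmidt bound no longer applies. I would present this step as a formal continuation, supported by the approximate block-diagonal spectral picture described above, rather than as a fully rigorous argument. For $k=1$ I would point to the separate von Neumann treatment (Theorem~\ref{thm:von_Neumann_ose}) as the rigorous check.
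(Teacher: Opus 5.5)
Your proposal follows the paper's own proof essentially step for step. Both start from \cref{eq:renyi_for_analytical_continuation} with the off-diagonal error dropped, read it as a function of real $k$, and evaluate the geometric sums separately for $k>1$ and $0\le k<1$. Both obtain $k=1$ as $-\frac{d}{dk}\Tr[R_A^k]\big|_{k=1}$, and both treat the continuation itself as formal. Your $k=1$ bookkeeping, giving the coefficient $2|A|(1-|A|/L)$ of $\log d$, is correct.

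One caveat, which the paper shares through its ``discarding a constant piece'': your claim that $p_{|A|}^k$ is of order one for all $|A|\le L/2$ fails as $|A|\to L/2$. Since $p_{|A|}=1-\frac{2|A|-1}{L}=1/L$ at $|A|=L/2$, for $0<k<1$ the term $\frac{1}{1-k}\log\big(\Theta(L^{-k})+p_{|A|}^k\big)$ contributes an extra $-\frac{k}{1-k}\log L$ there. The stated $O(1)$ error therefore requires $|A|/L$ to be bounded away from $1/2$.
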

\begin{proof}
    For $k > 1$, it is easy to see that the exact same result reported integer $k$ holds. For $0 \leq k < 1$, the summation terms with large $|n|$ and the second term in the expression of \cref{eq:renyi_for_analytical_continuation} dominate, and we have (for $|A| \leq L/2$)
    \begin{equation}
        \Tr[R^k_A] = \left( \frac{1}{L^k} \frac{1 + d^{2(1-k)}}{d^{2(1-k)}-1} + \left(1 - \frac{2|A|-1}{L}\right)^k \right) d^{2|A|(1-k)} + O(L^{-k}),
    \end{equation}
    which after taking the log and discarding a constant piece, we arrive the stated scaling.

    For the $k=1$ limit, we have
    \begin{align}
        \lim_{k\to 1} S_k & = \lim_{k \to 1} - \frac{1}{k-1} \log \Tr[R^k_A] \\
        & = - \frac{d}{dk} \Tr[R^k_A] \bigg|_{k=1} \\
        & = - \sum_{n=|A|+1}^{|A|-1} \log\left(\frac{d^{-2|n|}}{L}\right) \frac{1}{L} - \log \left[ \left( 1 - \frac{2|A|-1}{L}\right)d^{-2|A|}\right] \left( 1 - \frac{2|A|-1}{L}\right) + O(|A|  d^{-2|A|}),
    \end{align}
    which after simple algebraic manipulations, reduces to the stated result. 
\end{proof}

In the next section, we confirm the scaling from the analytical continuation procedure above in the von Neumann entropy case by employing more sophisticated methods of using the approximate orthogonality of $R_n$. 

\subsection{Von Neumann operator space entanglement}\label{app-sec:von_Neumann_OSE}

We will follow a similar proof strategy as shown in the previous section, for the Rényi-$k$ operator space entanglement, with the crucial difference lying on the choice of orthogonality measure. Before, it was the Hilbert-Schmidt inner product. Here, since the von Neumann entropy is a Rényi-1-type quantity, it requires an orthogonality measure of similar caliber, which is the quantum fidelity between states: 
\begin{lemma}\label{lemma:fidelity_translation_states}
    $F(R_m, R_n) = d^{-|A| + \min(|A|,|m|,|n|, p_{m,n})} \leq d^{-|A| + \min(|A|,|m|, |n|)}$, where $p_{m,n} = \min(|A|\mod{|m-n|}, |m-n| - (|A| \mod{|m-n|}))$
\end{lemma}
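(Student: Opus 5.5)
The plan is to write each $R_n$ explicitly as a tensor product of normalized Bell pairs and maximally mixed single-site states, and then compute the fidelity factor by factor. From the string diagram in Fig.~\ref{fig:translation_double_space}, for $0<|n|<|A|$ the reduced state $R_n$ on $A\overline{A}$ consists of $|A|-|n|$ Bell pairs $\kett{(i+n,\overline{i})}$ internal to $A\overline{A}$. It also contains $|n|$ unpaired ket sites and $|n|$ unpaired bra sites, each carrying $\one/d$. For $|n|\ge |A|$ we get $R_n=\one/d^{2|A|}$.

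Since everything is a product, it is convenient to organize the $2|A|$ sites as a graph. Its vertices are the ket sites $1,\dots,|A|$ and the bra sites $\overline{1},\dots,\overline{|A|}$. Edges of color $m$ join $i+m$ to $\overline{i}$, and edges of color $n$ join $i+n$ to $\overline{i}$. Every vertex has degree at most two, so the union graph decomposes into disjoint paths (cycles cannot close, because edges only shift by $\pm(m-n)$ along an interval). Both $R_m$ and $R_n$ then factorize over these connected components $C$, giving
\[
F(R_m,R_n)=\prod_C F(R_m|_C,R_n|_C).
\]
This reduces the problem to computing the fidelity on a single alternating path.

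On a path with $\ell$ vertices, each state is a product of Bell pairs along alternating edges, padded with $\one/d$ at endpoints not covered by an edge of that color. I would compute the fidelity by writing $R_m|_C=\ket{a}\!\bra{a}\otimes \one/d^{r}$-type objects and using $F(\rho,\sigma)=\Tr\sqrt{\sqrt\rho\,\sigma\sqrt\rho}$. Because $\sqrt{R_m}$ is proportional to a projector, $\sqrt{R_m}R_n\sqrt{R_m}$ reduces to a rescaled projector. Concretely, I expect the following values, with $F$ the root fidelity as in the paper's convention. An even path that is perfectly matched by both colors gives $d^{-\ell/2+1}$, and a path with unmatched endpoints gives $d^{-(\ell-1)/2}$ or similar. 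A direct check on short paths (two edges, three edges) will fix the exponents. Summing the exponents, each component contributes $d^{-(\text{number of its edges})/2}$ up to an endpoint correction. This should yield $-|A|$ plus the number of "good" components, meaning those whose endpoints are both unpaired identities.

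The main obstacle is the combinatorial count of the components and their endpoint types, which is where the quantity $p_{m,n}$ enters. Starting from any vertex, the path proceeds in steps of $|m-n|$ along the interval $A$. The components therefore correspond to residue classes modulo $|m-n|$ of positions in $A$. Their lengths are $\lfloor |A|/|m-n|\rfloor$ or $\lceil |A|/|m-n|\rceil$, with $|A|\bmod |m-n|$ classes of the longer type. The favorable components are those whose endpoints contribute an extra factor of $d$, which happens for one type of parity. Their number is $|A|\bmod|m-n|$ or its complement $|m-n|-(|A|\bmod|m-n|)$, depending on orientation, and it is capped by the number of unpaired boundary sites, $\min(|m|,|n|)$, and by $|A|$. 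Combining these counts gives the exponent $-|A|+\min(|A|,|m|,|n|,p_{m,n})$. The cases $|m|\ge|A|$ or $|n|\ge|A|$ are handled separately: there one state is maximally mixed, and $F=\Tr\sqrt{R}/d^{|A|}$ gives the formula directly. The final inequality is immediate by dropping $p_{m,n}$ from the minimum.
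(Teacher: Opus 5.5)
Your reduction is the right framework: each $R_n$ is a product of normalized Bell pairs and $\one/d$ factors, the two pairings decompose the $2|A|$ sites into disjoint alternating paths, and root fidelity factorizes over these paths. The per-path values you anticipate are wrong, however, and the bookkeeping built on them is off. No path can be perfectly paired by both colours, since that would require the cycle you already excluded. A component with $V$ sites has $F=d^{-\lfloor V/2\rfloor}$. For even $V$, one state is a pure product $\ket{\psi}$ of $V/2$ Bell pairs and the other is $d^{-2}P$ with two free endpoints; then $\bra{\psi}P\ket{\psi}=d^{-V+2}$ (a single loop), so $F^2=d^{-V}$. For odd $V$, each state is $d^{-1}P$ with one free endpoint, $P_nP_mP_n=d^{-(V-1)}P_n$ and $\Tr P_n=d$, so $F=d^{-(V-1)/2}$. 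Summing over components gives $F(R_m,R_n)=d^{-|A|+N_{\mathrm{odd}}/2}$. Here $N_{\mathrm{odd}}$ counts components with an odd number of sites, including sites unpaired by both states, and each such component gains $d^{1/2}$, not a full factor $d$.

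The decisive gap is your counting step ``components correspond to residue classes modulo $|m-n|$''. Place ket $j$ at position $j$ and bra $\overline{i}$ at $i+m$.

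\emph{Case $mn<0$.} The claim is true here. The sites unpaired by $R_m$ and by $R_n$ lie at opposite ends of $A$ and $\overline{A}$, and each residue class is one path. A class has odd size exactly when it lies in the symmetric difference of two arcs of length $s=|A|\bmod|m-n|$ in $\mathbb{Z}_{|m-n|}$, offset by $|m|$. That symmetric difference has $2\min(s,|m-n|-s,|m|,|n|)$ elements, which is the actual origin of $p_{m,n}$; your ``capping'' heuristic does not derive it.

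\emph{Case $mn>0$.} Here the claim fails. The sites unpaired by the state with smaller $|m|$ are also unpaired by the other, so they form $2\min(|A|,|m|,|n|)$ isolated components. Every remaining site lies in a component perfectly paired by that state, hence of even size. Thus $F=d^{-|A|+\min(|A|,|m|,|n|)}$, with no $p_{m,n}$, and the claimed equality is false in this regime. Concretely, take $|A|=3$, $m=1$, $n=2$. After removing the common $\one/d$ on ket $1$ and bra $\overline{3}$, $R_1$ is the pure state $\ket{\phi}_{2\overline{1}}\ket{\phi}_{3\overline{2}}$ and $R_2=\ketbra{\phi}{\phi}_{3\overline{1}}\otimes\one_{2\overline{2}}/d^2$. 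This gives $F^2=d^{-4}$, i.e.\ $F=d^{-2}$, whereas the lemma predicts $d^{-3}$ because $p_{1,2}=0$. So no proof of the stated equality can succeed for same-sign $m,n$; the paper itself gives no derivation, only numerical checks.

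Your framework does rigorously give the inequality, which is all Theorem~\ref{thm:von_Neumann_ose} uses. An odd component cannot be perfectly paired by either state, so it contains at least one site unpaired by $R_m$ and one unpaired by $R_n$. There are only $2\min(|A|,|m|)$ and $2\min(|A|,|n|)$ such sites, hence $N_{\mathrm{odd}}/2\le\min(|A|,|m|,|n|)$.
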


The expression above can be analyzed in the same way as the ones in Lemmas \ref{lemma:off_diagonal_vanish} and \ref{lemma:renyi-2_overlap}. For that reason, we will skip its proof, but note that the expression noted above was numerically verified for a large range of parameters $m$, $n$ and $|A|$.

\begin{theorem}\label{thm:von_Neumann_ose}
    $S(R_A) = 2 |A|\left(1-\frac{|A|}{L}\right) \log d+ \frac{2|A|}{L}\log L + O(1)$
\end{theorem}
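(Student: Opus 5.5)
We reduce $R_A$ to an approximately orthogonal mixture and then use the standard entropy-of-mixture estimate. By the corollary following Lemma~\ref{lemma:off_diagonal_vanish}, together with $|A|\gg\log L$, we may take
\begin{equation}
R_A=\sum_{n=-|A|+1}^{|A|} p_n R_n,\qquad p_n=\tfrac1L\ (|n|<|A|),\quad p_{|A|}=1-\tfrac{2|A|-1}{L}.
\end{equation}
Here $R_{|A|}\propto\one_{A\bar A}$ stands for all the maximally mixed terms. Fannes' inequality controls the cost of dropping the off-diagonal error: it is at most $O(L^2 d^{-|A|/2})\cdot 2|A|\log d=o(1)$.

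Next we identify each $R_n$. From the string diagram (Fig.~\ref{fig:translation_double_space}), $R_n$ for $|n|<|A|$ is a tensor product of $|A|-|n|$ Bell pairs and $2|n|$ maximally mixed single sites. Hence its spectrum is flat and
\begin{equation}
S(R_n)=2|n|\log d,\qquad S(R_{|A|})=2|A|\log d .
\end{equation}
We then sandwich $S(R_A)$ using the two standard bounds
\begin{equation}
\sum_n p_n S(R_n)\le S\Big(\sum_n p_nR_n\Big)\le \sum_n p_nS(R_n)+H(\{p_n\}).
\end{equation}
The average term is
\begin{equation}
\sum_n p_nS(R_n)=\frac{2\log d}{L}\sum_{|n|<|A|}|n|+\Big(1-\tfrac{2|A|-1}{L}\Big)2|A|\log d .
\end{equation}
Since $\sum_{|n|<|A|}|n|=|A|(|A|-1)$, this equals $2|A|(1-\tfrac{|A|}{L})\log d+O(|A|/L)$, which is exactly the volume-law term. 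The Shannon entropy is
\begin{equation}
H(\{p_n\})=\frac{2|A|-1}{L}\log L-p_{|A|}\log p_{|A|}=\frac{2|A|}{L}\log L+O(1).
\end{equation}
So the upper bound already matches the claim, and what remains is to show the lower bound saturates the Holevo term up to $O(1)$.

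The lower bound is the main obstacle. The plan is to use a fidelity-based near-saturation of the mixture entropy: if the components are pairwise almost orthogonal in fidelity, then $S(\sum p_nR_n)\ge\sum p_nS(R_n)+H(p)-\epsilon$, with $\epsilon$ controlled by $\sum_{m\ne n}\sqrt{p_mp_n}\,F(R_m,R_n)$ or a similar expression. Concretely, I would first try a pretty-good-measurement argument. One builds the POVM $E_n=p_n R_A^{-1/2}R_nR_A^{-1/2}$ and bounds the success probability from below using Lemma~\ref{lemma:fidelity_translation_states}. The Holevo quantity is at least the accessible information, which Fano's inequality ties to the error probability, so this yields $\chi\ge H(p)-O(1)$.

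The difficulty is that Lemma~\ref{lemma:fidelity_translation_states} gives only $F\le d^{-|A|+\min(|m|,|n|)}$, which is not small when $|m|,|n|$ approach $|A|$. To handle this, I would split the labels at $|n|\le|A|-c\log L$. The terms with small $|n|$ are pairwise nearly orthogonal with total fidelity overlap $O(L^2 d^{-c\log L})$, which is small for large $c$. The remaining $O(\log L)$ labels near $|A|$, together with the maximally mixed block, carry weight $p_{|A|}+O(\log L/L)$. Merging them into a single component changes $H(p)$ by only $O(\log L\cdot\log L/L)=o(1)$, while the average-entropy term is unaffected to $O(1)$. Applying the near-orthogonality bound to this coarse-grained mixture should give $S(R_A)\ge 2|A|(1-\tfrac{|A|}{L})\log d+\tfrac{2|A|}{L}\log L-O(1)$. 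Combined with the upper bound, this proves the theorem.
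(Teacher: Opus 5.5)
Your proposal is correct and is essentially the paper's proof. The paper writes $S(R_A)=H(p)+\sum_n p_nS(R_n)-S(F|A)$ for the flagged state, then bounds $0\le S(F|A)\le H(F|A_C)$ using the pretty-good measurement, the bound $p_e\le\sum_{m\neq n}\sqrt{p_mp_n}\,F(R_m,R_n)$, Lemma~\ref{lemma:fidelity_translation_states} and Fano; this is your mixture-entropy sandwich plus accessible-information bound, restated in conditional-entropy language. Your coarse-graining near $|n|\approx|A|$ is valid but unnecessary: those labels carry weight $1/L$ and the fidelity bound decays geometrically in $|A|-\min(|m|,|n|)$, so the direct sum already gives $p_e=O(1/\sqrt{L})$ and hence $S(F|A)=O(\log L/\sqrt{L})$ (your Fannes treatment of the off-diagonal terms is a small tightening of the paper's assumption that the diagonal approximation holds exactly).
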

\begin{proof}
    Let $\tilde{R}_A$ be the flagged version of the decomposition of $R_A$ into $R_n$: $\tilde{R}_A \defeq \sum_{n=-|A|+1}^{|A|} p_n R_n \otimes \ketbra{n}{n}_F$, where $F$ stands for the flag Hilbert space. Then, 
    \begin{align}
        S(F|A)_{\tilde{R}_A} & = S(AF)_{\tilde{R}_A} - S(A)_{\tilde{R}_A} \\
        & = H(p) + \sum_{n=-|A|+1}^{|A|} p_n S(R_n) - S(R_A).
    \end{align}
    The first two terms are the approximate expression for $S(R_A)$, since
    \begin{align}
        H(p) & = -\sum_{n=-|A|+1}^{|A|-1} \frac{1}{L} \log \frac{1}{L} - \left(1 - \frac{2|A|-1}{L}\right) \log \left(1 - \frac{2|A|-1}{L}\right) \\
        & = \frac{2|A|}{L} \log L + O(1), \\
        \sum_{n=-|A|+1}^{|A|} p_n S(R_n) & = \sum_{n=-|A|+1}^{|A|-1} \frac{1}{L} (2|n| \log d) + \left(1 - \frac{2|A|-1}{L}\right) (2|A| \log d) \\
        & = 2 |A| \left( 1 - \frac{|A|}{L}\right) \log d.
    \end{align}
    Hence, we just need to bound the conditional entropy $S(F|A)_{\tilde{R}_A}$. To do that, we first convert the quantum system $A$ into a classical one $A_C$ by applying a measurement POVM $\mathcal{E}$ that attempts to distinguish the states $R_n$. Later, we will specify the measurement. Then, from the data processing inequality $S(F|A) \leq H(F|A_C)$, it suffices to bound the classical relative entropy $H(F|A_C)$. For that, we use Fano's inequality
    \begin{equation}
        H(F|A_C) \leq H(p_e) + p_e \log(2|A| - 1),
    \end{equation}
    where $p_e = P(F \neq A_C)$ is the binary probability that the measurement $\mathcal{E}$ fails to measure the correct state. If we choose $\mathcal{E}$ to be the ``pretty good measurement'', one can bound $p_e$ by a function of the fidelities $F(R_n, R_m)$ as such~\cite{barnum_reversing_2002}
    \begin{equation}
        p_e \leq \sum_{m \neq n} \sqrt{p_m p_n} F(R_m, R_n).
    \end{equation}
    Plugging the result of \cref{lemma:fidelity_translation_states} into the expression above and performing the sum, we have 
    \begin{align}
        p_e & \leq \sum_{\substack{|m|,|n| < |A| \\ m\neq n}} \frac{1}{L} d^{-|A| + \min(|m|, |n|)} + 2 \sum_{m = -|A|+1}^{|A|-1} \sqrt{\frac{1}{L}\left(1 - \frac{2|A|-1}{L} \right)} d^{-|A| + |m|} \\
         & = O(1/L) + O(1/\sqrt{L}).
    \end{align}
    Thus, $p_e \leq O(1/\sqrt{L})$, which in turn implies $S(F|A) \leq H(F|A_C) \leq O(\log L / \sqrt{L})$.
\end{proof}

It is instructive to compare the scaling behavior of the operator space entanglement of $\rho_\tau$ with the entanglement of the Motzkin chain, due to their similarities. At half-chain $|A| = L/2$, their qualitative scaling behaviour coincide for $\alpha \neq 1$, but disagree at $\alpha=1$, with the von Neumann entropy of the Motzkin chain growing slower as $\sqrt{L}$~\cite{sugino_renyi_2018, menon_symmetries_2024}.

\section{Non-linear correlators, doubled states, and SW-SSB}\label{appendix:main_swssb}

In this section, we derive certain non-linear observables of the MMIS $\rho_\tsym$, and discuss the corresponding implications to (i) the properties of the doubled state $\kett{\rho_\tsym}$ defined in the enlarged Hilbert space $\mathcal{H}^{\otimes L} \otimes \mathcal{\overline{H}}^{\otimes L}$ (see Appendix.\ref{sec:renyi_2_local}), and (ii) the strong-weak spontaneous symmetry breaking of $\rho_\tsym$ (see Appendix.\ref{appendix:swssb}).  

\subsection{Renyi-2 correlation of local operators}\label{sec:renyi_2_local}

Given a density matrix $\rho =  \sum_{\boldsymbol{\sigma}}  \rho \ket{\boldsymbol{\sigma}} \bra{\boldsymbol{\sigma}}$ acting on the $L$-qudit Hilbert space  $\mathcal{H}^{\otimes L}$, where $\ket{\boldsymbol{\sigma}}$ is a computational basis product state of qudits, we can define the (un-normalized) doubled state $\kett{\rho}$ in the enlarged Hilbert space $\mathcal{H}^{\otimes L} \otimes \mathcal{\overline{H}}^{\otimes L}$: 

\begin{equation}
    \kett{\rho} =  \sum_{\boldsymbol{\sigma}}  \rho \ket{\boldsymbol{\sigma}} \ket{\boldsymbol{\sigma}},  
\end{equation}
where $\rho$ only acts on the first copy (i.e., the Hilbert space of $\mathcal{H}^{\otimes L }$).

When choosing $\rho$ as the MMIS $\rho_\tsym = \frac{1}{\mathcal{N}} \sum_{p=0}^{L-1} T^{p} \propto P_{\tsym}$, where $\mathcal{N}$ is a normalization constant, the associated doubled state $\kett{\rho_\tsym}$ defined in the enlarged Hilbert space $\mathcal{H}^{\otimes L} \otimes \mathcal{\overline{H}}^{\otimes L}$ is: 

\begin{equation}
\kett{\rho_{\tsym}}  \propto \sum_{\boldsymbol{\sigma}} P_\tsym \ket{\boldsymbol{\sigma}} \ket{\boldsymbol{\sigma}} \propto  \sum_{\boldsymbol{\sigma}} (\one+T + ...+ T^{L-1})\ket{\boldsymbol{\sigma}} \ket{\boldsymbol{\sigma}}.
\end{equation}

In the following, we will consider the case of qubits ($d=2$), but the calculation can be generalized to any qudit dimensions as well. Also, for simplicity, we consider the case where $L$ is prime. We will derive the connected correlation function of the doubled state, defined as

\begin{equation}
R^{(2)}_{ij}  = \bbra{\rho_\tsym} Z_i \overline{Z}_i  Z_j \overline{Z}_j \kett{\rho_\tsym}  - \bbra{\rho_\tsym} Z_i \overline{Z}_i \kett{\rho_\tsym}   \bbra{\rho_\tsym} Z_j \overline{Z}_j \kett{\rho_\tsym}, 
\end{equation}
where $Z_i $ and $ \overline{Z}_i$ denote the Pauli-Z acting on $i$-th site of the $\mathcal{H}^{\otimes L}$ and $\mathcal{\overline{H}}^{\otimes L}$, respectively, and $\kett{\rho_\tsym}$ has been appropriately normalized.

We will show that there is a long-range order between the two local operators $Z_i  \overline{Z}_i$ and  $Z_j  \overline{Z}_j$: the correlation does not decay exponentially with their separation, albeit the $\frac{1}{L}$ decay in magnitude. This is already sufficient to prove the long-range entanglement of the doubled state $\kett{\rho_\tsym}$. Namely, $\kett{\rho_\tsym}$ cannot be prepared by a finite-depth local unitary circuit starting from a product state.

We also note that the above correlation function of the doubled state is exactly equal to the connected Rényi-2 correlator, defined as

\begin{equation}
R^{(2)}_{ij} = \frac{ \Tr \rho_\tsym Z_i Z_j \rho_\tsym Z_iZ_j  }{\Tr \rho_\tsym ^2  } - \frac{ \Tr \rho_\tsym Z_i \rho_\tsym Z_i   }{\Tr \rho_\tsym ^2  }  \frac{ \Tr \rho_\tsym Z_j \rho_\tsym Z_j   }{\Tr \rho_\tsym ^2  }.
\end{equation} 
Hence, we will perform the calculation based on this expression. 

As a warm-up, we first calculate  $\Tr \rho_\tsym Z_i \rho_\tsym Z_i$: 

\begin{equation}
\begin{split}
\Tr \rho_\tsym Z_i \rho_\tsym Z_i  &=  \frac{1}{\mathcal{N}^2}  \sum_{p,q =0}^{L-1} \Tr( T^{p} Z_i   T^{q} Z_i )\\
& = \frac{1}{\mathcal{N}^2}  \sum_{p,q =0}^{L-1} \Tr( T^{p}    T^{q} Z_{i-q} Z_i )\\ 
& = \frac{1}{\mathcal{N}^2}  \sum_{p,q =0}^{L-1} \Tr( T^{p+q}   Z_{i-q} Z_i )\\
\end{split}
\end{equation}
When $q+p \neq  0 ~\text{mod}~ L$, one has 
\begin{equation}
\Tr( T^{p+q}   Z_{i-q} Z_i )=  \Tr( T^{q+p} ) = 2. 
\end{equation} 
For the first equality, we have used the ergodicity of the $(q+p)$-shift (because $L$ is prime) and the cyclic property of the trace to move the pairs of Pauli-Zs to a single site for cancellation, as in Eq.\ref{eq:cyclic_trace_2} for the two-point function calculation. In the second equality, we expand the trace in the computational basis $\boldsymbol{\sigma}$, and notice that there are only two possible $\boldsymbol{\sigma}$s that can contribute: $\sigma_i=0 ~\forall ~i$ and  $\sigma_i=1 ~\forall ~i$.  

When $q+p = 0 ~\text{mod}~ L $, one has

\begin{equation}
\Tr( T^{p+q}   Z_{i-q} Z_i ) = \Tr(Z_{i-q} Z_i )= \delta_{q,0} ~2^L.
\end{equation}

Combining the results for both $p+q \neq  0$ and $p+q =0$, one finds 

\begin{equation}\label{eq:rho_zi_rho_zi}
\begin{split}
\Tr \rho_\tsym Z_i \rho_\tsym Z_i= \frac{1}{\mathcal{N}^2} \left[   2^L  + 2L(L-1)      \right] =  \frac{1}{\mathcal{N}^2} \left[   2^L  + O(L^2)      \right].
\end{split}
\end{equation}

On the other hand,
\begin{equation}\label{eq:rho_square}
\begin{split}
\Tr \rho_\tsym^2  &= \frac{1}{\mathcal{N}^2} \sum_{p,q =0}^{L-1} \Tr( T^{p+q}   )  \\
&= \frac{1}{\mathcal{N}^2}  \left[ \sum_{p+ q =0}^{L-1} \Tr( T^{p+q}   ) +\sum_{p+ q \neq 0}^{L-1} \Tr( T^{p+q}   )   \right]\\
&= \frac{1}{\mathcal{N}^2}    \left[ L 2^L + 2L(L-1 )  \right]\\
&=  \frac{1}{\mathcal{N}^2}    \left[ L 2^L + O(L^2) \right]
\end{split}
\end{equation}

By combining Eq.\ref{eq:rho_zi_rho_zi} and Eq.\ref{eq:rho_square}, one finds 

\begin{equation}
\frac{ \Tr \rho_\tsym Z_i \rho_\tsym Z_i   }{\Tr \rho_\tsym ^2  } = \frac{1}{L} + O(2^{-L} L). 
\end{equation}

The quantity $\frac{ \Tr \rho_\tsym Z_i Z_j \rho_\tsym Z_iZ_j  }{\Tr \rho_\tsym ^2  }$ can be calculated using a similar strategy:

\begin{equation}
    \begin{split}
\Tr \rho_\tsym Z_i Z_j \rho_\tsym Z_iZ_j &   =  \frac{1}{\mathcal{N}^2}  \sum_{p,q =0}^{L-1} \Tr( T^{p} Z_iZ_j    T^{q} Z_iZ_j  )\\\\
& = \frac{1}{\mathcal{N}^2}  \sum_{p,q =0}^{L-1} \Tr( T^{p+q }   Z_{i-q }Z_{j-q}  Z_iZ_j  )
    \end{split}
\end{equation}
For the summation with $p+q \neq 0$, one again finds 

\begin{equation}
     \sum_{p + q  \neq 0} \Tr( T^{p+q}  Z_{i-q }Z_{j-q}  Z_iZ_j  ) =    \sum_{p + q  \neq 0} \Tr( T^{p+q}  ) =  2 L (L-1). 
\end{equation}

For the summation with $p+q = 0$, one finds 

\begin{equation}
     \sum_{p+ q =0} \Tr( T^{p+q }   Z_{i-q }Z_{j-q}  Z_iZ_j  )  =    \sum_{p+ q =0} \Tr(Z_{i-q }Z_{j-q}  Z_iZ_j  )
\end{equation}
For $i\neq j$, only the term corresponding to $p=q=0$ can give a non-vanishing value, and therefore, 
\begin{equation}
     \sum_{p+ q =0} \Tr( T^{p+q }   Z_{i-q }Z_{j-q}  Z_iZ_j  ) = 2^L
\end{equation}

Combining the results for both $p+q \neq  0$ and $p+q =0$, one finds

\begin{equation}\label{eq:rho_ZZ_rho_ZZ}
    \begin{split}
\Tr \rho_\tsym Z_i Z_j \rho_\tsym Z_iZ_j & = \frac{1}{\mathcal{N}^2} \left[ 2^L +  2L (L-1) \right].
    \end{split}
\end{equation}

With the results for both $\Tr \rho_\tsym^2 $ and $\Tr \rho_\tsym Z_i Z_j \rho_\tsym Z_iZ_j$ from Eq.\ref{eq:rho_square} and Eq.\ref{eq:rho_ZZ_rho_ZZ}, one finds 
\begin{equation}
\frac{ \Tr \rho_\tsym Z_i Z_j \rho_\tsym Z_iZ_j  }{\Tr \rho_\tsym ^2  }  =   \frac{1}{L} +  O(2^{-L} L). 
\end{equation}

Finally, the connected Rényi-2 correlator is 

\begin{equation}
R^{(2)}_{ij} =  \frac{1}{L} -  \frac{1}{L^2} + \cdots, 
\end{equation}
where $\cdots$ denotes the contribution that vanishes exponentially in $L$. In the leading order, $R^{(2)}_{ij}$ scales as $\frac{1}{L}$.

\subsection{SWSSB of translation}\label{appendix:swssb}
Here we will show that the MMIS $\rho_\tsym$ exhibits a strong-to-weak spontaneous symmetry breaking (SWSSB) of the lattice translation. To this end, below we will first explain the physical mechanism of SWSSB, and then discuss the diagnostic. 

Heuristically, SWSSB refers to a phenomenon where a strongly symmetric mixed state exhibits no long-range order, but develops a long-range order in the pure states that form an ensemble of the mixed state \cite{lee_2023_swssb,you_2024_swssb,lessa_2025_swssb}. Correspondingly, the probe of such order typically relies on certain order parameters non-linear in the density matrix. 

One simple way to think about SWSSB is to consider the double-state representation of the mixed state $\rho_\tsym$:

\begin{equation}
\kett{\rho_{\tsym}}  \propto \sum_{\boldsymbol{\sigma}} P_\tsym \ket{\boldsymbol{\sigma}} \ket{\boldsymbol{\sigma}} \propto  \sum_{\boldsymbol{\sigma}} (\one+T + ...+ T^{L-1})\ket{\boldsymbol{\sigma}} \ket{\boldsymbol{\sigma}}.
\end{equation}
defined in the doubled Hilbert space $\mathcal{H}^{\otimes L} \otimes \mathcal{\overline{H}}^{\otimes L}$. Under such representation, the original strong symmetry of $\rho_\tsym$, i.e. $T \rho_\tsym = \rho_\tsym  =  \rho_\tsym T^{\dagger}$, translates to $T\times \overline{T}$ symmetry of the doubled state $\kett{ \rho_\tsym  }$, namely, $\kett{ \rho_\tsym  } =  T  \kett{ \rho_\tsym  }  = \overline{T} \kett{ \rho_\tsym  }$, where $T$ and $\overline{T}$ act as the translations in $\mathcal{H}^{\otimes L}$ and $\mathcal{\overline{H}}^{\otimes L}$, respectively. 

The strong symmetry $T\times \overline{T}$ can spontaneously be broken down to the diagonal symmetry generated by $T\overline{T}$, which corresponds to the weak symmetry of $\rho_{\tsym}$ and manifests as $T\rho_\tsym T^{\dagger} = \rho$. If this occurs, we say $\rho_\tsym$ exhibits the strong-to-weak spontaneous symmetry breaking (SWSSB) of the translation symmetry. To probe the symmetry-breaking pattern, as in the conventional SSB, we investigate whether there is a corresponding operator charged under the symmetry that exhibits a long-range order or correlation. To probe the SWSSB in the doubled-state representation, we investigate the correlation of operators that are charged under the individual translation $T$ and $\overline{T}$, but are neutral under $T\overline{T}$. One such operator is $O_k \overline{O}_{-k}$ defined as

\begin{equation}
    \begin{split}
    &O_k = \frac{1}{L } \sum_{x=0}^{L-1} e^{ikx }Z_x,\\
    &\overline{O}_k = \frac{1}{L } \sum_{x=0}^{L-1} e^{ikx }\overline{Z}_x,
    \end{split}
\end{equation}
where $x=\{0,1,2,..., L-1 \}$ labels the real-space position, and $k\in \{ \frac{2\pi n }{L } | n=0,1,2, ... , L-1   \}$ labels the momenta. Since $TO_kT^{-1}  =  e^{-ik } O_k$ and $\overline{T}\overline{O}_{-k}\overline{T}^{-1}  =  e^{ik } \overline{O}_{-k}$, $O_k$ and $\overline{O}_{-k}$ are individually charged under $T$ and $\overline{T}$ as long as $k\neq 0 ~\text{mod} ~ 2\pi$, but are neutral under $T\overline{T}$ due to the cancellation of the phase. 

With this operator, one can then diagnose SWSSB using the so-called Rényi-2 correlator, which is also equal to the Rényi-1 correlator since the mixed state $\rho_\tsym$ is a projector. As we will show below in Appendix. \ref{sec:conventional_renyi_2}, such a correlation exhibits $\frac{1}{L^2}$ decays in the leading order, which does not conclusively identify SWSSB. In Appendix.\ref{sec:new_renyi_2}, we introduce the `variance normalized' Rényi-2 correlator, which is more appropriate due to the non-unitary nature of the charged operators. We analytically show that this newly defined quantity is a non-zero constant of the system size $L$, thereby indicating the SWSSB of translation. 

\subsubsection{Rényi-2 correlations}\label{sec:conventional_renyi_2}
With the charged operator defined above, we can explore SWSSB by studying the following correlation function defined for the doubled state:

\begin{equation}
\begin{split}
R^{(2)}_{k} & =  \frac{\bbra{\rho_\tsym} O_k \overline{O}_{-k}  O_{-k} \overline{O}_{k}     \kett{\rho_\tsym} }{\bbrakett{\rho_\tsym| \rho_\tsym  } }\\
\end{split}
\end{equation}
where we have considered $k\neq 0 ~\text{mod} ~ 2\pi$. In terms of the physical mixed state $\rho_\tsym$, the above can also be expressed in terms of 

\begin{equation}
R^{(2)}_{k}  = \frac{ \Tr \rho_\tsym O_k O_{-k} \rho_\tsym  O_k O_{-k}}{  \Tr \rho_\tsym^2 },  
\end{equation}
which is also dubbed the Rényi-2 correlator.

We begin by calculating the numerator:

\begin{equation} \label{eq:okok_numerator}
    \begin{split}
    \Tr \rho O_k O_{-k} \rho  O_k O_{-k}  &= \frac{1}{\mathcal{N}^2} \frac{1}{L^4} \sum_{x,y,a,b}  \sum_{p,q}  e^{ik(x-y) + i k (a-b) }   \Tr \left( T^p Z_x Z_y T^q Z_a Z_b \right) \\
    &= \frac{1}{\mathcal{N}^2} \frac{1}{L^4} \sum_{x,y,a,b}  \sum_{p,q}  e^{ik(x-y) + i k (a-b) }   \Tr \left( T^{p+q}  Z_{x-q} Z_{y-q}  Z_a Z_b \right) \\
    &= \frac{1}{\mathcal{N}^2} \frac{1}{L^4} \sum_{x,y,a,b}  \sum_{p,q}  e^{ik(x-y) + i k (a-b) }   \Tr \left( T^{p+q}  Z_{x} Z_{y}  Z_a Z_b \right) \\
        &= \frac{1}{\mathcal{N}^2} \frac{1}{L^3} \sum_{x,y,a,b}  \sum_{p}  e^{ik(x-y) + i k (a-b) }   \Tr \left( T^{p}  Z_{x} Z_{y}  Z_a Z_b \right) 
     \end{split}
\end{equation}
where $x,y,a,b$ denote the position labels arising from expanding $O_k, O_{-k}$, and $p,q$ denote the position labels arising from expanding $\rho_\tsym$. In the last step, we have redefined the variable $p \to p-q$, and summed over $q$. To proceed, one can divide the summation $\sum_{p}$ into the cases where $p \neq 0$ and $p=0$.

When $p\neq 0$, one can again use the ergodicity of the $p$-shift and the cyclicity of the trace to find 

\begin{equation}\label{eq:non_zero_p}
\Tr \left( T^{p}  Z_{x} Z_{y}  Z_a Z_b \right) = \Tr ( T^{p}  ) =2,
\end{equation}
as in the calculation discussed in Appendix.\ref{sec:renyi_2_local}. 

On the other hand, for $p=0$, one has 

\begin{equation}\label{eq:delta_four_points}
\Tr \left( T^{p}  Z_{x} Z_{y}  Z_a Z_b \right) =  \Tr \left(   Z_{x} Z_{y}  Z_a Z_b \right) =  2^L \left[\delta_{x,y }  \delta_{a,b}  + \delta_{x,a }  \delta_{y,b} +  \delta_{x,b }  \delta_{y,a} -2 \delta_{x,y }  \delta_{x,a}   \delta_{x,b} \right]. 
\end{equation}
The above follows from the fact that the trace can be non-vanishing (specifically, $2^L$) only when all Pauli-Zs vanish; the first three terms come from different pair-wise pairing of Pauli-Zs, and the last term compensates the over-counting when all indices are the same. It follows that 

\begin{equation}
\begin{split}
    \sum_{x,y,a,b}   e^{ik(x-y) + i k (a-b) }   \Tr \left( Z_{x} Z_{y}  Z_a Z_b \right) &= 2^L \left[ L^2 +  |\sum_x  e^{2ik x}|^2  + L^2 -2L  \right]\\
& =  2^L \left[ 2L^2 +  L^2 (\delta_{k,0} +\delta_{k,\pi})     -2L  \right]\\
\end{split}
\end{equation}

Therefore, in the leading order, 

\begin{equation}
   \Tr \rho O_k O_{-k} \rho  O_k O_{-k} = \frac{1}{\mathcal{N}^2} 2^L  ( \frac{2}{L} +  \frac{1}{L} (\delta_{k,0} +\delta_{k,\pi})  + O(\frac{1}{L^2})). 
\end{equation}
We note that in the above equation, the exponentially small correction from $p\neq0$ (Eq.\ref{eq:non_zero_p}) has been included in $O(\frac{1}{L^2}   )$ as well.

On the other hand, Eq.\ref{eq:rho_square} shows 
\begin{equation}
\begin{split}
\Tr \rho^2 &=  \frac{1}{\mathcal{N}^2}    \left[ L 2^L + O(L^2) \right]. 
\end{split}
\end{equation}

It follows that the Rényi-2 correlator for $k\neq 0$ is

\begin{equation}
\boxed{ R^{(2)}_{k}=\frac{1}{L^2}\left( 2+  \delta_{k,\pi}   \right)  + O( \frac{1}{L^3})   }
\end{equation} 
Therefore, $R^{(2)}_{k} $ exhibits $\frac{1}{L^2}$ scaling in the leading order. As a side note, $\delta_{k,\pi}=0$ since $k=\pi$ is never allowed for a finite prime $L>2$.

\subsubsection{Variance-normalized  Rényi-2 correlations}\label{sec:new_renyi_2}
To motivate the variance-normalized Rényi-2 correlator, we consider an alternative aspect of SSB from the response of inserting charged operators. In particular, we define $\tilde{R}_k^{(2)}$ as the overlap between the normalized doubled state $\frac{\kett{\rho_\tsym}}{  \sqrt{  \bbrakett{   \rho_\tsym |  \rho_\tsym   }  } }$ and the doubled state 
subject to the application of the non-unitary operator $O_k \overline{O}_{-k}  O_{-k} \overline{O}_{k}$, namely, $\frac{O_k \overline{O}_{-k}  O_{-k} \overline{O}_{k}\kett{ \rho_\tsym}}{  \sqrt{  \bbrakett{   \rho_\tsym | (O_k \overline{O}_{-k}  O_{-k} \overline{O}_{k})^2| \rho_\tsym   }  } }$. Specifically, 

\begin{equation}
\tilde{R}_k^{(2)} =  \frac{  \bbra{\rho_\tsym}  O_k \overline{O}_{-k}  O_{-k} \overline{O}_{k}\kett{ \rho_\tsym}   }{  \sqrt{  \bbrakett{   \rho_\tsym |  \rho_\tsym   }     \bbrakett{   \rho_\tsym | (O_k \overline{O}_{-k}  O_{-k} \overline{O}_{k})^2| \rho_\tsym   }  }     }.
\end{equation}

In terms of the original density matrix $\rho_\tsym$, $\tilde{R}_k^{(2)}$ can be expressed as 

\begin{equation}
    \tilde{R}_k^{(2)} =  \frac{  \Tr [   \rho_\tsym O_{k}  O_{-k}  \rho_\tsym O_{k}  O_{-k}     ]  }{ \sqrt{\Tr  [\rho_\tsym^2]  \Tr [ ( O_k O_{-k}   \rho_\tsym  O_k O_{-k}  )^2   ]  }},  
\end{equation}
Note that if the charged operator is unitary, the denominator will be reduced to $  \Tr [\rho_\tsym^2]$, and  $\tilde{R}_k^{(2)}$ becomes the conventional Rényi-2 correlator.

Compared to Appendix. \ref{sec:conventional_renyi_2}, the new quantity we need to consider is $\Tr [ ( O_k O_{-k}   \rho_\tsym  O_k O_{-k}  )^2   ]$, which can be calculated as follows:

\begin{equation}
\begin{split}
&\Tr [ ( O_k O_{-k}   \rho_\tsym  O_k O_{-k}  )^2   ]\\
&=  \Tr [ O_k O_{-k}   \rho_\tsym  O_k O_{-k}  O_k O_{-k}   \rho_\tsym  O_k O_{-k}     ]\\  
&=\frac{1}{L^4\mathcal{N}^2}  \sum_{  \{ x_i, y_i | i=1,2,3,4  \} } \sum_{p,q}  e^{ik(x_1-x_2 +x_3 -x_4   )   }  e^{ik(y_1-y_2 +y_3 -y_4   )   } \Tr [  T^p Z_{x_1} Z_{x_2}   Z_{x_3}   Z_{x_4} T^q Z_{y_1} Z_{y_2}   Z_{y_3}   Z_{y_4}      ]  \\
& = \frac{L}{L^4\mathcal{N}^2}  \sum_{  \{ x_i, y_i | i=1,2,3,4  \} }  \sum_{p}  e^{ik(x_1-x_2 +x_3 -x_4   )   }  e^{ik(y_1-y_2 +y_3 -y_4   )   } \Tr [  T^p Z_{x_1} Z_{x_2}   Z_{x_3}   Z_{x_4}  Z_{y_1} Z_{y_2}   Z_{y_3}   Z_{y_4}    ].  
\end{split}
\end{equation}

For prime $L$, as we have seen in Appendix.\ref{sec:conventional_renyi_2}, the contribution from $p\neq  0$ will be exponentially small compared to the contribution from $p=0$. Hence below we will only focus on the $p=0$ contribution. In this case, the central quantity we need to calculate is $\Tr [ Z_{x_1} Z_{x_2}   Z_{x_3}   Z_{x_4}  Z_{y_1} Z_{y_2}   Z_{y_3}   Z_{y_4} ] $. In order to have a non-vanishing trace (specifically, $2^L$), the 8 Pauli-Zs must be an identity, which can arise from various pair-wise pairing patterns of Pauli-Zs. For instance, one can have the following pairing pattern:

\begin{equation}
\contraction{}{Z_{x_1}}{}{Z_{x_2}}
\contraction{Z_{x_1}Z_{x_2} }{Z_{x_3}}{}{Z_{x_4}}
\contraction{Z_{x_1}Z_{x_2}Z_{x_3}Z_{x_4} }{Z_{y_1}}{}{Z_{y_2}}
\contraction{Z_{x_1}Z_{x_2}Z_{x_3}Z_{x_4}Z_{y_1}Z_{y_2} }{Z_{y_3}}{}{Z_{y_4}}
 Z_{x_1} Z_{x_2}   Z_{x_3}   Z_{x_4}  Z_{y_1} Z_{y_2}   Z_{y_3}   Z_{y_4} =  \delta_{x_1,  x_2 }\delta_{x_3,  x_4 } \delta_{y_1,  y_2 }\delta_{y_3,  y_4 }. 
\end{equation}

As such, 

\begin{equation}
    \Tr [ Z_{x_1} Z_{x_2}   Z_{x_3}   Z_{x_4}  Z_{y_1} Z_{y_2}   Z_{y_3}   Z_{y_4} ] = 2^L \left[  \delta_{x_1,  x_2 }\delta_{x_3,  x_4 } \delta_{y_1,  y_2 }\delta_{y_3,  y_4 }+  \cdots \right],
\end{equation}
where $\cdots$ denotes all the other delta functions associated with various pairing patterns with $+$ sign, and additional terms with $-$ sign that compensate for the over-counting when certain indices are the same; see the last term of Eq.\ref{eq:delta_four_points} for example. In particular, with the delta function condition in the first term, the phase $ e^{ik(x_1-x_2 +x_3 -x_4   )   }  e^{ik(y_1-y_2 +y_3 -y_4   )   } =1$, so 

\begin{equation}
\Tr [ ( O_k O_{-k}   \rho_\tsym  O_k O_{-k}  )^2   ] = \frac{L}{L^4\mathcal{N}^2 } 2^L ( L^5+  \cdots)  = \frac{1}{\mathcal{N}^2 } 2^L  \Theta(L^2),
\end{equation}

Combining with the result in Appendix.\ref{sec:conventional_renyi_2}, we find 

\begin{equation}
\tilde{R}_k^{(2)} =  \frac{  \Tr [   \rho_\tsym O_{k}  O_{-k}  \rho_\tsym O_{k}  O_{-k}     ]  }{ \sqrt{\Tr  [\rho_\tsym^2]  \Tr [ ( O_k O_{-k}   \rho_\tsym  O_k O_{-k}  )^2   ]  }} = \Theta(1).
\end{equation}

\end{document}